\gdef\fps@figure{!htbp}}
\let\realbfseries=\bfseries
\def\bfseries{\realbfseries\boldmath}
\newtheorem{theorem}{Theorem}[section]
\newtheorem{lemma}[theorem]{Lemma}
\newtheorem{corollary}[theorem]{Corollary}
\newtheorem{problem}[theorem]{Problem}
\newtheorem{definition}[theorem]{Definition}
 \gdef\xxxmark{%
   \expandafter\ifx\csname @mpargs\endcsname\relax 
     \expandafter\ifx\csname @captype\endcsname\relax 
       \marginpar{xxx}
     \else
       xxx 
     \fi
   \else
     xxx 
   \fi}
 \gdef\xxx{\@ifnextchar[\xxx@lab\xxx@nolab}
 \long\gdef\xxx@lab[#1]#2{\textbf{[\xxxmark #2 ---{\sc #1}]}}
 \long\gdef\xxx@nolab#1{\textbf{[\xxxmark #1]}}
 \long\gdef\xxx@lab[#1]#2{}\long\gdef\xxx@nolab#1{}%
\def\GrabProofArgument[#1]{ #1: \egroup\ignorespaces}
\def\proof{\noindent\textbf\bgroup Proof%
           \@ifnextchar[{\GrabProofArgument}{: \egroup\ignorespaces}}
\newenvironment{proofsketch}{\begin{proof}[sketch]}{\end{proof}}
\newif\ifabstract
\newif\iffull
\newcounter{section-preserve}
\newcounter{theorem-preserve}
\newcommand{\blank}[1]{}
\newtoks\magicAppendix
\newtoks\magictoks
\newif\iflater
\long\def\later#1{\magictoks={#1}%
  \edef\magictodo{\noexpand\magicAppendix={\the\magicAppendix \par
    \the\magictoks%
  }}
  \magictodo}
\long\def\both#1{\magictoks={#1}%
  \edef\magictodo{\noexpand\magicAppendix={\the\magicAppendix \par
    \noexpand\setcounter{theorem-preserve}{\noexpand\arabic{theorem}}%
    \noexpand\setcounter{theorem}{\arabic{theorem}}%
    \noexpand\setcounter{section-preserve}{\noexpand\arabic{section}}%
    \noexpand\setcounter{section}{\arabic{section}}%
    \noexpand\let\noexpand\oldsection=\noexpand\thesection
    \noexpand\def\noexpand\thesection{\thesection}
    \noexpand\let\noexpand\oldlabel=\noexpand\label
    \noexpand\let\noexpand\label=\noexpand\blank
    \the\magictoks%
    \noexpand\setcounter{theorem}{\noexpand\arabic{theorem-preserve}}%
    \noexpand\setcounter{section}{\noexpand\arabic{section-preserve}}%
    \noexpand\let\noexpand\thesection=\noexpand\oldsection
    \noexpand\let\noexpand\label=\noexpand\oldlabel
  }}
  \magictodo
  \the\magictoks}
\def\magicappendix{\latertrue \the\magicAppendix}
\def\abstractlater#1{\ifabstract\later{#1}\fi}
  \long\def\both#1{#1}
  \let\later=\both
  \def\magicappendix{}
\newcommand{\ccNP}{\textrm{\textsc{NP}}}
\newcommand{\ccPSPACE}{\textrm{\textsc{PSPACE}}}
\newcommand{\pushfight}{Push~Fight}
\definecolor{jeffrey-background}{HTML}{3C3F41}
\definecolor{jeffrey-foreground}{HTML}{CCCCCC}
\title{Computational Complexity of Generalized \pushfight{}}
\author{
Jeffrey Bosboom\thanks{MIT Computer Science and Artificial Intelligence
  Laboratory, 32 Vassar Street, Cambridge, MA 02139, USA,
  \protect\url{{jbosboom,edemaine}@mit.edu}, \protect\url{mrudoy@gmail.com}}\and
Erik D. Demaine\footnotemark[1]\and
Mikhail Rudoy\footnotemark[1]
  \thanks{Now at Google Inc.}
} 
\date{}
\begin{document}

\maketitle

\begin{abstract}
  We analyze the computational complexity of optimally playing the
  two-player board game Push Fight, generalized to an arbitrary board
  and number of pieces.  We prove that the game is PSPACE-hard to decide
  who will win from a given position, even for simple (almost rectangular)
  hole-free boards.
  We also analyze the \emph{mate-in-1} problem: can the player
  win in a single turn?  One turn in Push Fight consists of up to two
  ``moves'' followed by a mandatory ``push''.  With these rules, or
  generalizing the number of allowed moves to any constant, we show mate-in-1
  can be solved in polynomial time.  If, however, the number of moves per turn
  is part of the input, the problem becomes NP-complete.  On the other hand,
  without any limit on the number of moves per turn, the problem becomes
  polynomially solvable again.
\end{abstract}

\section{Introduction}


\begin{wrapfigure}{r}{2.5in}
    \centering
    \vspace*{-8ex}
    \includegraphics[width=\linewidth]{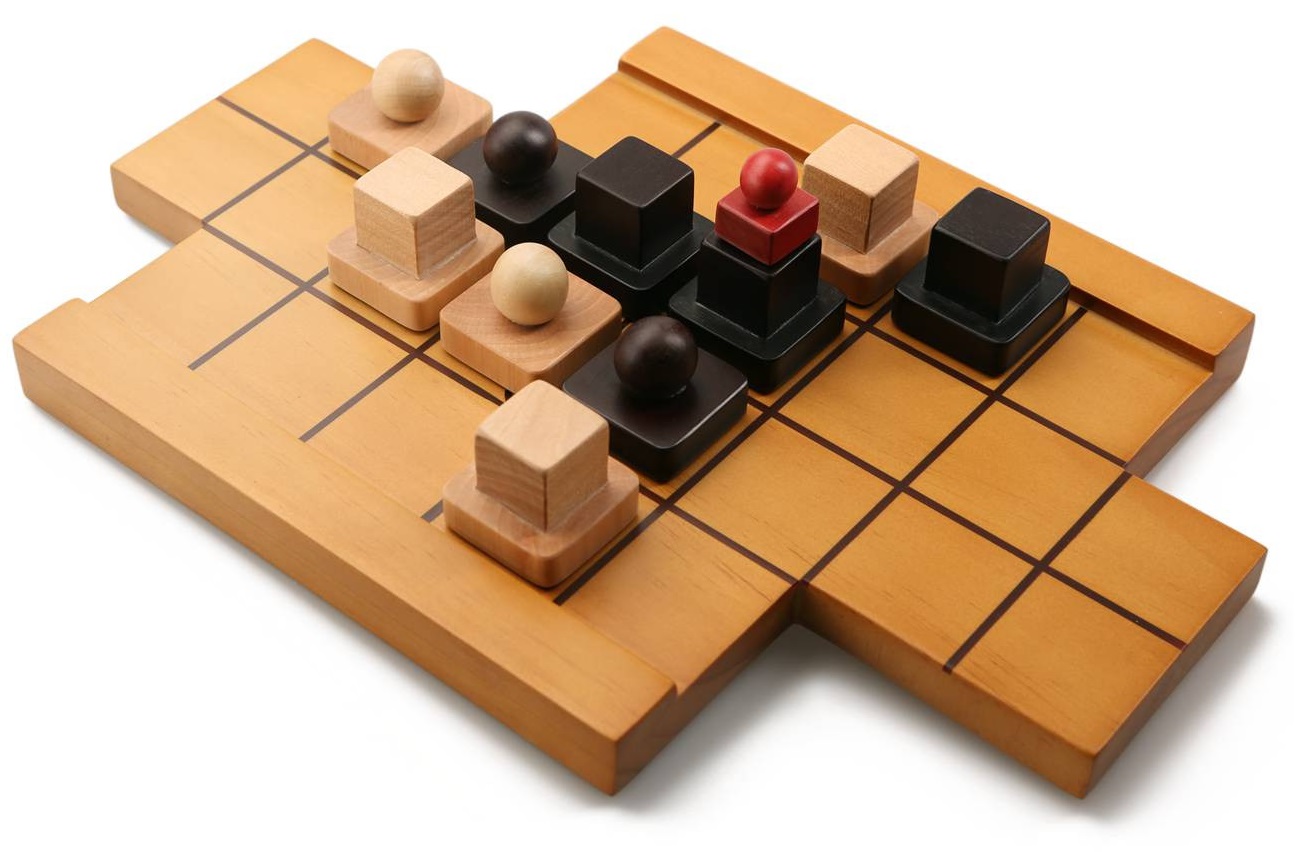}    

    \vspace*{-2ex}
    \caption{A \pushfight{} game in progress. Photo by Brettco, Inc., used with permission.}
    \vspace*{-2ex}
    \label{figure:game_board_image}
\end{wrapfigure}

\pushfight{}~\cite{pushfight} is a two-player board game,
invented by Brett Picotte around 1990, popularized by Penny Arcade in
2012 \cite{penny-arcade-1}, and briefly published by Penny Arcade in 2015
\cite{penny-arcade-2}.
Players take turns moving and pushing pieces on a square grid until a piece
gets pushed off the board or a player is unable to push on their move.
Figure~\ref{figure:game_board_image} shows a \pushfight{} game in progress,
and Section~\ref{section:rules} details the rules.


In this paper, we study the computational complexity of optimal play in
\pushfight{}, generalized to an arbitrary board and number of pieces, from
two perspectives:
\begin{enumerate}
\item \textbf{Who wins?} The typical complexity-of-games problem is to
  determine which player wins from a given game configuration.
\item \textbf{Mate-in-1}: Can the current player win \emph{in a single turn}?
\end{enumerate}

\begin{table}
  \centering
  \begin{tabular}{|l|l|l|}
    \cline{2-3}
    \multicolumn{1}{l|}{}
                       & \multicolumn{2}{|l|}{Computational complexity of\dots}
    \\ \hline
    \bf Moves per turn & \bf Mate-in-1& \bf Who wins?
    \\ \hline
    $\leq 2$           & P            & PSPACE-hard, in EXPTIME
    \\
    $\leq c$ constant  & P            & open
    \\
    $\leq k$ input     & NP-complete  & open
    \\
    unlimited          & P            & open
    \\ \hline
  \end{tabular}
  \caption{Summary of our results.}
  \label{results}
\end{table}

Table~\ref{results} summarizes our results.

Generalized \pushfight{} is a two-player game played on a polynomially bounded
board for a potentially exponential number of moves, so we conjecture
the ``who wins?''\ decision problem to be EXPTIME-complete,
as with Checkers~\cite{checkers} and Chess~\cite{chess}.
(Certainly the problem is in EXPTIME, by building the game tree.)
In Section~\ref{section:pspace},
we prove that the problem is at least PSPACE-hard,
using a proof patterned after the NP-hardness proof of Push-$*$ \cite{push}.
Our proof uses a simple, nearly rectangular board, in the spirit of the
original game; in particular, the board we use is hole-free and $x$-monotone
(see Figure~\ref{figure:pspace_reduction_example}).
It remains open whether \pushfight{} is in PSPACE, EXPTIME-hard,
or somewhere in between.

Our mate-in-1 results are perhaps most intriguing, showing a wide variability
according to whether and how we generalize the ``up to two moves per turn''
rule in \pushfight{}.  If we leave the rule as is, or generalize to
``up to $c$ moves per turn'' where $c$ is a fixed constant (part of the
problem definition), then we show that the mate-in-1 problem is in P,
i.e., can be solved in polynomial time.  However, if we generalize the
rule to ``up to $k$ moves per turn'' where $k$ is part of the input,
then we show that the mate-in-1 problem becomes NP-complete.
On the other hand, if we remove the limit on the number of moves per turn,
then we show that the mate-in-1 problem is in P again.
Section~\ref{section:mate_in_one} proves these results.

The mate-in-1 problem has been studied previously for other board games.
The earliest result is that mate-in-1 Checkers is in P,
even though a single turn can involve a long sequence of jumps
\cite{checkers-matein1}.
On the other hand, Phutball is a board game also featuring a sequence
of jumps in each turn, yet its mate-in-1 problem is NP-complete
\cite{Demaine-Demaine-Eppstein-2002}.

\xxx{For a general audience (i.e., the \pushfight{} designer), we may want to explain why generalization is necessary.}

\section{Rules}
\label{section:rules}

The original \pushfight{} board is an oddly shaped square grid containing $26$ squares; see Figure~\ref{figure:original_board}. Part of the boundary of this board has \emph{side rails} which prevent pieces from being pushed off across those edges. We generalize \pushfight{} by considering arbitrary polyomino boards, with each boundary edge possibly having a side rail.

\ifabstract
\begin{figure}
  \centering
  \begin{minipage}{0.46\textwidth}
    \centering
    \includegraphics[scale=.3]{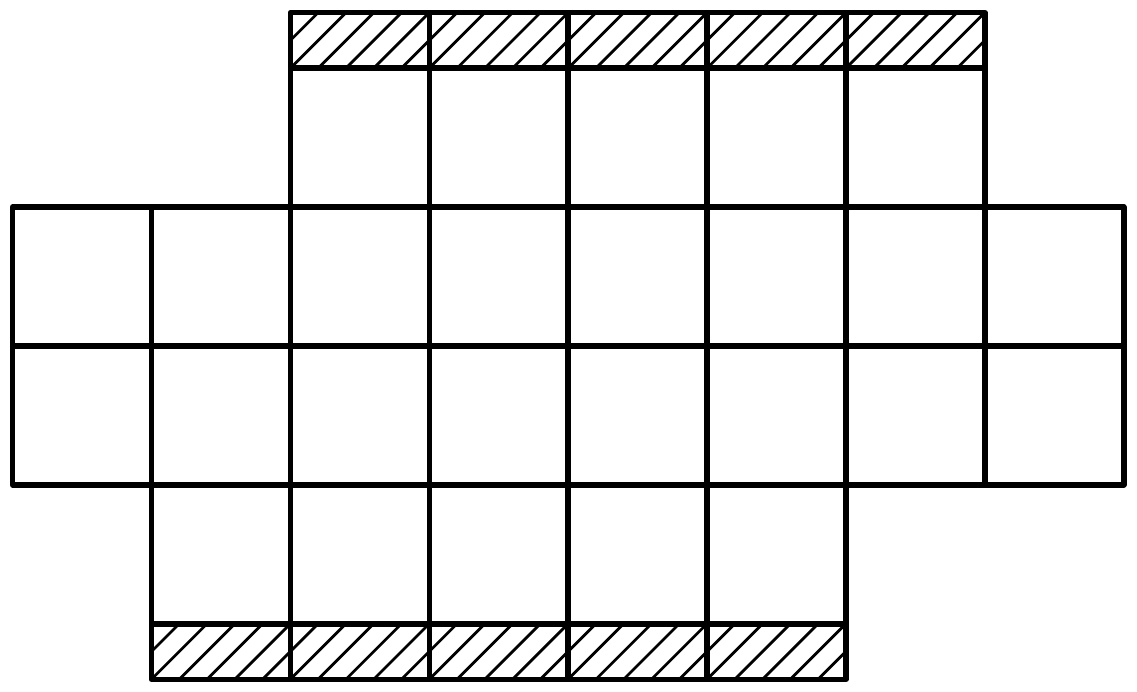}    
    \vspace*{-1ex}
    \caption{Original \pushfight{} board. Shaded regions represent side rails.}
    \label{figure:original_board}

    \medskip

    \centering
    \includegraphics[scale=.3]{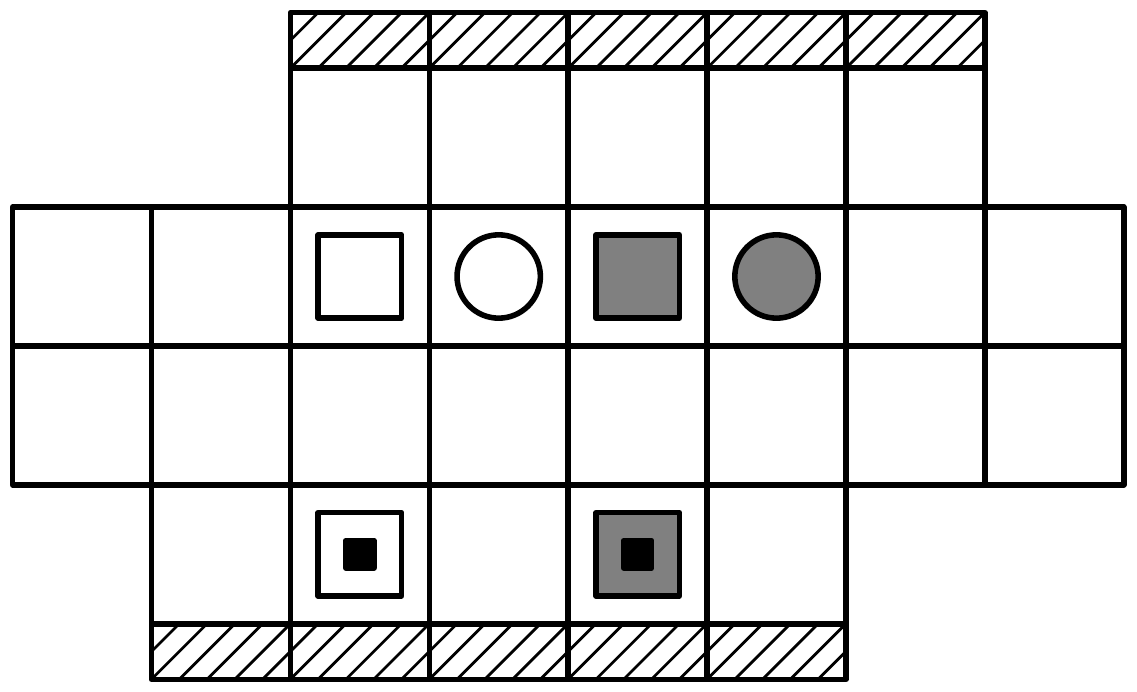}    
    \vspace*{-1ex}
    \caption{Our notation for pieces, in reading order: a white king, a white pawn, a black king, a black pawn; and white and black anchored kings (in an actual game, there is only one anchor).}
    \label{figure:piece_types}
  \end{minipage}\hfill
  \begin{minipage}{0.5\textwidth}
    \centering
    \hfill
    \raisebox{-.5\height}{\includegraphics[scale=.3]{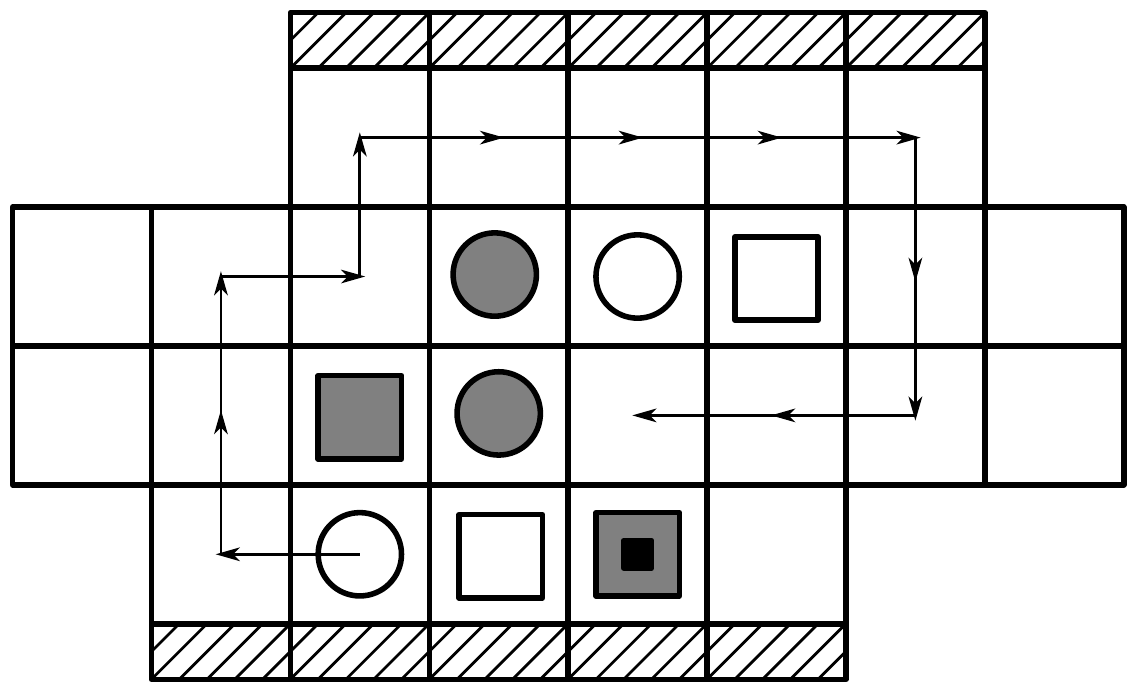}}
    \hfill\raisebox{-.5\height}{\scalebox{2}{$\to$}}\hfill
    \raisebox{-.5\height}{\includegraphics[scale=.3]{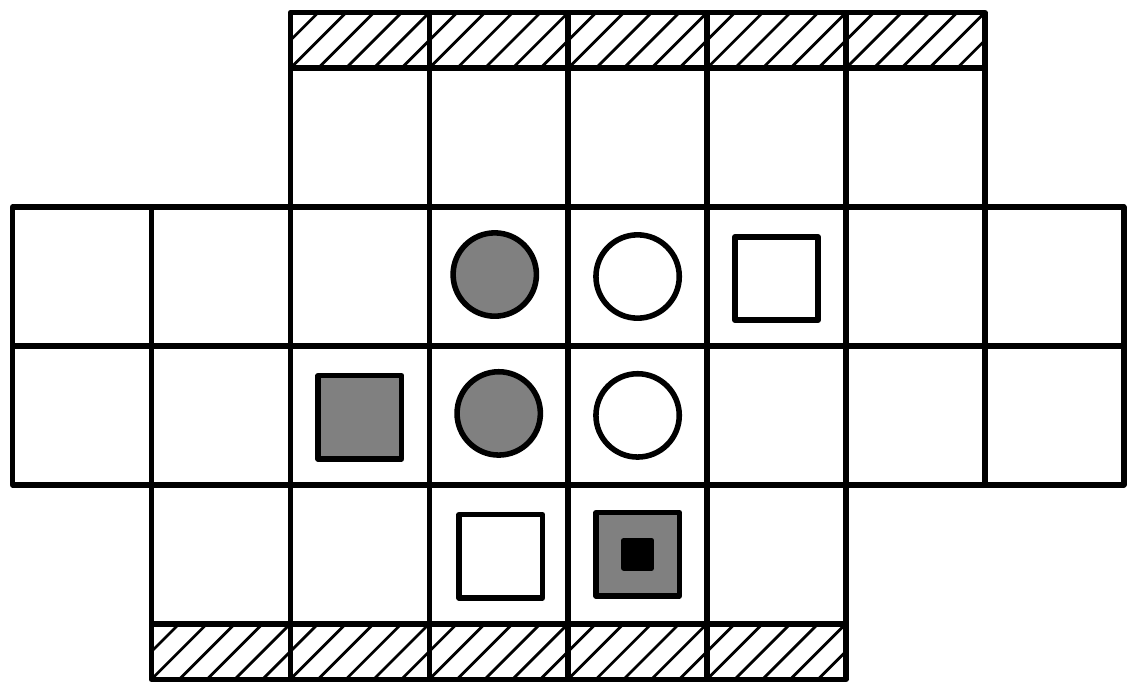}}
    \hfill\hfill
    \caption{An example move.}
    \label{figure:move_example}

    \bigskip
    \bigskip
    \bigskip

    \centering
    \hfill
    \raisebox{-.5\height}{\includegraphics[scale=.3]{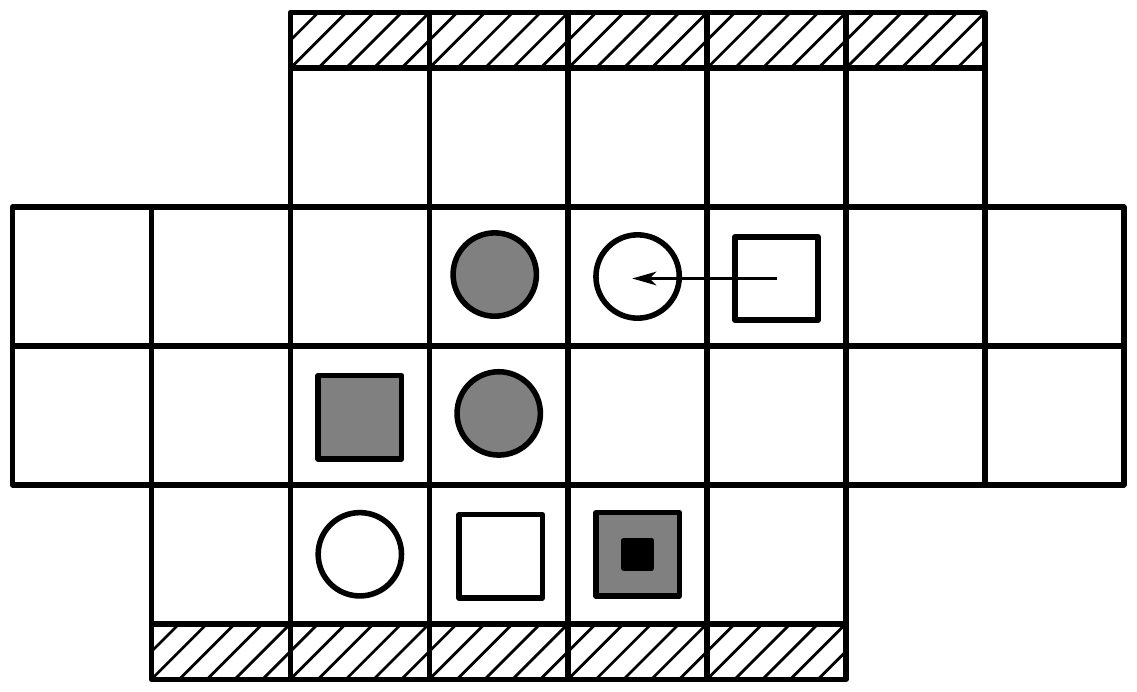}}
    \hfill\raisebox{-.5\height}{\scalebox{2}{$\to$}}\hfill
    \raisebox{-.5\height}{\includegraphics[scale=.3]{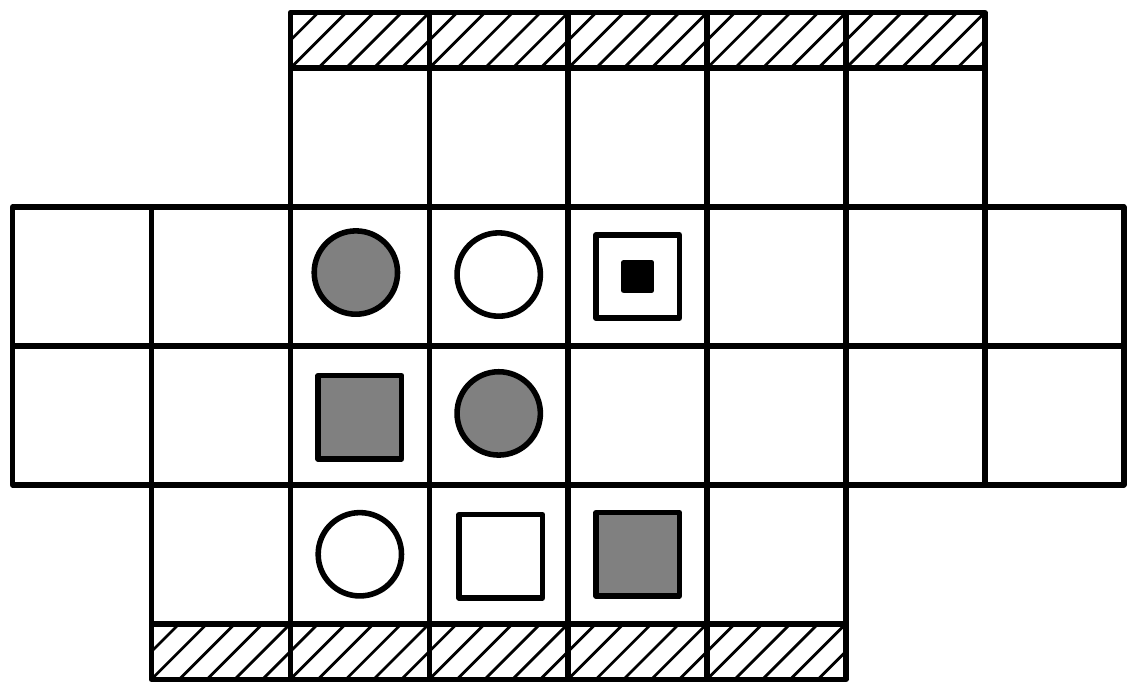}}
    \hfill\hfill
    \caption{An example push.}
    \label{figure:push_example}
  \end{minipage}
\end{figure}
\fi

\iffull

\begin{figure}
  \centering
  \begin{minipage}{0.38\textwidth}
    \centering
    \includegraphics[scale=.5]{images/original_board}    
    \caption{The original \pushfight{} board. The shaded regions represent side rails.}
    \label{figure:original_board}
  \end{minipage}\hfill
  \begin{minipage}{0.58\textwidth}
    \centering
    \includegraphics[scale=.5]{images/piece_types}    
    \caption{Our notation for pieces. From left to right in the middle of the second row: a white king, a white pawn, a black king, and a black pawn. The bottom row shows an anchored king of each color (in an actual game, there is only one anchor).}
    \label{figure:piece_types}
  \end{minipage}
\end{figure}
\fi

\pushfight{} is played with two types of pieces, each of which takes up a square of the board: \emph{pawns} (drawn as circles) and \emph{kings} (drawn as squares). Each piece is colored either black or white, denoting which player the piece belongs to. Standard \pushfight{} is played with three kings and two pawns per player. Additionally, there is a single \emph{anchor} that is placed on top of a king after it pushes (but is never placed directly on the board). Figure~\ref{figure:piece_types} shows our notation for the pieces. 


\pushfight{} gameplay consists of the two players alternating \emph{turns}. During a player's turn, the player makes up to two optional \emph{moves} followed by a mandatory \emph{push}.

To make a move, a player moves one of their pieces along a simple path of orthogonally adjacent empty squares; see Figure~\ref{figure:move_example}.

\iffull
\begin{figure}[t]
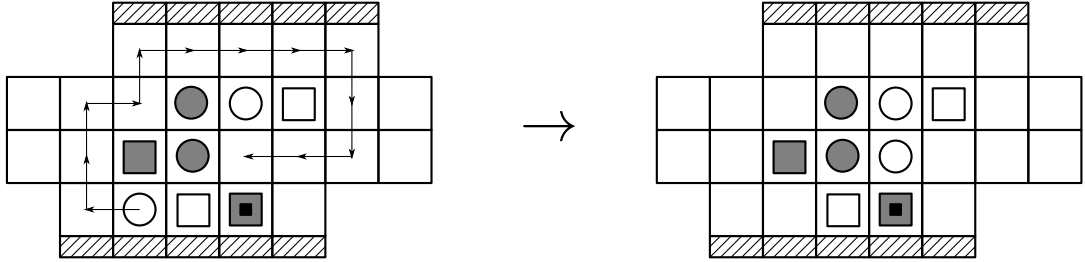

    \centering
    \hfill
    \raisebox{-.5\height}{\includegraphics[scale=.5]{images/move_example_1}}
    \hfill\raisebox{-.5\height}{\scalebox{2}{$\to$}}\hfill
    \raisebox{-.5\height}{\includegraphics[scale=.5]{images/move_example_2}}
    \hfill\hfill
    \caption{An example move.}
    \label{figure:move_example}
\end{figure}
\fi

To push, a player moves one of their kings into an occupied adjacent square. The piece occupying that square is pushed one square in the same direction, and this continues recursively until a piece is pushed into an unoccupied square or off the board. If this process would push a piece through a side rail, or would push the anchored king, the push cannot be made. Pushes always move at least one other piece. When the push is complete, the pushing king is anchored (the anchor is placed on top of that king). Figure~\ref{figure:push_example} shows a valid push.

\iffull
\begin{figure}[t]
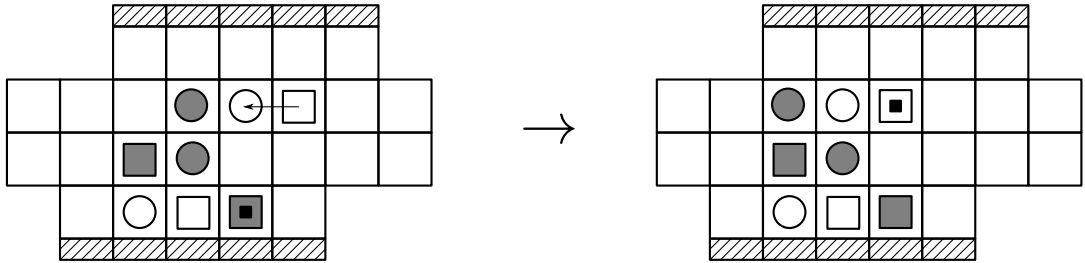

    \centering
    \hfill
    \raisebox{-.5\height}{\includegraphics[scale=.5]{images/push_example_1}}
    \hfill\raisebox{-.5\height}{\scalebox{2}{$\to$}}\hfill
    \raisebox{-.5\height}{\includegraphics[scale=.5]{images/push_example_2}}
    \hfill\hfill
    \caption{An example push.}
    \label{figure:push_example}
\end{figure}
\fi

A player loses if any of their pieces are pushed off the board (even by their own push) or if they cannot push on their turn.

\begin{definition}
A \emph{\pushfight{} game state} is a description of the board's shape, including which board edges have side rails,
and for each board square, what type of piece or anchor occupies it (if any).
\end{definition}

Note that the position of the anchor encodes which player's turn it is: if the anchor is on a white king, it is black's turn, and vice versa. If the anchor has not been placed (no turns have been taken), it is white's turn.

\section{Mate-in-1}
\label{section:mate_in_one}
\abstractlater{
  \section{Proofs: Mate-in-1}
  \label{appendix:mate_in_one}
}

We consider three variants of mate-in-1 \pushfight, varying in how the number of moves is specified: as a constant in the problem definition, as part of the input, or without a limit.
\ifabstract
  See Appendix~\ref{appendix:mate_in_one} for omitted proofs.
\fi

\subsection{$c$-Move Mate-in-1}
\abstractlater{\subsection{$c$-Move Mate-in-1}}

\begin{problem}
\textsc{$c$-Move \pushfight{} Mate-in-1:} Given a \pushfight{} game state, can the player whose turn it is win this turn by making up to $c$ moves and one push?
\end{problem}

The standard \pushfight{} game has $c=2$.

\both{
\begin{theorem}
\textsc{$c$-Move \pushfight{} Mate-in-1} is in P.
\end{theorem}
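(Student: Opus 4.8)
The plan is to solve the problem by straightforward brute-force enumeration, exploiting the fact that with only a constant number $c$ of moves the branching stays polynomial. Write $n$ for the number of board squares, so that the current player owns $O(n)$ pieces. The key observation is that a single move is completely described by a pair (which of the current player's pieces moves, which empty square it lands on): the particular simple path used is irrelevant, since a destination square $t$ is reachable from a piece at square $s$ exactly when $t$ lies in the connected component of empty squares obtained after temporarily vacating $s$, and any walk through empty squares can be shortened to a simple path. This component --- hence the full set of legal destinations for one move --- is computed by a single breadth-first search in $O(n)$ time, so one move has at most $O(n^2)$ choices (piece times destination).

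A whole turn is then an ordered sequence of at most $c$ such moves followed by one mandatory push. I would enumerate \emph{ordered} sequences rather than sets, because the order genuinely matters: relocating one piece can open or block paths for later moves and can change which pushes become available. The number of sequences of length at most $c$ is $O(n^{2c})$, which is polynomial precisely because $c$ is a fixed constant; sequences that move the same piece several times are included automatically. For each sequence I would apply the moves one at a time, recomputing reachability before each, producing the configuration that results after the moves at a cost of $O(cn) = O(n)$ per sequence.

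From each resulting configuration I would test every candidate push: for each of the player's $O(n)$ kings and each of the four directions, simulate the recursive push in $O(n)$ time, rejecting it if it is illegal (it would send a piece through a side rail, would move the anchored king, or would fail to move any piece), and otherwise checking whether it pushes an opponent's piece off the board without pushing any of the player's own pieces off. If some sequence admits such a winning push, answer yes; otherwise answer no. The push test costs $O(n^2)$ per configuration, so the total running time is $O(n^{2c} \cdot n^2)$, polynomial for constant $c$.

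This is an exhaustive search, so correctness is immediate once the search space is shown to coincide with the set of achievable turns; the thing to get right is not a genuine obstacle but the bookkeeping of the win and legality conditions. In particular, a push is mandatory and must move at least one piece, a player who pushes one of their \emph{own} pieces off loses rather than wins, and the anchor and side-rail constraints must be honored during the simulation. Since at most one piece can fall off per straight-line push, the winning test reduces to inspecting the single displaced-off piece. None of these details affects the polynomial bound; they only determine, for a fixed configuration and push, whether that push is legal and winning, which is a local $O(n)$ check.
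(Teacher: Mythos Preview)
Your proposal is correct and follows essentially the same brute-force approach as the paper: enumerate the at most $O(n^{2c})$ move sequences and, for each resulting configuration, test the $O(n)$ kings in four directions for a legal winning push. The paper's argument is slightly terser (it just counts $A^{2c+4}$ possible turns and asserts each can be checked in polynomial time), while you spell out the BFS for reachability and the push simulation, but the underlying idea is identical.
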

}

\begin{proofsketch}
The number of possible turns is $\leq A^{2c+4}$ on a board of area~$A$.
\end{proofsketch}

\later{
\begin{proof}
Let $A$ denote the area of the board.
There are at most $A$ of the current player's pawns (because every piece occupies a square). On each of the $c$ moves, any of those pawns may move to any of those squares, for a maximum of $A^2$ possibilities for each move. (``Moving'' pawns to their own square represents making fewer than $c$ moves.)  Then there are at most $A$ kings of the current player, each of which can potentially push in four directions. Thus there are at most $A^{2c+4}$ possible turns.

Checking that a turn is legal and results in the current player winning requires checking that the moves are all legal and that the push is legal and leads to a win. A move can be verified in polynomial time by finding a path of unoccupied squares between the pawn's start and end positions. A push can be checked in polynomial time by scanning across the board in the direction of the push to see if one of the other player's pieces is pushed off the board, or if the push is invalid because of the anchored king or a side rail.
\end{proof}
}

\subsection{$k$-Move Mate-in-1 is in NP}
\abstractlater{\subsection{$k$-Move Mate-in-1 is in NP}}

\newcommand{\kmove}{\textsc{$k$-Move \pushfight{} Mate-in-1}}

\begin{problem}
\kmove\textsc{:} Given a \pushfight{} game state and a positive integer $k$, can the player whose turn it is win this turn by making up to $k$ moves and one push?
\end{problem}

In this section, we prove the following upper bound on the number of useful moves in a turn:

\begin{restatable}{theorem}{movebound}
\label{thm:move-bound}
Given a \pushfight{} game state on a board having $n$ squares, if the current player can win this turn, they can do so using at most $n^6$ moves followed by a push.
\end{restatable}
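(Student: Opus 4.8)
The plan is to show that a winning turn is really a token-routing problem in disguise, and then to bound the number of moves that routing needs.

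First I would pin down what a winning push looks like. The key structural observation is that during the move phase a player may slide only \emph{their own} pieces, so the opponent's pieces are immovable obstacles right up until the push. Since winning this turn means pushing an opponent's piece off the board, and a push shoves a single maximal contiguous run of occupied squares one step in some direction, a winning turn must end in a configuration with one of our kings at a square $s_0$ and a gapless run of occupied squares $s_1,\dots,s_m$ in some direction $d$, where $s_m$ lies on a rail-free edge and is occupied by an (un-anchored) opponent piece. Because we cannot move that opponent piece, it must \emph{already} sit at $s_m$; all the moves can accomplish is to fill the remaining squares of this block with our own pieces and to place a king on $s_0$. Thus ``can win this turn'' is equivalent to: using moves, bring our pieces to cover a target set $T$ of at most $n$ squares, with a king on the designated square $s_0$, after which the mandatory push wins. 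It therefore suffices to bound the number of moves needed to reach \emph{some} such covering, assuming one is reachable at all.

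Next I would abstract the positioning as a pebble-motion problem. Delete the immovable opponent pieces to obtain a graph $G$ whose vertices are the remaining squares; our pieces become tokens on $G$, and a single move relocates one token to \emph{any} empty vertex in the connected component of empty-squares-plus-itself that contains it. The theorem then reduces to the following claim: if a placement of tokens covering $T$ (with a king on $s_0$) is reachable by \emph{some} sequence of moves, then it is reachable by polynomially many. I would prove this by the standard routing strategy for such problems: fill the squares of $T$ one at a time in a carefully chosen order, each time sliding a spare token to the next target square; when the route or the square is obstructed by an already-placed token, temporarily displace that token into spare empty space and restore it afterward. Counting at most $n$ targets, $O(n)$ moves to route a single token along a path, and a bounded number of displace/restore operations per placement yields a polynomial bound, and a deliberately loose accounting gives the stated $n^6$.

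The hard part will be the routing bound itself --- proving that reachability really does force only polynomially many moves, rather than the repeated shuffling of tokens spiraling out of control. The difficulty concentrates in the congested case, where $G$ has few empty vertices, so that placing one token may require evacuating and later rebuilding much of the rest of the configuration. This is exactly the regime controlled by classical pebble-motion-on-graphs arguments, and the real work is to charge the temporary displacements and restorations so that no token is disturbed too many times; establishing that charging scheme (and thereby ruling out an exponential blowup) is where I expect essentially all the effort to go.
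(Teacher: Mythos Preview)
Your plan is viable but takes a genuinely different route from the paper, and it front-loads work the paper avoids entirely. The paper never characterizes winning pushes; it simply takes an arbitrary winning move sequence $C_1,\dots,C_\ell$ and shortcuts it. After fixing which of the current player's kings eventually pushes (and treating that player's other kings as pawns), two configurations are declared equivalent when the king occupies the same square and each connected component of the board minus the opponent's pieces minus that king contains the same \emph{number} of the current player's pawns. There are at most $n^4$ such classes (at most $n$ king squares, and for each at most $n^3$ ways to distribute pawns among the at most four components adjacent to the king --- the remaining components' counts being forced), and any two configurations in the same class are interreachable in at most $n^2$ pawn-only moves by routing misplaced pawns within each component. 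Replacing, for every class, the stretch of the winning sequence between its first and last visit to that class by this short in-class walk yields at most $n^4\cdot n^2=n^6$ moves. The argument is completely self-contained and never analyzes what the final push looks like.

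Your approach instead pins down a target set $T$ and appeals to pebble-motion-style routing for the bound. That can be made to work --- and with the slide primitive here would likely beat $n^6$ --- but the step you yourself flag as ``the real work,'' namely the charging scheme controlling displacements in the congested regime with a distinguished king whose position alters the pawns' connectivity, is exactly what the paper's equivalence-class abstraction replaces with a two-line counting argument. So your first stage (characterizing $T$) is unnecessary for this theorem, and your second stage defers the actual content to an external argument you have not carried out; the equivalence-class route is both shorter and self-contained.
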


\begin{proofsketch}
  We divide the reachable game states into $\leq n^4$ equivalence classes,
  and show that two equivalent configurations can be reached via $\leq n^2$
  moves within that class.
\end{proofsketch}

Our bound directly implies an {\ccNP} algorithm for \textsc{$k$-Move \pushfight{} Mate-in-1}:

\both{
\begin{corollary}
\kmove{} is in \ccNP.
\end{corollary}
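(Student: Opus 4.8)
The plan is to derive the corollary directly from Theorem~\ref{thm:move-bound}, which caps the number of useful moves at $n^6$. Once that bound is in hand, the guess-and-check structure of an {\ccNP} algorithm is immediate: a nondeterministic machine guesses a winning turn, and a deterministic verifier checks it in polynomial time, exactly as in the proof that \textsc{$c$-Move \pushfight{} Mate-in-1} is in P.

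Concretely, first I would describe the certificate: a sequence of at most $\min(k, n^6)$ moves (each specified by the piece moved together with its destination square) followed by a single push. The key point---and the one subtlety worth flagging---is that $k$ is part of the input and may be written in binary, so $k$ itself can be exponentially large; we must therefore cap the certificate length at $n^6$ rather than at $k$. Since $n^6$ is polynomial in the input size and each move has a polynomial-size description, the guessed turn is a legitimate polynomial-size witness regardless of the magnitude of $k$.

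Next I would handle verification. Given the guessed turn, the verifier checks that each move is legal (there is a simple path of unoccupied squares from the piece's start position to its destination) and that the terminal push is legal and results in an opponent piece being pushed off the board, reusing verbatim the polynomial-time checks from the $c$-move argument. With at most $n^6$ moves to inspect, this runs in polynomial time.

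Finally I would argue correctness in both directions, splitting on the size of $k$. If the instance is a yes-instance, then the player can win within $k$ moves; when $k < n^6$, a winning sequence of at most $k \le n^6$ moves is itself a valid certificate, and when $k \ge n^6$, Theorem~\ref{thm:move-bound} supplies a winning sequence of at most $n^6 \le k$ moves, so a polynomial-size witness exists in either case. Conversely, any accepted certificate uses at most $k$ legal moves followed by a winning push, exhibiting a legal winning turn, so no-instances are rejected. The only genuine care needed is this case split combined with the binary-encoding observation; the remainder is bookkeeping inherited from the previous results.
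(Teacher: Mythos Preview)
Your proposal is correct and matches the paper's approach: guess at most $\min\{k, n^6\}$ moves and a push, then verify in polynomial time, relying on Theorem~\ref{thm:move-bound} to bound the certificate size. Your explicit remark about the binary encoding of $k$ and the case split on $k$ versus $n^6$ are useful elaborations that the paper leaves implicit.
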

}

\later{
\begin{proof}
\label{alg:k-move-np}
Use the following NP algorithm.
Given a \pushfight{} game state on a board with $n$ squares, nondeterministically choose an integer $m$ between $0$ and $\min\{k, n^6\}$, then nondeterministically choose $m$ moves (a piece and a destination) and one push (a king and a direction). Accept if and only if the chosen moves and push are legal and result in a win for the current player.
\end{proof}

The remainder of this section is devoted to proving Theorem~\ref{thm:move-bound}.
}

A turn consists of making some number of moves followed by a single push. For the purpose of analyzing a single turn, kings other than the single king that pushes are indistinguishable from pawns, so we can assume the current player first chooses a king, then replaces all of their other kings with pawns before making their moves and push. The following definitions are based on this assumption.

\begin{definition}
Given a single-king game state, a \emph{board configuration} is a placement of pieces reachable by the current player
making a sequence of moves.

\xxx{Scoping to a particular game state is important (in particular, we don't need to record the pawnspace boundaries when naming a configuration in the unbounded problem later), but it's annoying to keep writing.}
\end{definition}

\begin{definition}
\label{def:pawnspace}
The \emph{pawnspace} of a board configuration is the (possibly disconnected) region of the board consisting of the empty squares and the squares containing the current player's pawns. Equivalently, the pawnspace is the region consisting of all squares not occupied by the current player's king or the other player's pieces.
\end{definition}

\begin{definition}
\label{def:signature}
The \emph{signature} of a board configuration is a list of nonnegative integers, where each integer is a count of the current player's pawns in a connected component of the configuration's pawnspace, ordered according to row-major order on the leftmost topmost square in the corresponding connected component.
\end{definition}

\begin{definition}
\label{def:equiv}
Given two board configurations $C_1$ and $C_2$ derived from the same game state, we say that $C_1 \equiv C_2$ if and only if
\begin{enumerate}
\item $C_1$ and $C_2$ have the same pawnspace (that is, the current player's only king occupies the same square in $C_1$ and $C_2$) and
\item $C_1$ and $C_2$ have the same signature (that is, each connected component of the pawnspace contains the same number of the current player's pawns in $C_1$ and $C_2$).
\end{enumerate}
\end{definition}

Relation $\equiv$ is clearly reflexive, symmetric, and transitive, so it is an equivalence relation inducing a partition of the set of board configurations derived from a given game state into equivalence classes. We need the following two lemmas about $\equiv$ for our proof of Theorem~\ref{thm:move-bound}:

\both{
\begin{lemma}
\label{thm:class-bound}
For a given game state on a board with $n$ squares, there are at most $n^4$ equivalence classes of board configurations.
\end{lemma}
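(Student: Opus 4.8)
The plan is to show that the equivalence class of a board configuration is determined by a very small amount of data, and then count the possibilities for that data. By Definition~\ref{def:equiv}, a class is exactly a pair consisting of a pawnspace---equivalently, a square for the current player's single king---together with a signature. There are at most $n$ choices for the king's square, so it suffices to prove that, for each fixed king position, at most $n^3$ distinct signatures can occur among the reachable configurations; the bound $n\cdot n^3=n^4$ then follows.

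The heart of the argument is that signatures can vary in only a very limited way. Let $B$ denote the board with the other player's (fixed) pieces deleted, and let $B_1,B_2,\dots$ be the connected components of $B$. During a turn the current player only makes moves, each sliding a piece along a path of orthogonally adjacent empty squares; since every empty square and every one of the player's own pieces lies in $B$, and such a path is connected, each piece stays inside the single component $B_i$ in which it starts. In particular the king never leaves its home component $B_j$, and the number $p_i$ of the player's pawns in each $B_i$ is invariant across all reachable configurations. Consequently, for $i\neq j$ the component $B_i$ is itself a connected component of every pawnspace and always contributes the same fixed count $p_i$ to the signature; only the contribution coming from $B_j$ can change.

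Next I would bound that variable contribution. Removing the single king square from the connected region $B_j$ splits it into at most four components: each resulting component must contain a neighbor of the king square (because $B_j$ was connected, every other cell of $B_j$ reaches the king square through one of its neighbors), and a square has at most four orthogonal neighbors. Hence the $B_j$-part of the signature is a distribution of the $p_j$ pawns lying in $B_j$ among at most four ordered bins, and the number of such distributions is at most $(p_j+1)^3\le n^3$, using $p_j\le n-1$ since the king occupies a non-pawn square of $B_j$. Combining, the number of signatures per king position is at most $n^3$, so the total number of equivalence classes is at most $n\cdot n^3=n^4$.

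The step I expect to require the most care is the invariance claim: making precise that, because every move travels only through empty squares and the deleted opponent pieces separate $B$ into its components, no piece can ever cross from one $B_i$ to another during a turn. This is exactly what collapses the a priori exponential number of possible pawn distributions down to a single distribution within the king's own component; once it is established, the remaining combinatorics (the four-way split of $B_j$ and the resulting cubic count) is routine.
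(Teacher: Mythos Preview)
Your proof is correct and follows essentially the same approach as the paper: fix the king's square (at most $n$ choices), observe that only the pawnspace components adjacent to the king can have varying pawn counts, bound those components by four, and use the fixed total to reduce to three free parameters each in $\{0,\dots,n-1\}$. Your intermediate framing via the components $B_i$ of the board-minus-opponent makes the invariance step a bit more explicit than the paper's version, but the decomposition and the resulting $n\cdot n^3$ count are identical.
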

}

\later{
\begin{proof}
Let $s$ be the square occupied by the current player's king. There are at most $n$ choices for $s$, so it remains to show that, for each $s$, there are at most $n^3$ equivalence classes where the current player's king occupies $s$.

The choice of $s$, together with the game state (containing the position of the other player's pieces), defines the pawnspace for all board configurations having the current player's king at $s$. For each connected component $R$ of the pawnspace, $s$ is either adjacent to $R$ or not. In the case where it is not, $R$ is surrounded by the boundary of the board and/or the other player's pieces, so no sequence of moves can change the number of the current player's pawns in $R$, and all board configurations have the same number of the current player's pawns in $R$.

Thus the only connected components that may have different numbers of pieces in different board configurations are the connected components bordering $s$, of which there are at most $4$. Each of these components comprises at most $n-1$ squares and so contains between $0$ and $n-1$ pieces. The total number of pieces in these components is invariant across all board configurations in each equivalence class, so the count of pawns in one of the components is fully determined by the others (that is, if there are $q$ components, there are $q-1$ degrees of freedom). There are $n$ possible values for each of up to $3$ free-to-vary pawn counts, so there are at most $n^3$ equivalence classes in which the current player's king occupies $s$. Together with the at most $n$ choices for $s$, there are at most $n^4$ equivalence classes of board configurations.
\end{proof}
}

\both{
\begin{lemma}
\label{thm:class-path}
If $C_1 \equiv C_2$, then $C_2$ can be reached from $C_1$ in at most $n^2 - 1$ moves without leaving the equivalence class of $C_1$.
\end{lemma}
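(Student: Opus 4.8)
The plan is to exploit the fact that, within a single turn, the king never moves (moving it would change the pawnspace, hence the equivalence class) and all of the current player's remaining pieces are pawns, which are indistinguishable. Every legal move slides one pawn along a path of empty squares, so that pawn stays inside its own connected component of the pawnspace; consequently any sequence of pawn moves automatically preserves both the pawnspace and the per-component pawn counts, i.e.\ never leaves the class of $C_1$. So it suffices to show that, component by component, I can rearrange the pawns from their placement in $C_1$ to their placement in $C_2$ (the counts agree by hypothesis, since $C_1\equiv C_2$), and to bound the total number of moves. Because the components are disjoint I can treat them independently and then sum.

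Fix one component $R$ with $r$ squares; if $R$ is entirely full of pawns there is nothing to do. First I would choose an ordering $v_1,\dots,v_r$ of its squares in which every prefix $\{v_1,\dots,v_j\}$ induces a connected subgraph (any BFS/DFS order works), and process $j=r,r-1,\dots,1$, keeping the \emph{active region} $A_j=\{v_1,\dots,v_j\}$ connected. The key bookkeeping is an invariant: the number of pawns in $A_j$ always equals the number of $C_2$-target squares inside $A_j$. When handling $v_j$: if $v_j$ is a target, I bring a pawn onto it (if one is not already there) and freeze it; if $v_j$ is a non-target still holding a pawn, I evacuate that pawn into $A_{j-1}$; otherwise I do nothing and simply shrink the active region. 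Removing a frozen target (with its pawn) or an emptied non-target preserves the invariant, and at the end every target carries a frozen pawn, so — since the component's pawn count equals its target count — all non-targets are automatically empty and the component matches $C_2$.

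The engine of the argument is a routing claim: in a connected region containing at least one pawn and a distinguished empty square, some pawn can slide onto that empty square in a single move. I would prove it by flood-filling the empty squares reachable from the target; connectivity forces a pawn on the boundary of this empty set, and that pawn has an all-empty path to the target, giving one legal move. This makes each ``fill a target'' step cost exactly one move, and the invariant guarantees a pawn is present whenever a target must be filled. The hard part — and the place where the factor of $r$ enters — is the \emph{asymmetric} task of emptying one specific non-target square before dropping it from the active region: a pawn surrounded on all sides by pawns cannot move at all in a single step, so there is genuine pebble-motion blocking here. I would resolve it by routing that pawn to the nearest empty square, shifting pawns one step along a shortest path, at a cost of at most $r-1$ moves; the invariant ensures an empty square exists exactly when such an evacuation is needed (a completely full active region contains only targets, so the non-target case never arises there).

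Finally I would collect the count. Each of the $r$ squares of a component costs at most $r-1$ moves, so the component costs at most $r(r-1)\le r^2$ moves. Writing $r_i$ for the sizes of the components, the pawnspace occupies at most $n-1$ squares (the king takes one), so $\sum_i r_i\le n-1$ and the total number of moves is at most $\sum_i r_i^2\le\big(\sum_i r_i\big)^2\le (n-1)^2\le n^2-1$, as required. I expect the only delicate points to be verifying the counting invariant through the evacuation step and confirming that every routed path consists genuinely of empty pawnspace squares (so that each step is a legal move and the intermediate configurations stay in the equivalence class); the rest is the connectivity argument above.
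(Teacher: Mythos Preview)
Your argument is correct and achieves the same $(n-1)^2\le n^2-1$ bound, but it is considerably more elaborate than the paper's proof. The paper dispenses with the BFS/DFS ordering, the shrinking active region, and the invariant entirely: it simply repeats the step ``pick any misplaced pawn $s$ (a pawn sitting on a square that is empty in $C_2$), pick any currently vacant target $t$ in the same component, take a shortest path in $R$ from $s$ to $t$, and slide the pawns on that path one position toward $t$ in order from the $t$-end.'' Each such iteration uses at most one move per pawn (hence at most $n-1$ moves) and reduces the number of misplaced pawns by exactly one, so at most $(n-1)^2$ moves suffice. Your evacuation step is essentially one iteration of this same pawn-train slide, so the two proofs share their core mechanism; the difference is that the paper applies it globally and greedily, while you embed it inside a sweep that freezes squares one at a time. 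Your version buys a per-component cost bound $r_i(r_i-1)$ and makes the ``stay in the equivalence class'' property very explicit via the active region, at the price of extra bookkeeping (you must be careful that both the fill and evacuate routings stay inside $A_j$ so as not to disturb frozen squares). The paper's version is shorter and avoids that bookkeeping entirely, since it never freezes anything and only needs the observation that pawn moves cannot change the pawnspace or the per-component counts.
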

}

\later{
\begin{proof}
\xxx{Part of this proof also wants to be titled algorithm. Maybe embed the algorithm environment in the proof environment?}
By the assumption that $C_1 \equiv C_2$, the current player's king and the other player's pieces are already in the same positions in $C_1$ and $C_2$. Notice that while moving the king may change the connected components of the pawnspace, moving pawns never does, nor can pawns move from one component to another (by the same argument as in Lemma~\ref{thm:class-bound}). Thus to ensure we remain in the equivalence class of $C_1$, it is sufficient to give an algorithm that moves only pawns.

Initialize the current configuration $C$ to be $C_1$. Call a pawn $\emph{misplaced}$ if it occupies a square in $C$ that is empty in $C_2$. While there are misplaced pawns, choose one and let $s$ denote the square it occupies. Let $R$ be the connected component in the pawnspace\footnote{Being in the same equivalence class, $C_1$, $C_2$ and all values of $C$ have the same pawnspace.} of $C$ containing $s$ and let $T$ be the set of squares in $R$ that contain pawns in $C_2$. By the assumption that $C_1 \equiv C_2$, there are the same number of pawns in $R$ in $C_1$ and $C_2$, so because there is a misplaced pawn, there is at least one square $t \in T$ that is empty (a missing pawn). Let $s_1,s_2,\dots,s_l$ be the squares containing pawns along a shortest path in $R$ from $s$ to $t$ (with $s_1 = s$). Move the pawn at $s_l$ to $t$, then move the pawn at $s_{l-1}$ to $s_l$, and so on, finishing by moving the pawn at $s_1$ to $s_2$. The net effect of this sequence of moves is that $t$ now holds a pawn and $s$ no longer holds a pawn (so $C$ now contains one fewer misplaced pawn). Continue with the next iteration of the loop.

During each iteration of the above loop, each pawn moves at most once. The number of misplaced pawns decreases by $1$ each iteration, so the number of iterations is at most the number of pawns, of which there are at most $n-1$. Thus the number of moves used to transform $C_1$ into $C_2$ is at most $(n-1)^2 \leq n^2 - 1$.
\end{proof}
}

We are now ready to prove Theorem~\ref{thm:move-bound}:

\movebound*

\begin{proof}
By our assumption that the current player can win this turn, there exists a sequence of moves for the current player after which they can immediately win with a push, corresponding to a sequence of board configurations $C_1,C_2,\dots,C_l$. Configuration $C_1$ is obtained from the initial game state by replacing all of the current player's kings, except the one that ends up pushing, with pawns. Each $C_{i+1}$ can be reached from $C_i$ in one move, and $C_l$ is a configuration from which the current player can win with a push.

We now define \emph{simplifying} a sequence of board configurations over an equivalence class $E$. If the sequence contains no configurations from $E$, then simplifying the sequence over $E$ leaves it unchanged. Otherwise, let $A_i$ be the first configuration in the sequence in $E$ and $A_j$ be the last configuration in the sequence in $E$. By Lemma~\ref{thm:class-path}, there exists a sequence of fewer than $n^2 - 1$ moves that transforms $A_i$ into $A_j$, corresponding to a sequence of board configurations $A_i = D_0, D_1, \dots, D_u = A_j$ with $u \leq n^2 - 1$. Then simplifying over $E$ consists of replacing all configurations between and including $A_i$ and $A_j$ with the replacement sequence $D_0, D_1, \dots, D_u$.

Notice that simplifying a sequence (over any class) never changes the first or last configuration in the sequence, and each configuration in the resulting sequence remains reachable in one move from the previous configuration in the resulting sequence. After simplifying over a class $E$, the only configurations in the resulting sequence in $E$ are those in the replacement sequence, so the number of configurations in the sequence in $E$ is at most $n^2$. Furthermore, all configurations in the replacement sequence are in $E$, so simplifying over $E$ never increases (but may decrease) the number of configurations falling in other classes.

Let $C'_1,C'_2,\dots,C'_l$ be the result of simplifying $C_1,C_2,\dots,C_l$ over every equivalence class. By Lemma~\ref{thm:class-bound}, there are at most $n^4$ such classes, and by the above paragraph there are at most $n^2$ configurations from each class in $C'_1,C'_2,\dots,C'_l$, so the length of $C'_1,C'_2,\dots,C'_l$ is at most $n^6$. Each configuration in $C'_1,C'_2,\dots,C'_l$ is reachable in one move from the previous configuration, and that sequence of at most $n^6$ moves leaves the current player in position to win with a push, as desired.
\end{proof}

\subsection{Unbounded-Move Mate-in-1}
\abstractlater{\subsection{Unbounded-Move Mate-in-1}}

\begin{problem}
\textsc{Unbounded-Move \pushfight{} Mate-in-1:} Given a \pushfight{} game state, can the player whose turn it is win this turn by making any number of moves and one push?
\end{problem}

\begin{restatable}{theorem}{unboundedp}
\label{thm:unbounded-p}
\textsc{Unbounded-Move \pushfight{} Mate-in-1} is in P.
\end{restatable}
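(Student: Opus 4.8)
The plan is to give a polynomial-time algorithm that explores the reachable board configurations at the granularity of the equivalence classes of $\equiv$ rather than individual configurations. As in the previous section, I first guess which of the current player's kings performs the push and replace the others with pawns, trying all $O(n)$ choices and fixing one. A turn wins if and only if it ends in a push that ejects one of the opponent's pieces off the board, and (reasoning exactly as in the $c$-move case) such a push is determined by the pushing king's final square $s$ together with a direction $d$: it is winning precisely when the board squares running from $s+d$ to the board edge in direction $d$ are all occupied, the outermost such square carries an opponent piece, the edge beyond it has no side rail, and the chain contains no anchored king. Because only the opponent's pieces are immovable and only an edge square can be ejected, the victim square is forced once $s$ and $d$ are chosen; so once the king's square is fixed there are only four candidate winning pushes to test, one per direction, each checkable against a configuration using only the per-component pawn counts recorded by its signature. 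What remains is to decide which classes are reachable.

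The key idea is that reachability need only be tracked up to $\equiv$. By Lemma~\ref{thm:class-bound} there are at most $n^4$ classes for the fixed game state, and by Lemma~\ref{thm:class-path} any two configurations in one class are mutually reachable without leaving it. I therefore build a directed graph $G$ whose vertices are these classes and whose edges record a single king step: from a class with the king at $s$, for each neighbor $s'$ in the pawnspace the king may step to $s'$ provided $s'$ can be made empty (it already is, or its pawnspace component has a spare empty square into which its pawn slides), and this yields one of possibly several successor classes with the king at $s'$ (see below). I then run a search from the class of the start configuration and, at every class encountered, test its four candidate pushes using the signature: a chain is solid exactly when every non-opponent square of the chain can be covered by a pawn of the component containing it. The algorithm answers ``yes'' iff some reachable class admits a winning push. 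Soundness and completeness of searching $G$ follow from the two lemmas: a pawn move keeps a configuration in its class, a king move crosses one edge of $G$, and conversely any edge can be realized from any configuration of its tail class by first rearranging pawns via Lemma~\ref{thm:class-path} and then stepping the king.

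The technical heart, and the step I expect to be the main obstacle, is computing the transition relation of $G$ correctly and bounding its out-degree. Stepping the king from $s$ to $s'$ both frees $s$, possibly merging the up to four pawnspace components meeting at $s$, and removes $s'$, possibly splitting its component into up to four pieces. Merges merely add the relevant pawn counts, but a split is \emph{not} determined by the class alone: since the pawns of the old component could be pre-arranged anywhere within it, its count may distribute across the new sub-components in any way summing to the old total. Hence a single king step can have several successor classes, and I must argue, by the same bounded-degrees-of-freedom counting used in Lemma~\ref{thm:class-bound}, that the number of resulting signatures is polynomial and that each is computable in polynomial time. Granting this, $G$ has at most $n^4$ vertices and polynomial out-degree, the per-class push test is polynomial, and the whole search, repeated over the $O(n)$ choices of pushing king, runs in polynomial time, placing \textsc{Unbounded-Move \pushfight{} Mate-in-1} in {\ccP}.
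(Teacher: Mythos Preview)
Your proposal is correct and follows essentially the same approach as the paper: search the equivalence-class graph (with edges given by single-square king steps that may merge or split pawnspace components) from the initial class, and at each reachable class test the four push directions using only the signature. The paper packages the transition computation and the winning-push test as separate lemmas (handling the split case via weak compositions bounded by Lemma~\ref{thm:class-bound}, exactly as you anticipate), but the algorithm and its analysis are the same as yours.
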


We can of course solve \textsc{Unbounded-Move \pushfight{} Mate-in-1} by trying all possible sequences of moves to find a board configuration from which the current player can win with a push, but there are exponentially many board configurations, so such an algorithm takes exponential time. Instead, we can use the fact that any two configurations in the same equivalence class are reachable from each other in a polynomial number of moves (from Lemma~\ref{thm:class-path}) to search over equivalence classes of board configurations instead of searching over board configurations. There are at most $n^4$ equivalence classes (by Lemma~\ref{thm:class-bound}), so they can be searched in polynomial time.

We will make use of the following definitions:

\begin{definition}
Two equivalence classes of board configurations $C_1$ and $C_2$ are \emph{neighbors} if there exist board configurations $b_1 \in C_1$ and $b_2 \in C_2$ such that $b_1$ can be reached from $b_2$ with a king move of exactly one square. The \emph{equivalence class graph} is a graph whose vertices are equivalence classes of board configurations and whose edges connect neighboring equivalence classes.

An equivalence class of board configurations $C$ is a \emph{winning equivalence class} if there exists a board configuration $b \in C$ such that the player whose turn it is can win with a push.
\end{definition}

The key idea for our algorithm is the following:

\both{
\begin{lemma}
\label{thm:key_idea}
There exists a path in the equivalence class graph from the equivalence class of the initial board configuration to a winning equivalence class if and only if there exists a winning move sequence.
\end{lemma}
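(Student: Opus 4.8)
The plan is to prove the two directions of the biconditional separately, relying throughout on a simple dichotomy: a \emph{pawn} move never changes the equivalence class of a configuration (it preserves the pawnspace, and since a pawn travels along empty squares of the pawnspace it cannot leave its connected component, so the signature is preserved too), whereas a \emph{king} move is exactly what can cross between classes. The only gap to bridge is that a single \pushfight{} move may slide a piece many squares, while the neighbor relation on classes is defined via a king move of exactly one square.

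For the ``only if'' direction, I would start from a winning move sequence, i.e.\ a sequence of configurations $C_1,\dots,C_l$ with $C_1$ the initial configuration, each $C_{i+1}$ reached from $C_i$ by one move, and $C_l$ admitting a winning push. First I would refine the sequence by decomposing every multi-square king move into single-square king steps: if the king slides along empty squares $p_0,p_1,\dots,p_d$, each intermediate square is empty, so the king may stop at every $p_i$, producing valid intermediate configurations each reachable from its predecessor by a one-square king move. After this refinement, consecutive configurations differ by either a pawn move (same class) or a one-square king move (neighboring classes, by definition). Passing to equivalence classes and deleting repetitions then yields a walk in the equivalence class graph from $[C_1]$, the class of the initial configuration, to $[C_l]$; since $C_l$ admits a winning push, $[C_l]$ is a winning class, and a walk contains a path.

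For the ``if'' direction, I would take a path $E_0,E_1,\dots,E_m$ with $E_0$ the class of the initial configuration and $E_m$ a winning class, and build a winning move sequence inductively along the path. Suppose the sequence constructed so far ends at some configuration $x_j \in E_j$. Because $E_j$ and $E_{j+1}$ are neighbors, there are configurations $b_j \in E_j$ and $c_{j+1} \in E_{j+1}$ with $c_{j+1}$ one king move from $b_j$. Since $x_j \equiv b_j$, Lemma~\ref{thm:class-path} lets me move from $x_j$ to $b_j$ without leaving $E_j$, and a single king move then reaches $c_{j+1} \in E_{j+1}$; iterating gives a configuration in $E_m$. Finally, because $E_m$ is winning it contains some configuration $w$ admitting a winning push; one more application of Lemma~\ref{thm:class-path} moves from the current configuration of $E_m$ to $w$, after which the push completes a winning turn.

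The main obstacle is the bookkeeping in the refinement step of the ``only if'' direction: I must confirm that each one-square king stop is itself a legitimate reachable board configuration, and that pawn moves genuinely fix the class, so that the image of the refined sequence is a bona fide walk in the equivalence class graph rather than an arbitrary sequence of vertices. The ``if'' direction is then largely mechanical, its only real content being the repeated use of Lemma~\ref{thm:class-path} to realize each abstract neighbor edge by concrete moves and to land on a configuration from which the push actually wins.
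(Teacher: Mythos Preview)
Your argument is correct and mirrors the paper's proof almost exactly: refine multi-square king moves into unit steps so that a winning move sequence projects to a walk (hence a path) in the equivalence class graph, and in the converse direction stitch together the witnessing configurations for each edge, plus the final winning configuration, using Lemma~\ref{thm:class-path}. One cosmetic slip: your ``if'' and ``only if'' labels are swapped relative to the statement as written---in ``path exists iff winning move sequence exists'', the \emph{only if} direction is path $\Rightarrow$ sequence, not the reverse.
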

}

\later{
\begin{proof}
Let $b_0$ be the initial board configuration, and let $C_0$ be the equivalence class of $b_0$. 

First, suppose that there exists a path $C_0, C_1, \ldots, C_k$ in the equivalence class graph which ends at some winning equivalence class. Consider any $i \in \{0, 1, \ldots, k-1\}$. Because $C_i$ is adjacent to $C_{i+1}$ in the equivalence class graph, equivalence class $C_i$ neighbors $C_{i+1}$, and therefore there exist two board configurations $b_i' \in C_i$ and $b_{i+1} \in C_{i+1}$ such that $b_{i+1}$ can be reached from $b_i'$ with a king move of exactly one square. Because $C_k$ is a winning equivalence class, there exists a board configuration $b_k'$ such that the current player can win with a push. Then consider the sequence of board configurations $b_0, b_0', b_1, b_1', \ldots, b_k, b_k'$. For each $i$, $b_i$ and $b_i'$ are both in equivalence class $C_i$, so by Lemma~\ref{thm:class-path}, there exists a sequence of moves converting board configuration $b_i$ into $b_i'$. Together with the single-square king moves between $b_i'$ and $b_{i+1}$, we can use these sequences to form a winning move sequence from board configuration $b_0$ to board configuration $b_k'$.

Next, suppose there exists a winning move sequence. Break each king move along a path of more than one square in that sequence into a sequence of king moves of exactly one square. This yields a new winning move sequence whose moves are all king moves of one square or pawn moves. Pawn moves never change the equivalence class of the current board configuration, and single-square king moves always change the equivalence class of the current board configuration to a neighboring equivalence class. Thus, this sequence of moves corresponds to a path in the equivalence class graph. Because this is a winning move sequence, the last board configuration has the property that the player whose turn it is can win with a push, and so the last equivalence class in the path is a winning equivalence class. Thus there exists a path in the equivalence class graph from the equivalence class of the initial board configuration to a winning equivalence class.
\end{proof}
}

The size of the equivalence class graph is polynomial in $n$ (by Lemma~\ref{thm:class-bound}), so provided the graph can be constructed and the winning equivalence classes identified, this type of path in the equivalence class graph, if it exists, can be found in polynomial time.

Recall from Definition~\ref{def:equiv} that equivalence classes of board configurations are defined by the pawnspace and signature, and that, for configurations derived from the same game state (i.e., having the other player's pieces in the same positions), the pawnspace is defined by the position of the current player's king. Thus we can uniquely name a class using the king position and signature.

\begin{definition}
\label{def:class-descriptor}
The \emph{class descriptor} of an equivalence class of board configurations for a given game state is the ordered pair of the position of the current player's king and the signature defining that class.
\end{definition}

To prove Theorem~\ref{thm:unbounded-p}, we need to give polynomial-time algorithms to compute the neighbors of an equivalence class and to decide whether a class is a winning equivalence class.

\both{
\begin{lemma}
\label{thm:find-edges}
Given an initial game state and a class descriptor for some class $C$, we can compute in polynomial time the equivalence classes (as class descriptors) neighboring $C$.
\end{lemma}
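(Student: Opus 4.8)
The plan is to read everything we need directly off the class descriptor $(s,\sigma)$ of $C$, where $s$ is the king's square and $\sigma$ is the signature, and then enumerate the at most four single-square king moves available from $s$. First I would reconstruct the pawnspace of $C$: since the opponent's pieces are fixed by the game state and the king sits at $s$, the pawnspace is exactly the set of squares other than $s$ and the opponent-occupied squares (Definition~\ref{def:pawnspace}). A single graph search computes its connected components, which I would list in the row-major order used by Definition~\ref{def:signature}; matching this list against $\sigma$ then gives, for each component $R$, both its pawn count $p_R$ and its capacity $|R|$. All of this is polynomial in $n$.

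Next, for each square $s'$ orthogonally adjacent to $s$, I would determine which neighboring classes arise from the king move $s \to s'$. If $s'$ lies off the board or holds an opponent piece, the king cannot move there and $s'$ is skipped; if $s'$ lies in a component $R$ with $p_R = |R|$, then $s'$ carries a pawn in every configuration of $C$, so again no single-square king move to $s'$ exists and $s'$ is skipped. In the remaining case I form the new pawnspace $P' = (P\setminus\{s'\})\cup\{s\}$ (the king now occupies $s'$ and vacates $s$) and recompute its components. The key structural observation is that deleting the single square $s'$ can affect only the component $R$ that contains it, splitting it into the connected components of $R\setminus\{s'\}$; and since $s'$ has at most four orthogonal neighbors, one of which is $s$, at most three of them lie in $R$, so $R$ splits into \emph{at most three} parts. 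Adding $s$ back merges all pawnspace components adjacent to $s$ (possibly including some of these parts) into one component containing $s$.

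Consequently, the pawn count of every component of $P'$ is forced except for how $R$'s $p_R$ pawns distribute among its at most three split parts: untouched components keep their $\sigma$-counts, and the $s$-component's count is the sum of the forced counts it absorbs plus whatever sits in the parts merged into it. So the only freedom is the distribution of $p_R$ pawns among at most three parts subject to each part's capacity; since the total $p_R$ is fixed this leaves at most two independent counts, giving $O(n^2)$ candidate distributions per $s'$, each producing one candidate class descriptor (king at $s'$, with the resulting signature). Collecting the distinct descriptors over the at most four choices of $s'$ takes polynomial time. For correctness I would argue completeness and realizability separately: every neighbor of $C$ has its king one square from $s$ and is obtained from a configuration of $C$ by such a move (by the definition of neighboring classes and the fact that a one-square king move changes only the king's square), so it appears among the candidates; and conversely each capacity-respecting distribution is realizable because within the connected region $R$ the $p_R$ pawns may be placed in any such arrangement while keeping $s'$ empty, and Lemma~\ref{thm:class-path} guarantees that arrangement is a genuine configuration of $C$.

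The hard part will be the variable pawn count created when the king move splits $R$. I must both bound the number of resulting signatures—which rests on the purely combinatorial fact that deleting one square from a region leaves at most three pieces because $s'$ has at most three pawnspace neighbors—and verify that every capacity-respecting distribution of $R$'s pawns is actually achievable by a configuration in $C$, for which I lean on the free rearrangement of pawns within a connected component from Lemma~\ref{thm:class-path}. Handling the interaction between the split of $R$ and the simultaneous merge at $s$ cleanly, so that no pawn count is double-counted and the component ordering for the signature is recomputed correctly, is the fiddly part, but it is entirely local to $s$ and $s'$ and adds only constant overhead.
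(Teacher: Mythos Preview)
Your proposal is correct and follows essentially the same approach as the paper: enumerate the at most four single-square king moves, check feasibility via the capacity of the target component, recompute the pawnspace, and enumerate the capacity-bounded distributions of pawns among the (at most three) split pieces. You are in fact a bit more careful than the paper on two points---you treat the simultaneous split at $s'$ and merge at $s$ explicitly rather than as disjoint cases, and you justify that every capacity-respecting distribution is realized by some configuration in $C$ (via Lemma~\ref{thm:class-path})---whereas the paper leaves both of these implicit.
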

}

\later{
\begin{proof}
Moving the king to an adjacent square changes the pawnspace by adding the previously occupied square to a connected component and removing the newly occupied square from the pawnspace. Moving the king to an adjacent square may also join up to three components adjacent to the king's previous square or split a component containing the king's new square into up to three components. Other components are unaffected, and their corresponding signature elements are not modified.

In all cases, the component in the current configuration containing the king's new position must have area at least one greater than the number of pawns in that component. That is, there must be an empty square in that component. Otherwise, moving the king in that direction is not possible. All neighboring class descriptors specify the new king position, but how the signature is updated varies.

\xxx{The following is a long-winded way of saying ``Update the signature according to the changes in the components."}

If the king move does not change the number of connected components of the pawnspace, the signature is left unchanged in the neighboring class descriptors.

If the king move joins components, then the neighboring class descriptor is updated by inserting a signature element for the newly joined component with value equal to the sum of the components it was created from, and deleting the signature elements corresponding to the joined components.

If the king move splits components, then there are potentially multiple neighboring class descriptors, one for each possible resulting signature. Suppose the signature value of the component being split is $S$. There is one neighboring class descriptor for each of the weak compositions\footnote{A weak composition of an integer is a way of writing that integer as a sum of other positive integers, where terms may be $0$ and order is significant.} of $S$ of length equal to the number of newly split components such that each term in the composition is at most the area of the corresponding newly split component. Each neighboring class descriptor's signature is updated by removing the element corresponding to the component having been split and inserting the terms of the composition in the positions corresponding to the newly split components.

There are up to four directions in which the king can move. Each of the above update procedures takes time polynomial in the number of class descriptors produced. By Lemma~\ref{thm:class-bound}, there are only polynomially many classes, so only polynomially many descriptors can be produced, and thus we can compute the neighboring descriptors in polynomial time.
\end{proof}
}

\both{
\begin{lemma}
\label{thm:did-we-win}
Given an initial game state and a class descriptor for some class $C$, we can decide in polynomial time whether $C$ is a winning equivalence class.
\end{lemma}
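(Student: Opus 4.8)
The plan is to avoid searching over the exponentially many configurations in the class $C$ and instead reduce the question to a direct, per-direction test driven entirely by the class descriptor (the king's position together with the signature). Recall that a push wins exactly when it sends one of the other player's pieces off the board. First I would compute the pawnspace for $C$ directly from the game state and the king position, find its connected components by a graph search, and order them in the row-major fashion of Definition~\ref{def:signature} so that each component is paired with its pawn count from the signature; all of this is polynomial in $n$.

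Then, for each of the four push directions $d$, I would trace the ray of squares $p_1, p_2, \ldots, p_M$ from the king to the board edge in direction $d$, so that $p_{M+1}$ is off the board. A push in direction $d$ knocks the piece at $p_M$ off the board exactly when every square $p_1, \ldots, p_M$ is occupied and the boundary edge beyond $p_M$ carries no side rail; and for this to be a \emph{win} rather than a loss, $p_M$ must hold one of the other player's pieces (pushing one's own pawn off would lose), and the chain must not contain the anchored king (otherwise the push is illegal). The positions of the other player's pieces, the anchored king, and the side rails are fixed given the descriptor, so the only remaining freedom is whether the pawnspace squares among $p_1, \ldots, p_{M-1}$ can all be filled with the current player's pawns.

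The key observation is that this filling is governed solely by the signature: group the pawnspace squares lying on $p_1, \ldots, p_{M-1}$ by their connected component, and the required filling is realizable within $C$ if and only if, for every component, the number of required squares is at most the pawn count the signature assigns to that component. Necessity is immediate, and for sufficiency I would place pawns on the required squares first and distribute the remaining pawns of each component arbitrarily within it; any such placement has exactly the prescribed pawnspace and signature, hence lies in $C$ by Definition~\ref{def:equiv}, and it realizes the winning push (the pieces at $p_1, \ldots, p_{M-1}$ simply shift forward, so no pawn of the current player is lost). Thus $C$ is a winning equivalence class if and only if some direction $d$ passes all of these checks, each of which runs in time polynomial in $n$.

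The main obstacle is proving that this per-component counting test is both necessary and sufficient: that no configuration in $C$ can produce a winning push the test rejects, and conversely that whenever the counts permit there is an explicit witnessing configuration in $C$. Handling the validity side conditions is then routine once the counting characterization is in place, since internal edges carry no side rails (so only the far edge matters) and the presence of the anchored king anywhere in the chain simply forbids the entire push.
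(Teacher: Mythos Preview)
Your proposal is correct and follows essentially the same approach as the paper's proof. Both arguments iterate over the four push directions, trace the line $L$ of squares from the king to the boundary, check the three fixed conditions (an opponent's piece at the far end, no side rail at that boundary edge, no anchored king along $L$), and reduce the remaining ``all squares occupied'' condition to the per-component test that the number of pawnspace squares of $L$ lying in each connected component does not exceed that component's pawn count from the signature. Your write-up is a bit more explicit than the paper's in constructing the witnessing configuration and in noting that only the far boundary edge can carry a side rail, but the underlying idea is identical.
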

}

\later{
\begin{proof}
We wish to decide whether $C$ contains a board configuration such that the current player can win with one push.

Consider each possible push direction. The king's position is specified in the class descriptor of $C$, so let $L$ be the line of squares starting adjacent to the king in the chosen direction and continuing to the boundary of the board. Pushing in this direction results in a win exactly when

\begin{enumerate}
\item The square in $L$ furthest from the king contains a piece belonging to the other player.
\item There is no side rail at the boundary edge at the end of $L$ furthest from the king.
\item No square in $L$ contains the anchored king.
\item Every square in $L$ contains a piece.
\end{enumerate}

The first three conditions depend only on the current player's king's position and the positions of the other player's pieces, information specified in the given class descriptor and game state, and so can be checked in polynomial time independent of any specific board configuration in $C$.  Because the current player's pawns can move freely within the connected components of the pawnspace without leaving $C$, the fourth condition is equivalent to the condition that the intersection of $L$ with each connected component of the pawnspace has area less than or equal to the number of the current player's pawns in that component.

Thus, we can check in polynomial time for each direction whether there exists a board configuration in $C$ such that pushing in the chosen direction results in the current player winning.  By repeating this check for all four possible push directions, we can decide whether $C$ is a winning equivalence class in polynomial time.
%
\end{proof}
}

We are now ready to prove Theorem~\ref{thm:unbounded-p}:

\unboundedp*

\begin{proof}
First, compute the class descriptor for the equivalence class of the initial board configuration. Then perform a breadth- or depth-first search of the equivalence class graph, using the algorithm given in the proof of Lemma~\ref{thm:find-edges} to compute the neighboring class descriptors and the algorithm given in the proof of Lemma~\ref{thm:did-we-win} to decide if the search has found a winning equivalence class. Each of these procedures takes polynomial time. By Lemma~\ref{thm:class-bound}, there are only polynomially many equivalence classes, so the search terminates in polynomial time. By Lemma~\ref{thm:key_idea}, there exists a winning move sequence if and only if this search finds a path to a winning equivalence class. 
\end{proof}

The key idea of the above proof is that, if we do not care how many moves we make inside an equivalence class, then it is sufficient to search the graph of equivalence classes. Thus the above proof does not apply to \kmove{}, and in the next section, we prove \kmove{} is NP-hard.

\subsection{$k$-Move Mate-in-1 is NP-hard}
\abstractlater{\subsection{$k$-Move Mate-in-1 is NP-hard}}
\label{section:np_hardness}

To prove \kmove{} hard, we reduce from the following problem, proved strongly NP-hard in \cite{steiner}:

\newcommand{\irst}{\textsc{Integer Rectilinear Steiner Tree}}
\begin{problem}
\irst\textsc{:} Given a set of points in $\mathbb{R}^2$ having integer coordinates and a length $\ell$, is there a tree of horizontal and vertical line segments of total length at most $\ell$ containing all of the points?
\end{problem}

The basic idea of our reduction is to create a game state mostly full of the current player's pawns, but with a few empty squares (\emph{holes}). The player must ``move'' the holes (by moving pawns into them, creating a new hole at the pawn's former square) to free a king that can push one of the other player's pieces off the board. Initially each pawn can only travel one square (into an adjacent hole) per move, but once two holes have been brought together, a pawn can travel two squares per move, and so on. Bringing the holes together optimally amounts to finding a Steiner tree covering the holes' initial positions.

\both{
\begin{theorem}
\label{thm:kmove-np-hard}
\kmove{} is strongly \ccNP-hard.
\end{theorem}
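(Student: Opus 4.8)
The plan is to reduce from \irst{}, which is strongly \ccNP-hard, and to rely on the corollary already established that \kmove{} lies in \ccNP; together these give strong \ccNP-completeness, so I focus on the reduction. Given an \irst{} instance --- a finite set $P$ of integer points together with a target length $\ell$ --- I would construct a single \pushfight{} game state whose board is a large, nearly rectangular polyomino filled almost entirely with the current player's pawns, leaving exactly one empty square (a \emph{hole}) at each point of $P$ after a fixed integer rescaling of the coordinates. I would place a single king of the current player wedged in a pocket from which no legal push is initially possible, and arrange the other player's pieces and the side rails so that there is essentially one way to win: the king, once it has been worked loose and maneuvered to a designated launch square, can push an opponent piece off an unrailed edge. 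The move budget would be set to $k = \ell + t$, where $t$ is a constant offset (independent of the combinatorial structure of $P$) accounting for the moves needed to free and aim the king once the holes have been organized.

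The intended correspondence is the one sketched above: a single move can carry a pawn across an entire connected run of empty squares, so the number of moves needed to reconfigure the board is governed by how the holes are connected, and the cheapest way to make the relocations that free the king is first to gather the holes into a connected rectilinear network. For the \textbf{completeness} direction I would take a rectilinear Steiner tree $Y$ of length at most $\ell$ spanning the scaled points, clear the squares along $Y$ so that all holes merge into one connected empty region tracing out $Y$, and then use the long empty paths this creates to shuttle the few obstructing pawns away and slide the king to its launch square and push, checking that the whole turn uses at most $\ell + t = k$ moves.

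The \textbf{soundness} direction is where the real work lies, and I expect it to be the main obstacle. Given any winning turn that uses at most $k$ moves, I must extract from it a rectilinear Steiner tree of length at most $\ell$ connecting the points of $P$. The heart of this is a charging argument that lower-bounds the number of moves by the total length of the rectilinear structure swept out by the holes: I would track the union of squares ever vacated during the turn, argue that for the king to be freed this union must connect all of the initial hole positions to the king's pocket (and hence forms a connected rectilinear set spanning $P$), and then show that producing such a connected set of length $m$ costs at least $m - t$ moves. The delicate points are ruling out ``cheating'' reconfigurations that free the king without sweeping out a spanning structure (e.g.\ exploiting the board boundary, the side rails, or the king's own motion) and calibrating the offset $t$ so that the move bound is exactly $\ell$ plus a constant; getting this accounting tight is what makes the equivalence hold in both directions.

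Finally, because \irst{} is strongly \ccNP-hard, I may assume its coordinates are bounded by a polynomial in $|P|$, so after the fixed rescaling the board has polynomially many squares, the number of pieces is polynomial, and the budget $k \le \ell + t$ is polynomially bounded. Hence the construction is computable in polynomial time and all numeric parameters are small, so the reduction establishes that \kmove{} is strongly \ccNP-hard.
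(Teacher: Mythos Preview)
Your high-level plan matches the paper's --- reduce from \irst{} by placing one hole at each terminal and forcing the king to be freed by routing holes through a rectilinear network --- but there is a structural gap in your construction that breaks soundness.

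You wedge the king in a pocket of constant size and set $k = \ell + t$ with $t$ a constant ``independent of the combinatorial structure of $P$.'' If only a bounded number of holes must reach the king to free it, the player need not involve all $n$ terminals: routing the one or two nearest holes to the pocket suffices, at cost proportional to a shortest path from those few points rather than to a Steiner tree on all of $P$. The paper's construction avoids this by forcing \emph{every} hole to relocate. The king sits behind an ``alley'' of exactly $n$ of its own pawns, and the only region those pawns can escape into (the ``plaza'') has exactly $n$ empty squares, namely the terminals. Hence every alley pawn must move into the plaza --- equivalently, every hole must migrate down into the alley --- before the king can pass. This is precisely what makes the set $S'$ of plaza squares that are ever empty a connected set containing all the $p_i$; your own soundness sketch (``this union must connect all of the initial hole positions'') assumes exactly this all-holes-must-move property, which a constant-size pocket does not deliver. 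The offset in the paper is still a small constant ($k = \ell + 3$), because the $n$ initial holes and the $n$ alley targets cancel in the move accounting; what is \emph{not} constant is the number of pawns blocking the king.

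A secondary issue is your completeness step: you cannot ``clear the squares along $Y$ so that all holes merge into one connected empty region tracing out $Y$,'' because the number of empty squares is conserved at $n$ throughout the turn and a Steiner tree has $\ell + 1$ vertices, typically much larger than $n$. The paper instead roots the tree (augmented by the alley path) at the king's end of the alley and repeatedly picks a leaf hole, walks toward the root to the first occupied square, and moves that pawn into the leaf; each such move deletes one leaf, and after $|T| - n$ moves the $n$ remaining tree vertices are exactly the alley squares, all empty, so the king slides through in one final move.
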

}


\paragraph{Reduction:} Suppose we are given an instance of \irst{} consisting of points $p_i = (x_i, y_i)$ with $i = 1, \ldots, n$ and length $\ell$. For convenience, and without affecting the answer, we first translate the points so that $\min x_i = 2$ and $\min y_i = 4$ and reorder the points such that $y_1 = 4$.

We then build a \pushfight{} game state with a rectangular board with a height of $\max y_i$ and a width of $n + \max x_i$, indexed using $1$-based coordinates with the origin in the bottom-left square; refer to Figure~\ref{figure:np_reduction_example}. The entire boundary of the board has side rails except the edge adjacent to square $(x_1, 1)$. There is a white king in square $(x_1 + n, 2)$ and a black king with the anchor in square $(x_1 - 1, 2)$. There is a black pawn in square $(x, y)$ if any of the following are true:
\begin{enumerate}
\item $y = 3$ and $x \ne x_1$,
\item $y = 2$ and either $x < x_1 - 1$ or $x > x_1 + n$, or
\item $y = 1$.
\end{enumerate} 
The squares $(x_i, y_i)$ with $1 \leq i \leq n$ (corresponding to the points in the \irst{} instance) are empty. All remaining squares are filled with white pawns. The output of the reduction is this \pushfight{} board together with $k = \ell+3$. 

\begin{figure}
    \centering
    \hfill
    \raisebox{-.5\height}{\includegraphics[scale=.5]{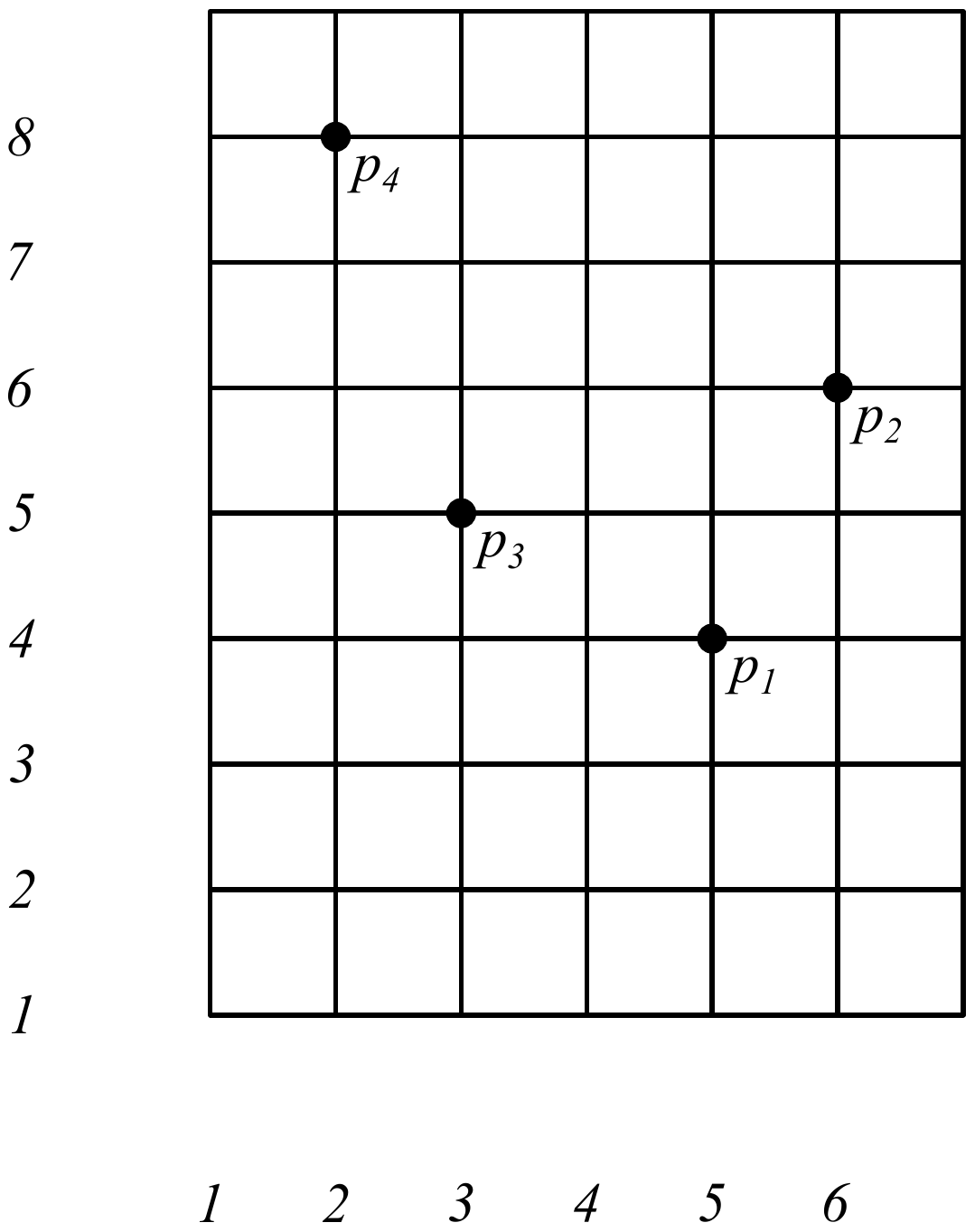}}
    \hfill\raisebox{-.5\height}{\scalebox{2}{$\to$}}\hfill
    \raisebox{-.5\height}{\includegraphics[scale=.5]{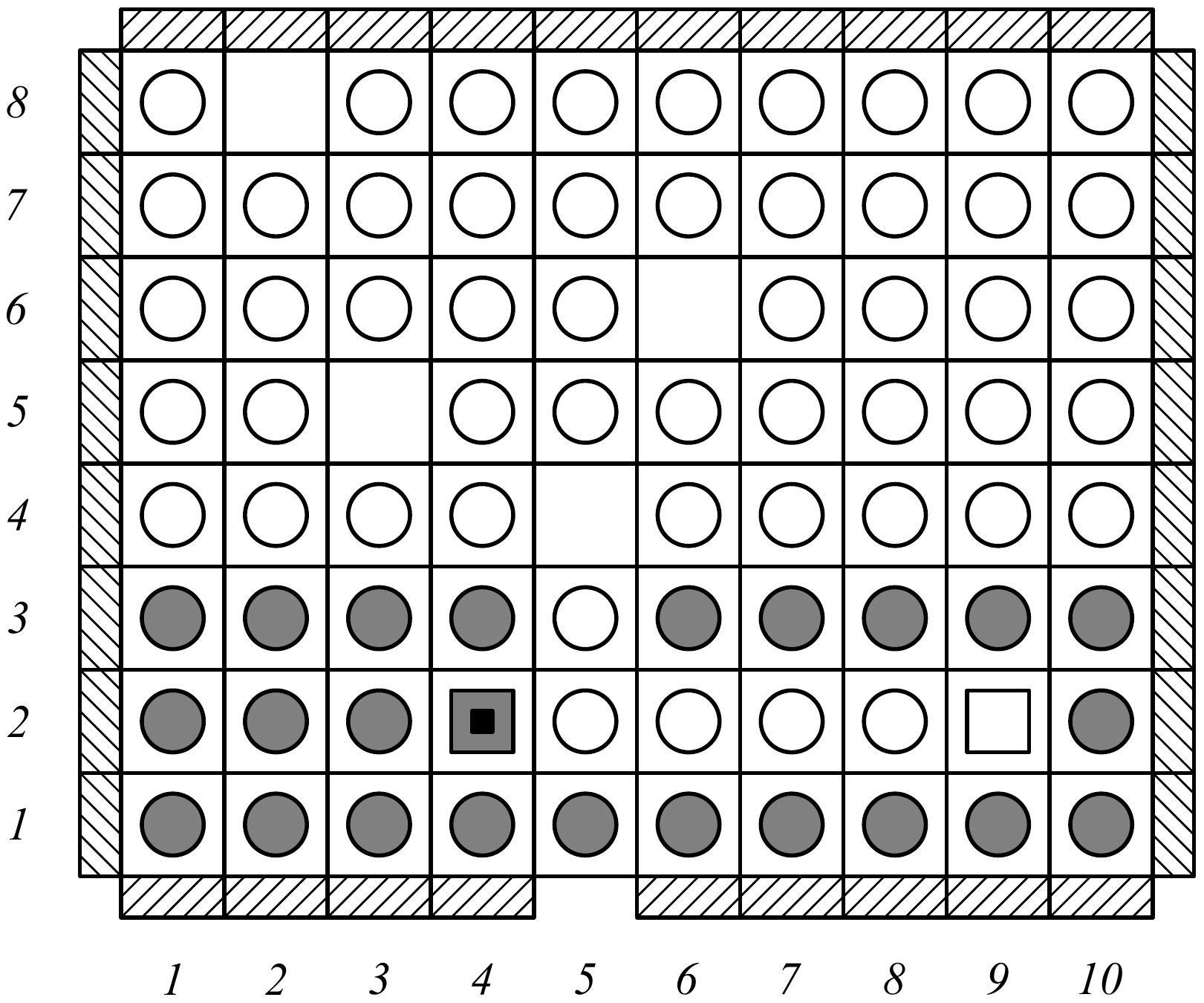}}
    \hfill\hfill
    \caption{A \pushfight{} board (right) produced during the reduction from the points in an example rectilinear Steiner tree instance (left).}
    \label{figure:np_reduction_example}
\end{figure}

\later{
\ifabstract The reduction from Section~\ref{section:np_hardness}
\else This reduction
\fi
can clearly be computed in polynomial time. It remains to show that there exists a solution to the \irst{} instance if and only if White can win in the corresponding \kmove{} instance.



\paragraph{Move sequence $\implies$ Steiner tree:} To win, White must push a black piece off the board. The only boundary edge without a side rail in the game state produced by the reduction is the south edge of square $(x_1, 1)$, so White must move their king to $(x_1, 2)$ and push downward. Call the $n$ squares directly to the left of the white king the \emph{alley} and all squares with $y > 3$ the \emph{plaza}. Before moving the white king into position to push, White must move the $n$ white pawns in the alley into the plaza (which has exactly $n$ empty squares). For the example reduction output in Figure~\ref{figure:np_reduction_example}, Figure~\ref{figure:np_about_to_win_example} shows the state of the board after moving all pawns from the alley into the plaza, then moving the white king through the now-empty alley.

\begin{figure}
    \centering
    \includegraphics[scale=.5]{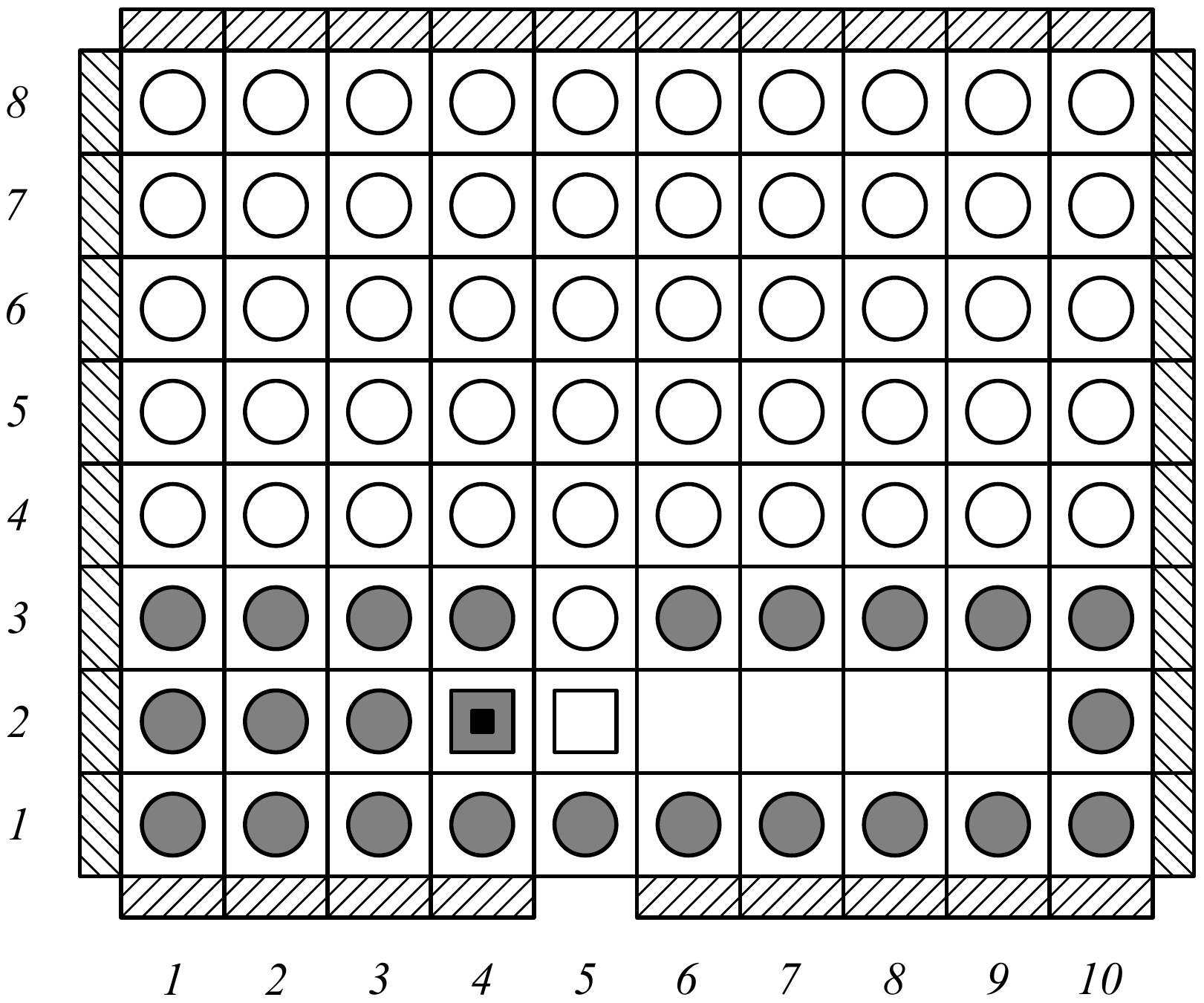}
    \caption{The result of moving all alley pawns in Figure~\ref{figure:np_reduction_example} into the plaza, then moving the white king through the alley. White wins by pushing down.}
    \label{figure:np_about_to_win_example}
\end{figure}

Suppose White can win with at most $k$ moves and a push. Let $S$ be the set of squares that are ever empty during the winning move sequence and let $S'$ be the subset of the plaza induced by $S$. 
We prove two useful facts about $S'$.

\begin{lemma}
\label{lemma:np_first_direction_useful_1}
$S'$ is connected.
\end{lemma}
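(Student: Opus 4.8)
The plan is to follow the holes (empty squares) through the winning turn and exploit the fact that the plaza is attached to the rest of the board through a single doorway. First I would record two structural observations. Since Black's pieces never move during White's turn and every row-$3$ square except $(x_1,3)$ carries a black pawn, the only square of row~$3$ that is ever empty is $(x_1,3)$; consequently the plaza (the rectangle $y>3$) meets the rest of the board only across the edge joining $(x_1,4)$ and $(x_1,3)$. I would also replace each multi-square move by the sequence of single-square moves into adjacent empty squares; this alters neither the final position nor the set $S$ of ever-empty squares, so henceforth every move slides one piece into an adjacent empty square, i.e.\ advances a single hole by one step.

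Next I would show every square of $S'$ is joined, inside $S'$, to one of the initial holes $(x_i,y_i)$. If $s \in S'$ is not an initial hole, look at the first moment it is empty: a piece has just slid out of $s$ into an adjacent square $d$ that was already empty, so $d\in S$, $d$ is adjacent to $s$, and $d$ first emptied strictly earlier. The only way $d$ could lie outside the plaza is $d=(x_1,3)$, which forces $s=(x_1,4)=(x_1,y_1)$; but that square is itself an initial hole, a contradiction. Hence $d\in S'$, and chasing this strictly-decreasing first-empty-time pointer reaches an initial hole without ever leaving $S'$. In particular each connected component of $S'$ contains at least one point, and the component meeting the doorway square $(x_1,4)$ is one of them.

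It remains to exclude any further component. For this I would first show the plaza is completely full at the end of a winning turn. White can only win by pushing a black piece off the unique railless edge, which forces its king onto $(x_1,2)$ and hence the whole alley to be vacated; a short case check shows an alley pawn can go nowhere but up through the doorway (rows $1,2,3$ around the alley are immovable black pieces, and the dead-end pocket $(x_1+n,2)$ can never be refilled once the king leaves it). Counting White's pawns --- $n$ in the alley, $|\mathrm{plaza}|-n$ already in the plaza, and one at $(x_1,3)$, for $|\mathrm{plaza}|+1$ in all --- against the only squares they may occupy at the end (the plaza together with $(x_1,3)$, again $|\mathrm{plaza}|+1$ squares) forces those squares to be exactly filled, leaving the plaza hole-free. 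Now suppose $S'$ had a component $A_2$ avoiding the doorway. A hole can only step to an adjacent square, and the only potential non-plaza neighbour of $A_2$, namely $(x_1,3)$, is adjacent only to $(x_1,4)\notin A_2$; so no hole can cross the boundary of $A_2$, and the number of holes inside $A_2$ stays constant, equal to the at least one point it contains. This contradicts the plaza being hole-free at the end, so $S'$ has a single component and is connected.

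The step I expect to be the real obstacle is the counting argument that the plaza ends completely full: it depends on pinning down the fate of White's extra pawn at $(x_1,3)$ and ruling out any pawn being parked outside the plaza or any plaza pawn escaping for good, and the whole proof collapses if even one stray hole could survive somewhere in the plaza.
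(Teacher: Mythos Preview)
Your approach differs genuinely from the paper's. The paper traces each of the $n$ holes \emph{forward} through the move sequence as a path $P_i$ starting at $p_i$, argues that $\bigcup_i P_i=S$ and that each $P_i$ ends in an alley square, and then connects any two squares of $S$ through the alley before restricting to the plaza via the single doorway $(x_1,3)$. You instead trace backward (showing every component of $S'$ contains some $p_i$) and then argue by contradiction that a second component would retain a hole forever, against the plaza being full at the end. Both routes rest on the same underlying fact---that when the king sits at $(x_1,2)$ all $n$ holes lie in the row-$2$ strip $L=\{(x,2):x_1\le x\le x_1+n\}$---but the paper uses it to pin down the endpoints of its forward paths while you use it to obtain a terminal contradiction.

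You are right that the counting step is where the argument is incomplete. The parenthetical ``the dead-end pocket $(x_1+n,2)$ can never be refilled once the king leaves it'' is not literally true (the king itself may return), and more to the point nothing you wrote excludes a white pawn sitting at some $(x_1+j,2)$ with $j\ge 1$ in the final position---in which case a matching hole survives in $U$ and your contradiction evaporates. The clean fix is an ordering invariant: regarding $L$ as a path indexed by rank $x-x_1$, the king is always the piece of \emph{greatest} rank among the white pieces in $L$. This holds initially; single-step moves along a path cannot reorder pieces; entries and exits occur only at rank~$0$; and the king can never exit $L$, since at that moment it would be the sole piece in $L$ (being simultaneously at greatest rank and at rank~$0$), which would leave $U$ with zero holes and hence $(x_1,3)$ occupied. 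With this invariant, the king ending at rank~$0$ forces $L$ to contain no pawns, so all $n$ holes are in $L$, the plaza is full, and the rest of your proof goes through. (The paper's own one-line justification, ``the alley must be empty for the king to move in position to push,'' glosses over the same issue: it tacitly assumes the king traverses the entire alley in a single move, which again is the ordering invariant in disguise.)
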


\begin{proof}
We can use the move sequence (except for the king move(s)) to build a set of paths $P_1, P_2, \ldots, P_n$ from the empty squares in the plaza to the squares in the alley. Each path essentially traces the path of one hole over the course of the move sequence. As we go through the move sequence building the paths we maintain the invariant that the currently empty squares are the last elements in the paths. Initialize the $n$ different paths to start in the $n$ empty plaza squares. This satisfies the invariant at the start of the move sequence. During each move, a pawn moves from some square $s$, through some sequence of empty squares $s_1, s_2, \ldots, s_l$, and into a currently empty square $t$. By our invariant, there exists a path currently ending at $t$, so we extend that path with the sequence $s_l, \ldots, s_2, s_1, s$. The new endpoint of that path is the square that was just emptied, so the invariant is maintained.

All squares added to paths during this procedure are empty when they are added, so $\bigcup P_i \subseteq S$. In the other direction, all of the initially empty squares are endpoints of paths, the square emptied by each move is always appended to a path, and squares are never removed from paths, so $S \subseteq \bigcup P_i$. Thus $\bigcup P_i = S$.

We know that the alley must be empty for the king to move in position to push, so the final endpoints of the paths are exactly the alley squares. Consider any two squares $s,t \in S$. $s$ occurs in at least one path $P_i$, and the section of that path starting at $s$ (call it $P_i^s$) is a path in $S$ from $s$ to an alley square. Similarly, there is a section $P_j^t$ starting at $t$ of some path $P_j$ that also ends in an alley square. Then $P_i^s$ and the reverse of $P_j^t$, joined by a horizontal path entirely within the alley, is a path in $S$ from $s$ to $t$. Thus we can construct a path in $S$ between any two elements of $S$, so $S$ is connected.

Because the plaza is separated from the alley by black pawns except at $(x_1, 3)$, all paths in $S$ containing squares both inside and outside the plaza must contain $(x_1, 3)$. Then we can convert any path in $S$ starting and ending in $S'$ to a path entirely in $S'$ by deleting all squares between and including the first and last instances of $(x_1, 3)$ (if any), so $S'$ is also connected.
\end{proof}

\begin{lemma}
\label{lemma:np_first_direction_useful_2}
$|S'| \le \ell+1$.
\end{lemma}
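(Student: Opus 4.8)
The plan is to bound $|S'|$ by comparing it against the number of moves $M \le k = \ell+3$ used in the winning sequence, showing that $|S'| \le M-2$. The idea is that $S$ can grow only slowly with the number of moves, while a fixed $n+2$ of its squares are forced to lie \emph{outside} the plaza, leaving little room for the plaza part $S'$.

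First I would track the growth of $S$. Initially the only empty squares are the $n$ squares $(x_i,y_i)$, all of which lie in the plaza (since $y_i \ge 4$), so $|S|=n$ at the start. Each move, whether a pawn move or a king move, takes a piece from a source square through a path of currently empty squares into an empty destination; only the source square transitions from occupied to empty, so a move adds at most one new square to $S$. Hence after $M \le \ell+3$ moves, $|S| \le n+M$.

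Next I would exhibit at least $n+2$ elements of $S$ lying outside the plaza, namely the $n$ alley squares $(x_1,2),\dots,(x_1+n-1,2)$, the gateway square $(x_1,3)$, and the white king's initial square $(x_1+n,2)$. These all satisfy $y \le 3$ and are pairwise distinct, hence disjoint from the plaza. Each alley square must be emptied: the king, whose vertical neighbors are black pawns and whose right neighbor is a black pawn or the boundary, can only travel leftward along row $2$, so to reach its pushing position $(x_1,2)$ it must traverse the entire alley, requiring every alley square to be empty at that time. The king's start square is emptied the instant it first moves. Finally $(x_1,3)$ must be emptied because it is the unique passage between alley and plaza (every other $y=3$ square holds an immovable black pawn), and emptying the alley forces some pawn to cross into the plaza through it.

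Combining the two counts yields $|S'| = |S| - |S\setminus S'| \le (n+M) - (n+2) = M-2 \le (\ell+3)-2 = \ell+1$, as desired. The hard part will be the second step: rigorously justifying that each of the $n+2$ squares is genuinely forced to be empty at some point. This rests entirely on the geometry baked into the reduction, in particular that the $y=3$ black pawns seal the plaza off from the alley except at $(x_1,3)$, and that the king is hemmed into row $2$ by black pawns above and below. Once those structural facts are nailed down, the counting inequality is immediate.
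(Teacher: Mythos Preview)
Your argument is correct and follows the same approach as the paper's: bound $|S|$ by the number of initially empty squares plus the number of moves, then subtract off the non-plaza squares that are forced to appear in $S$. Your accounting is in fact slightly more careful than the paper's---you track the king's starting square $(x_1+n,2)$ on both sides of the ledger (it contributes $+1$ to $|S|$ via the king move and $+1$ to the non-plaza count), whereas the paper omits it from both counts, so the two versions arrive at the same bound $|S'|\le k-2=\ell+1$.
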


\begin{proof}
The reduction leaves $n$ squares empty. Of the $k$ moves in the move sequence, at least one moves the king; the remaining $k-1$ moves empty at most one square each, so $|S| \leq n+k-1$. There are $n+1$ squares in $S$ that are not in the plaza (the $n$ alley squares and $(x_1, 3)$), so \xxx{do we carry the $\leq$ to subsequent lines, or only where we introduced it? could also use one-line display math}


\begin{align*}
|S'| &= |S| - (n+1) \\
     &\leq n + k - 1 - n - 1 \\
     &= k - 2 \\
     &= \ell+1
\end{align*}
\end{proof}

Consider the grid graph $G$ induced by vertex set $S'$. $S'$ is connected (Lemma~\ref{lemma:np_first_direction_useful_1}), so $G$ is also connected. Let $T$ be any spanning tree of $G$. Each edge in $G$ (and therefore in $T$) is a unit-length vertical or horizontal segment. $S'$ contains every $p_i$, so $T$ does also. By Lemma~\ref{lemma:np_first_direction_useful_2}, $G$ has $|S'| \le \ell+1$ vertices, and $T$ has one fewer edge than it has vertices, so the total length of $T$ is at most $\ell$. Thus $T$ is a solution to the \irst{} instance, and such a solution exists if White can win in the \kmove{} instance.


\paragraph{Steiner tree $\implies$ move sequence:}
We are given a Steiner tree with total length at most $\ell = k-3$. Hanan's Lemma~\cite[Theorem 4]{Hanan-1966} states that there exists an optimal tree whose edges are entirely contained on vertical and horizontal lines through each $p_i$, so we assume without loss of generality that the given tree has this form. Each $p_i$ has integer coordinates, so we can subdivide the edges of the given tree into unit-length edges, resulting in a tree with at most $\ell+1$ vertices, all having integer coordinates. By our assumption that the tree is optimal, each leaf of the tree is one of the $p_i$ (though not all $p_i$ are necessarily leaves).

Let $T$ be the result of augmenting the subdivided given tree with a path through the vertices corresponding to square $(x_1, 3)$ \xxx{find a name for the entrance to an alley} and the squares in the alley, and consider $T$ to be rooted at $(x_1 + n - 1, 2)$ (the alley square adjacent to the white king). We added $n+1$ vertices, so $T$ has at most $\ell+1+n+1 = (k - 3) + 1 + n + 1 = k + n - 1$ vertices. Observe that each vertex in $T$ corresponds to a square that is either occupied by a white pawn or empty. \xxx{We say 'corresponding to' a lot throughout. Can we abuse notation a bit and have $T$ contain squares?}

We build the move sequence from $T$ by repeatedly choosing a move as follows. Choose any leaf $(x, y)$ of $T$ and walk along the tree towards the root until a vertex $(x', y')$ corresponding to an occupied square is reached. The reverse of that walk is a valid move of the white pawn at square $(x', y')$ through some number of holes to the hole at $(x, y)$. Append that move to the sequence (updating the board state accordingly), then remove the leaf $(x, y)$ from $T$. Continue this loop until the root of $T$ corresponds to an empty square (such that the preceding procedure would fail to find an occupied square when moving towards the root).

$T$ initially contains vertices corresponding to $n$ holes (the $p_i$). Each move creates a hole at $(x', y')$, but fills in a hole at $(x, y)$, so the number of vertices corresponding to holes in $T$ always remains at $n$. The loop ends when there is a path from a leaf to the root containing only vertices corresponding to holes. All paths from the root of $T$ begin with a specific path of length $n+1$ (the path we augmented the given tree with), so when the loop terminates, the first $n$ squares corresponding to that path (the alley) must be holes. Each iteration of the loop removes a vertex from $T$ and appends a move to the move sequence. $T$ initially contains $k+n-1$ vertices and ends containing $n$ vertices, so the generated move sequence contains $k-1$ moves, to which we append a move of the white king through the now-empty alley into position to win by pushing down. Thus the move sequence is a solution to the \kmove{} instance, and such a solution exists if there is a solution to the \irst{} instance.

Having proved both directions, we have proved Theorem~\ref{thm:kmove-np-hard}. \xxx{split this from the above paragraph-heading, or restate the theorem, or put the whole thing in a proof environment}

}

\section{\pushfight{} is \ccPSPACE-hard}
\label{section:pspace}
\abstractlater{
  \section{Proofs: \pushfight{} is \ccPSPACE-hard}
  \label{appendix:pspace}
}

In this section, we analyze the problem of deciding the winner of a \pushfight{} game in progress.
\ifabstract
  See Appendix~\ref{appendix:pspace} for omitted proofs.
\fi

\newcommand{\pfprob}{\textsc{\pushfight{}}}
\begin{problem}
\pfprob{}\textsc{:} Given a \pushfight{} game state, does the current player have a winning strategy (where players make up to two moves per turn)?
\end{problem}

\begin{theorem}
  \pfprob{} is \ccPSPACE-hard.
\end{theorem}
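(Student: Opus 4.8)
The plan is to reduce from a two-player formula game equivalent to \textsc{TQBF}: given a quantified Boolean formula $Q_1 x_1\, Q_2 x_2 \cdots Q_m x_m\, \phi$ with $\phi$ in $3$-CNF, the existential player wins the associated game iff the formula is true, and deciding this is \ccPSPACE{}-complete. I would build a \pushfight{} board on which forced play compels the two players to assign the variables in order $x_1, x_2, \ldots, x_m$, with the player to move at variable gadget $i$ determined by whether $Q_i$ is $\exists$ or $\forall$, and then to enter a formula-evaluation region in which the existential player (say White) can complete a winning push if and only if the chosen assignment satisfies $\phi$. Because the alternation of turns in \pushfight{} mirrors the alternation of quantifiers, the turn structure itself carries the quantifier prefix; this is the feature that lifts a single-agent Push-style reduction~\cite{push} to a two-player \ccPSPACE{} reduction.

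Concretely, first I would develop a forcing toolkit: corridors, side-rail placements, and anchor positions arranged so that on each turn the active player has essentially a unique non-suicidal push (every alternative pushes one of their own pieces off the board or leaves them unable to push next turn). This lets me script long deterministic stretches of play and hand control from one gadget to the next. Next I would design a variable gadget offering a genuine binary choice --- a king that may push into one of two holes, or equivalently a pawn movable into one of two empty squares --- such that exactly the intended player makes the choice, the choice permanently sets the state of an outgoing wire, and the gadget cannot be revisited to alter the value. I would then build wires that propagate truth values through chains of pushable pawns, OR-gadgets to combine the literals of a clause, an AND-style collector over all clauses, and a final gadget in which White has an off-board push available exactly when every clause is satisfied (a single side-rail-free edge, in the spirit of the mate-in-1 reduction, serves as the unique winning exit). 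Realizing all of this on a hole-free, $x$-monotone board, as advertised, adds a layout constraint but is not the conceptual crux.

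The main obstacle is proving faithfulness in both directions while respecting \pushfight{}'s idiosyncratic rules --- that a push must move at least one piece, that a player who cannot push loses, and that the anchor both forbids pushing the just-moved king and encodes whose turn it is. The hard part is showing that neither player can profitably deviate from the intended protocol: the universal player must be unable to ``cheat'' by making some unrelated push instead of honestly setting $x_i$ (any such deviation should lose immediately, for instance by forcing them to push a piece off the board or to be unable to push next turn), and the existential player must gain nothing by disturbing an already-set gadget. Establishing that every legal line of play corresponds to a legal sequence of variable assignments, and hence that White has a winning \pushfight{} strategy exactly when the existential player wins the formula game, is where the bulk of the case analysis lives and is the step I expect to be most delicate.
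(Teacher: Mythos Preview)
Your high-level plan---reduce from \textsc{TQBF}/\qbf{}, route play through per-variable gadgets, then through clause-checking gadgets, with a unique winning exit for White---matches the paper. But the mechanism you propose for handling quantifiers is quite different from what the paper actually does, and the difference matters.

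You propose that ``the alternation of turns in \pushfight{} mirrors the alternation of quantifiers,'' so that the player on move at gadget $i$ is determined by $Q_i$, and both players make meaningful pushes. The paper does \emph{not} do this. Instead, its central technical device is a \emph{move-wasting gadget}: a small configuration containing the only black king, in which (unless White can win immediately) White is forced to spend both of her moves every turn to avoid losing next turn, and Black's only legal push lies inside the gadget. The effect is to strip the game down to ``White's king makes one push per turn; Black may make up to two pawn moves per turn.'' The sole white king then marches through a chain of variable gadgets by pushes alone; in an existential gadget White's first push chooses between two rows, while in a universal gadget a single black pawn sits in the path and Black's \emph{move} (not push) selects which row gets blocked before White's forced push commits the value. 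Clause checking is done not with wire/OR/AND circuitry but with a \emph{connection block} of empty squares: traversing a variable gadget fills all holes in one of two rows, and each clause gadget merely needs one empty square left in its column to be traversable.

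The advantage the paper's approach buys is exactly in the part you flag as ``most delicate'': once the move-wasting gadget is proved correct, neither player has any freedom beyond the intended choices, so the faithfulness argument becomes a short case analysis per gadget rather than a global argument that no deviation anywhere on the board is profitable. Your approach, by contrast, leaves both players with up to two free moves every turn in addition to the push you are trying to script; you discuss preventing deviant \emph{pushes}, but you do not say how you would prevent an adversary from using those free \emph{moves} to tamper with distant gadgets (e.g., pre-filling a hole, opening a shortcut, or freeing a trapped piece). Controlling moves, not just pushes, is the crux in \pushfight{}, and the paper's move-wasting gadget is precisely the idea that tames it.
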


\newcommand{\qbf}{\textsc{Q3SAT}}
To prove \ccPSPACE-hardness, we reduce from \qbf{}, proved \ccPSPACE-complete in \cite{Stockmeyer-Meyer-1973,garey}:

\begin{problem}
\qbf{}\textsc{:} Given a fully quantified boolean formula in conjunctive normal form with at most three literals per clause, is the formula true?
\end{problem}

\newcommand{\pushstar}{\textsc{Push-$*$}}
Our proof parallels the \ccNP-hardness proof of \pushstar{} in~\cite{push}. \pushstar{} is a motion-planning problem in which a robot (agent) traverses a rectangular grid, some squares of which contain blocks. The robot can push any number of consecutive blocks when moving into a square containing a block, provided no blocks would be pushed over the boundary of the board. The \pushstar{} decision problem asks, given a initial placement of blocks and a target location, can the robot reach the target location by some sequence of moves? In our proof, the white king takes the place of the \pushstar{} robot\footnote{The \pushstar{} robot can move without pushing blocks, so the correspondence is not exact.} and white pawns function as blocks. Our proof has the additional complication that Black sets the universally quantified variables, and that White's moves and Black's push must be forced at all times to keep the other gadgets intact.

\begin{figure}
    \centering
    \includegraphics[scale=.5]{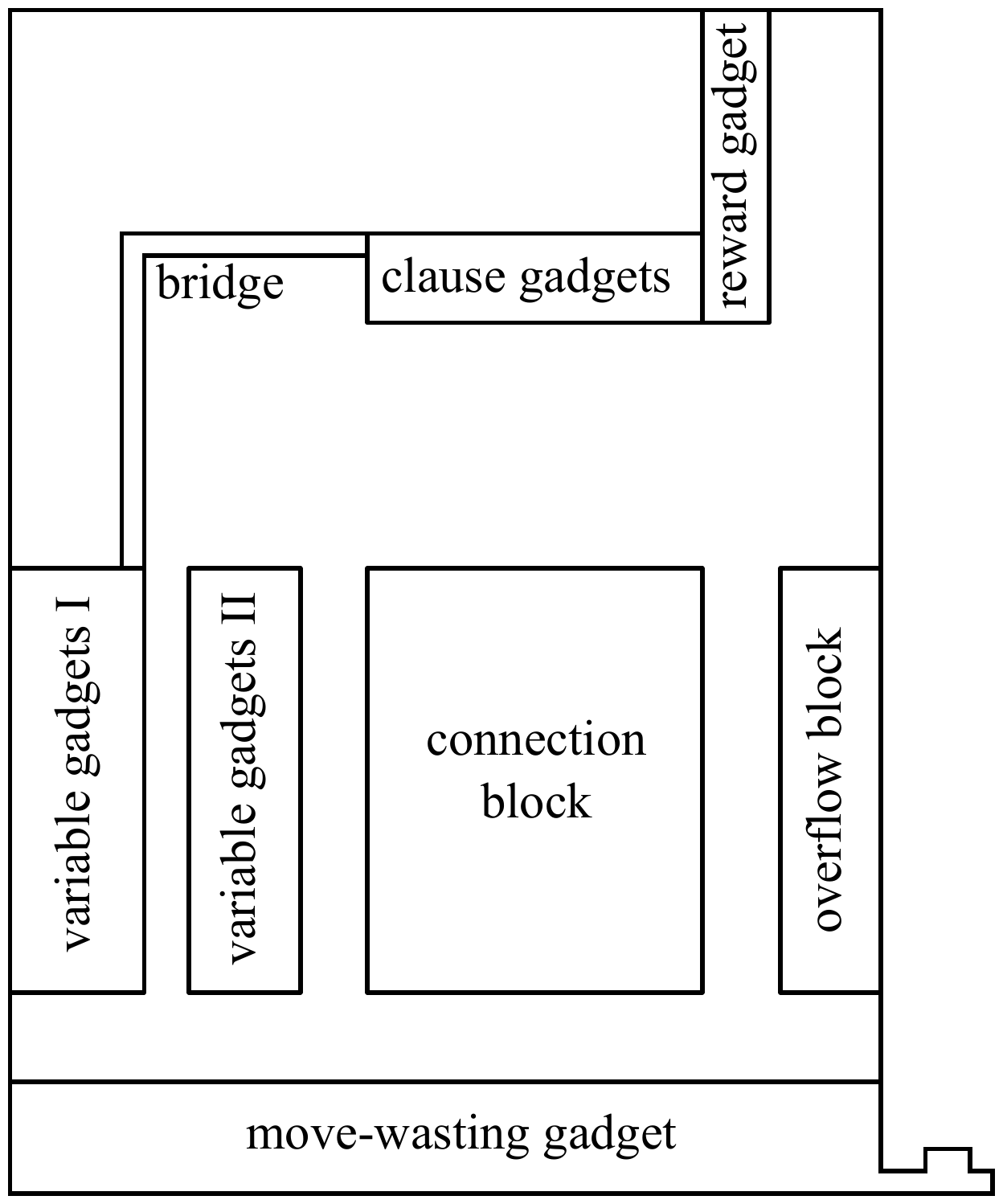}
    \caption{An overview of the \pushfight{} board produced by our reduction.}
    \label{figure:reduction_plan}
\end{figure}

Figure~\ref{figure:reduction_plan} shows an overview of the reduction. The sole white king begins at the bottom-left of the \emph{variable gadget I} block, setting existentially quantified variables as it pushes up and right. The \emph{variable gadget II} block contains black pawns and holes that allow Black to set the universally quantified variables. After all the variables have been set, the white king traverses the \emph{bridge} to the \emph{clause gadget} block. The variable and clause gadgets interact via a pattern of holes in the \emph{connection block} encoding the literals in each clause. The white king can traverse the clause gadgets only if the variable gadgets were traversed in a way corresponding to a satisfying assignment of the variables. The \emph{reward gadget} contains a boundary square without a side rail, such that the white king can push a black pawn off the board if the white king reaches the reward gadget. The \emph{overflow block} contains empty squares needed by the variable gadgets that were not used in the connection block (for variables appearing in few clauses). The \emph{move-wasting gadget} forces White's moves and Black's push, ensuring the integrity of the other gadgets. Finally, all other squares on the board are filled with white pawns, and the boundary has side rails except at specific locations in the reward and move-wasting gadgets. Figure~\ref{figure:pspace_reduction_example} shows an example output of the reduction.

\begin{figure}
    \centering
    \includegraphics[scale=.25]{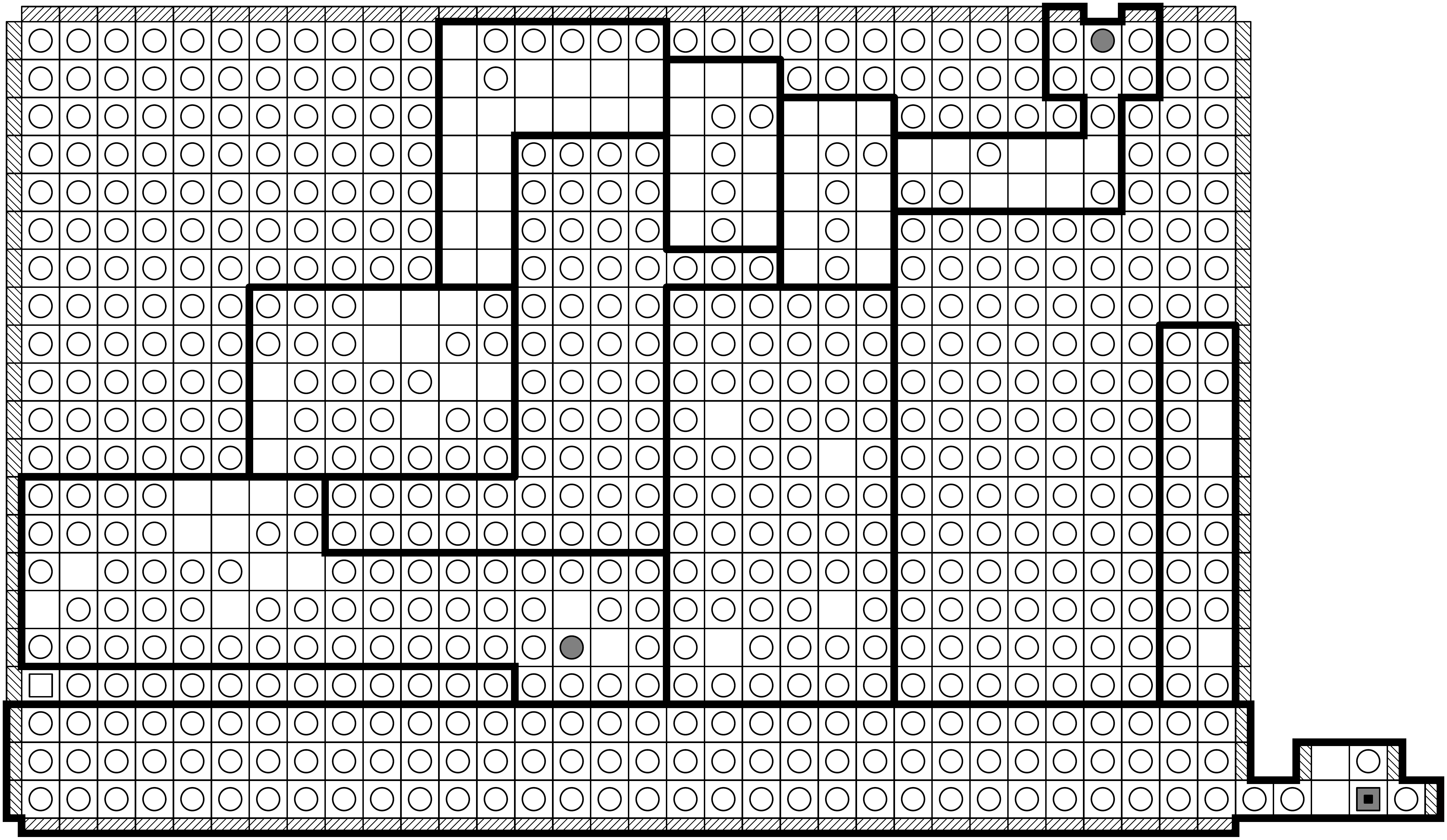}
    \caption{The result of performing the reduction on the formula $\forall x \exists y~ (x \lor \neg y) \land (\neg x \lor y)$. Gadgets and blocks are outlined.}
    \label{figure:pspace_reduction_example}
\end{figure}

\later{
We first prove the behavior of each of the gadgets, then describe how the gadgets are assembled. \xxx{we could explain/justify the out-push constraints here}
}



\subsection{Move-wasting gadget}
\abstractlater{\subsection{Move-wasting gadget}}

The move-wasting gadget requires White to use both moves to prevent Black from winning on the next turn (unless White can win in the current turn). The move-wasting gadget contains the only black king, thus consuming (and allowing) Black's push each turn. When analyzing the other gadgets, we can thus assume White can only push and Black can only move. The move-wasting gadget comprises the entire bottom three rows of the board, but pieces only move in the far-right portion. Figure~\ref{figure:move_wasting_gadget_1} shows the initial state of the gadget.
\later{
Throughout this analysis, we assume White cannot win in one turn; Section~\ref{sec:reward-gadget}, which analyzes the reward gadget, describes the position in which White can immediately win in one turn, and can therefore disregard the threat from Black in the move-wasting gadget.
}

\begin{figure}
  \subcaptionbox{\label{figure:move_wasting_gadget_1} Initial state}
    {\includegraphics[scale=.5]{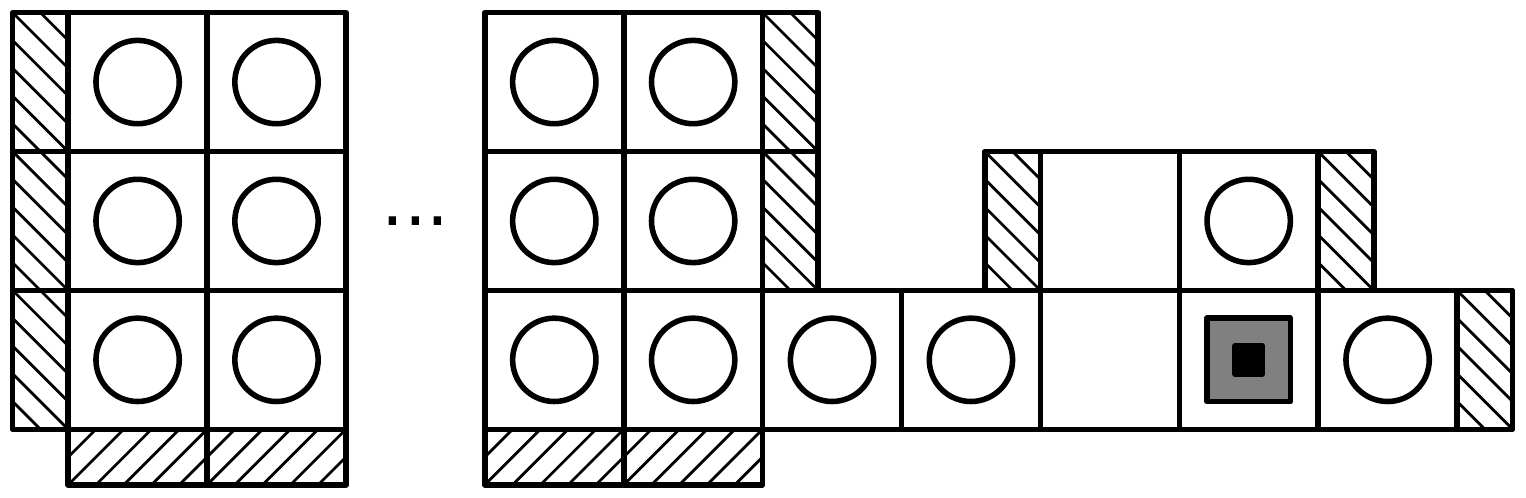}}
  \subcaptionbox{\label{figure:move_wasting_gadget_2} One white turn after (\subref{figure:move_wasting_gadget_1})}
    {\includegraphics[scale=.5]{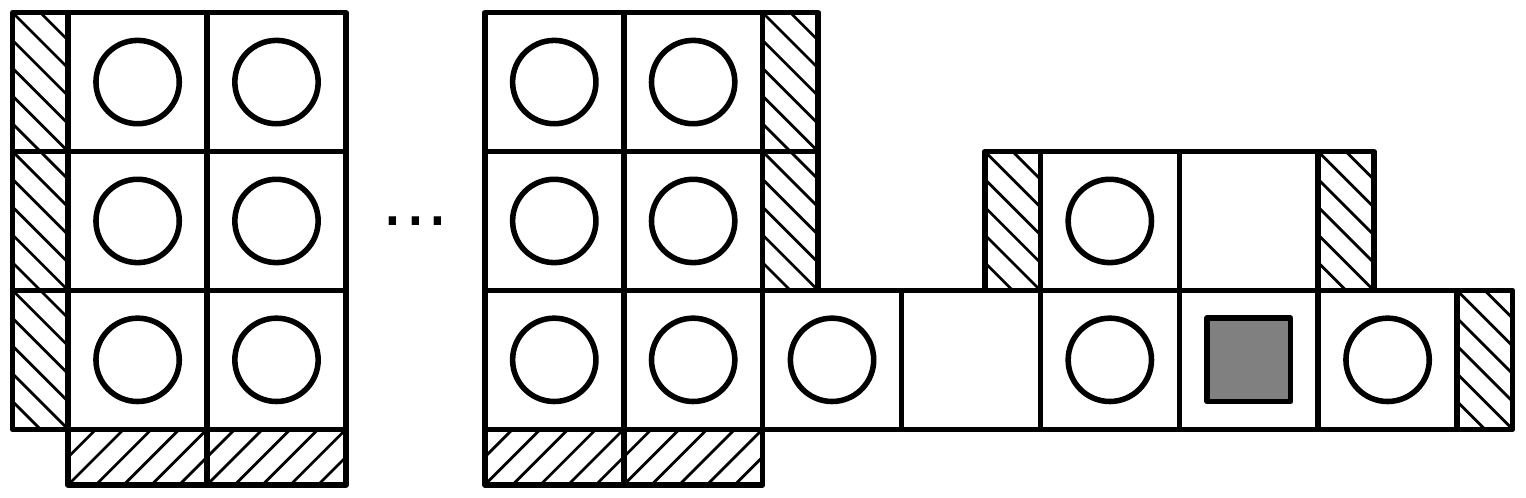}}

  \medskip

  \subcaptionbox{\label{figure:move_wasting_gadget_3} One black turn after (\subref{figure:move_wasting_gadget_2})}
    {\includegraphics[scale=.5]{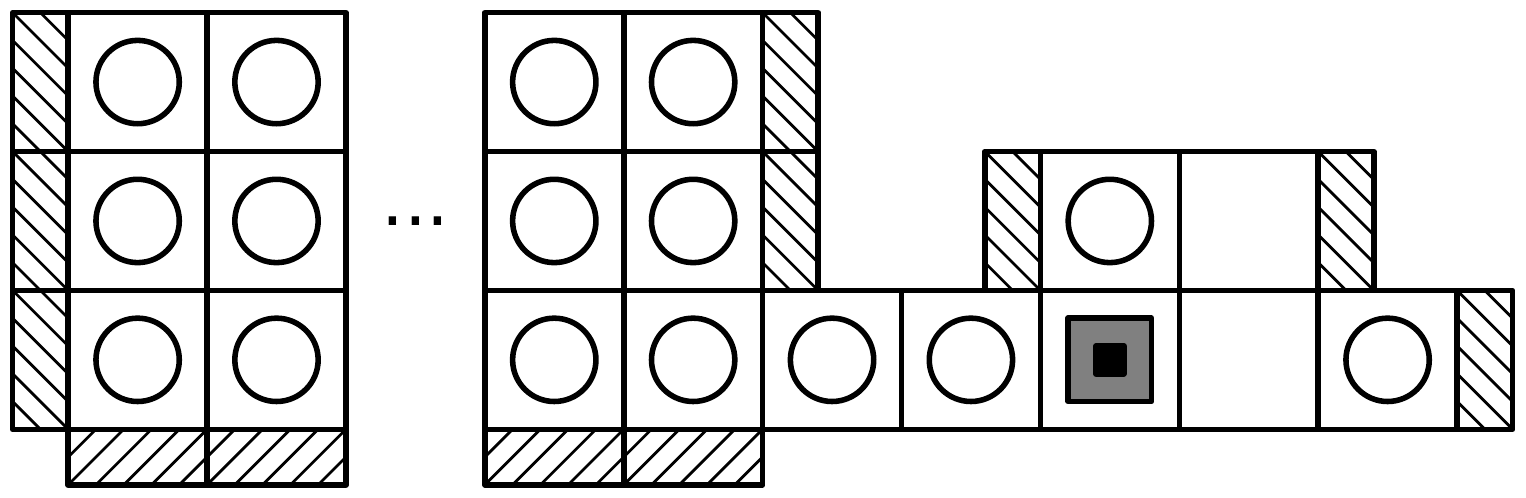}}
  \subcaptionbox{\label{figure:move_wasting_gadget_4} One white turn after (\subref{figure:move_wasting_gadget_3})}
    {\includegraphics[scale=.5]{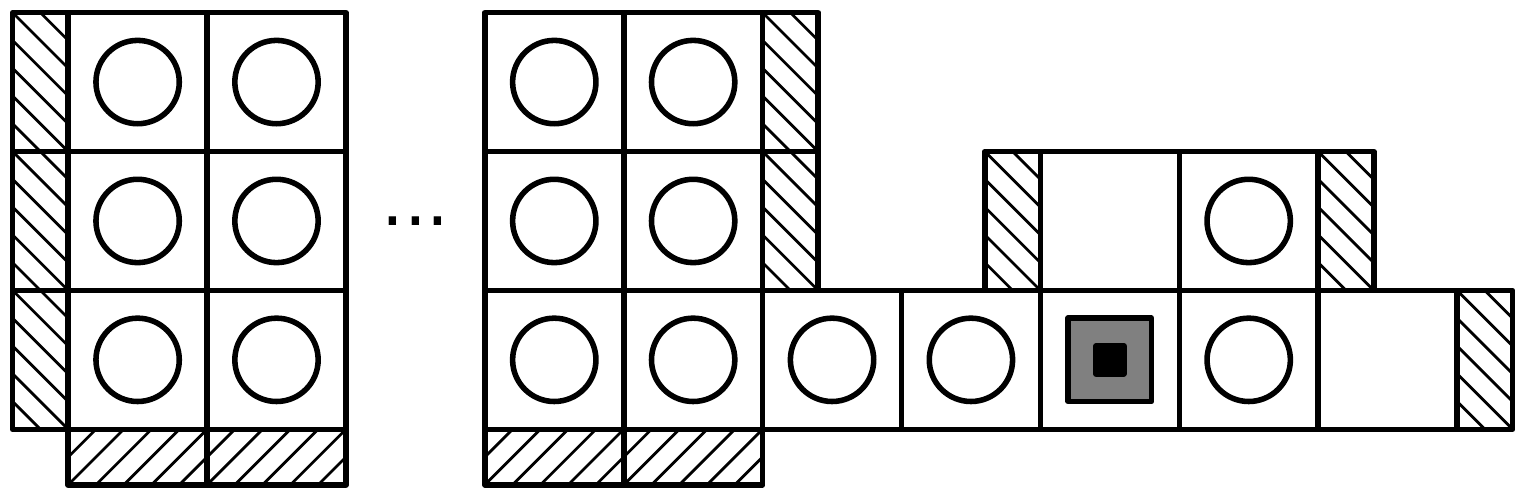}}
  \caption{The move-wasting gadget.}
  \label{figure:move_wasting_gadget}
\end{figure}

\later{

In the initial state, the anchor is on the black king, so it is White's turn. White must move the pawn above the black king to avoid losing next turn. There are only two reachable empty squares, both in the column left of the black king. If the other square in that column remains empty, Black can move the black king into it and push the white pawn in that column off the board. Thus White must fill the other square in that column, and the only way to do so is to move the pawn two columns left of the white king one square right. Figure~\ref{figure:move_wasting_gadget_2} shows the resulting position (after White pushes elsewhere in the board).


Black's only legal push is to the left, resulting in the position shown in Figure~\ref{figure:move_wasting_gadget_3}.


The rightmost four columns in Figure~\ref{figure:move_wasting_gadget_3} are simply the reflection of those columns in Figure~\ref{figure:move_wasting_gadget_1}, so by the same argument White must fill the column to the right of the black king, resulting in Figure~\ref{figure:move_wasting_gadget_4}.


Again, the rightmost four columns of Figures~\ref{figure:move_wasting_gadget_4} and~\ref{figure:move_wasting_gadget_2} are reflections of each other. Black's only legal push is to the right, restoring the gadget to the initial state shown in Figure~\ref{figure:move_wasting_gadget_1}. Thus until White can win in one turn, White must use both moves in the move-wasting gadget, and at all times Black must (and can) push in the move-wasting gadget. In the analysis of the remaining gadgets, if the white king reaches a position from which it cannot push, we conclude that White immediately loses, because if White moves a pawn or the king into position to push, Black can win on the next turn as explained above.

}

\subsection{Variable gadgets}
\abstractlater{\subsection{Variable gadgets}}

\ifabstract
\begin{figure}
  \centering
  \begin{minipage}{0.38\textwidth}
    \centering
    \includegraphics[scale=.4]{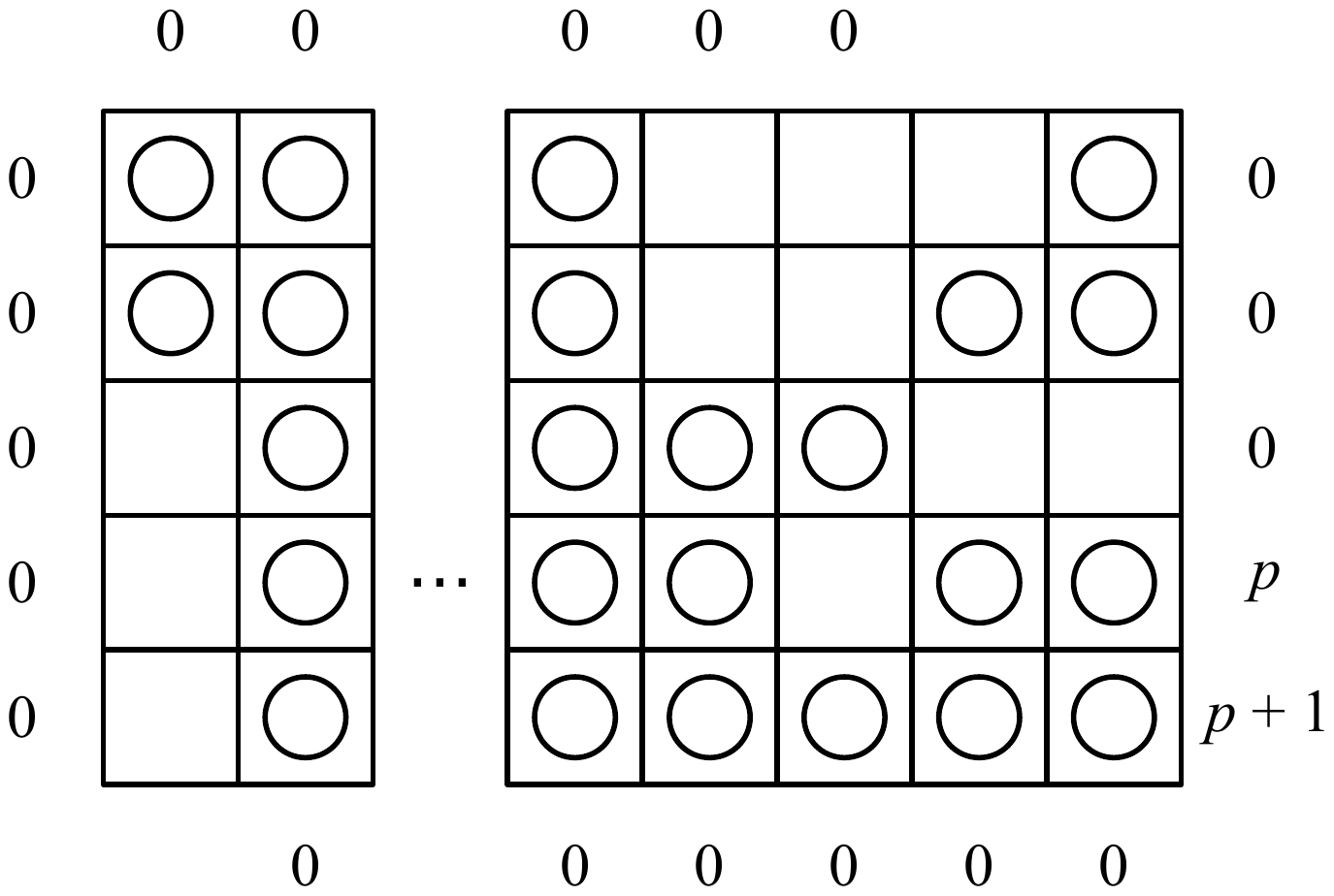}
    \caption{Existential variable gadget.}
    \label{figure:existential_variable_gadget}
  \end{minipage}\hfill
  \begin{minipage}{0.6\textwidth}
    \centering
    \includegraphics[scale=.4]{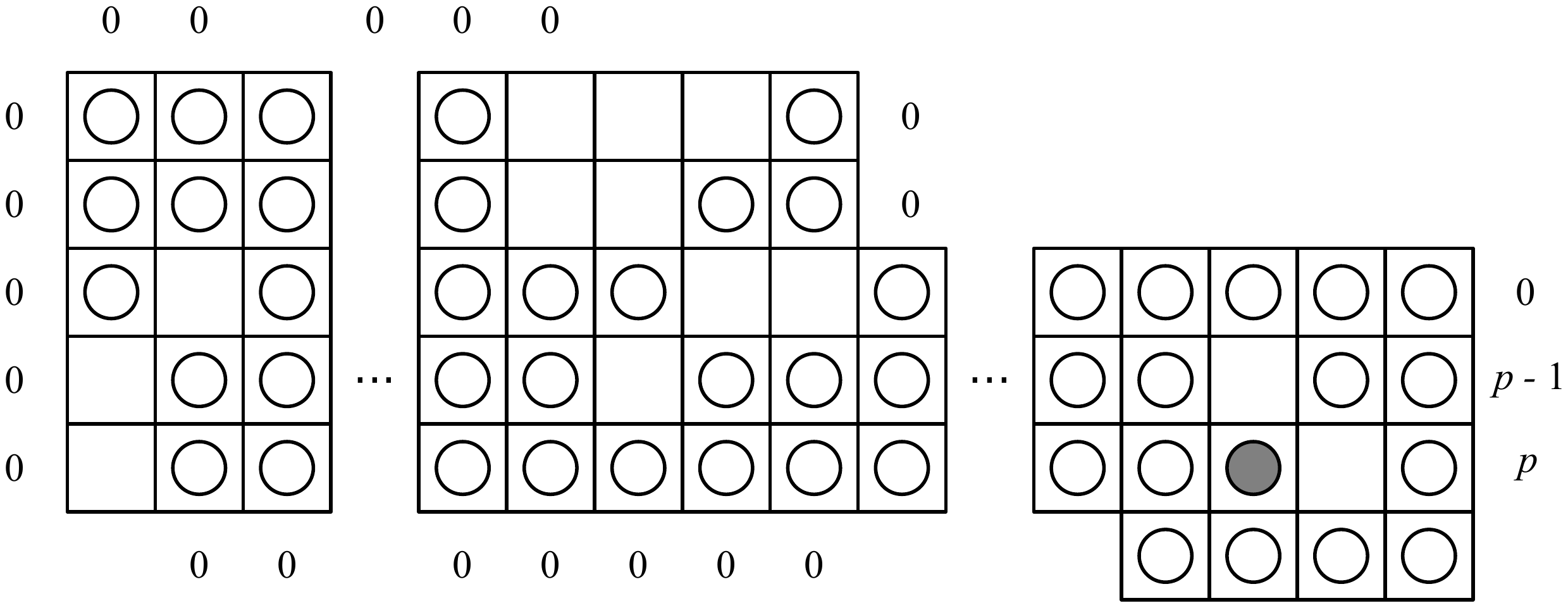}
    \caption{Universal variable gadget.}
    \label{figure:universal_variable_gadget}
  \end{minipage}
\end{figure}
\fi

The existential variable gadget forces White to fill all empty squares in one row of the connection block, corresponding to setting the value of that variable. The universal variable gadget allows Black to choose the value of the corresponding variable, then forces White to similarly fill a row of empty squares.
\later{
We first analyze a core gadget; the existential variable gadget is a minor variant of the core gadget and its correctness follows directly, while the universal variable gadget has an additional component to allow Black to choose the variable's value. Throughout our analysis, we take advantage of the board being filled with white pawns to limit the number of pieces that can leave the gadget.

The core gadget occupies a rectangle of width $p+5$ and height $5$. When instantiated in the reduction, the gadget lies entirely within the \emph{variable gadget I} block. Integer $p$ is one more than the maximum number of occurrences of a literal in the input formula. The initial state of the core gadget is shown in Figure~\ref{figure:variable_gadget_0}. Each number along the boundary of the figure gives the number of empty squares outside the gadget in that direction, and thus an upper bound on the number of pieces that can leave the gadget via that edge.

\begin{figure}
    \centering
    \includegraphics[scale=.5]{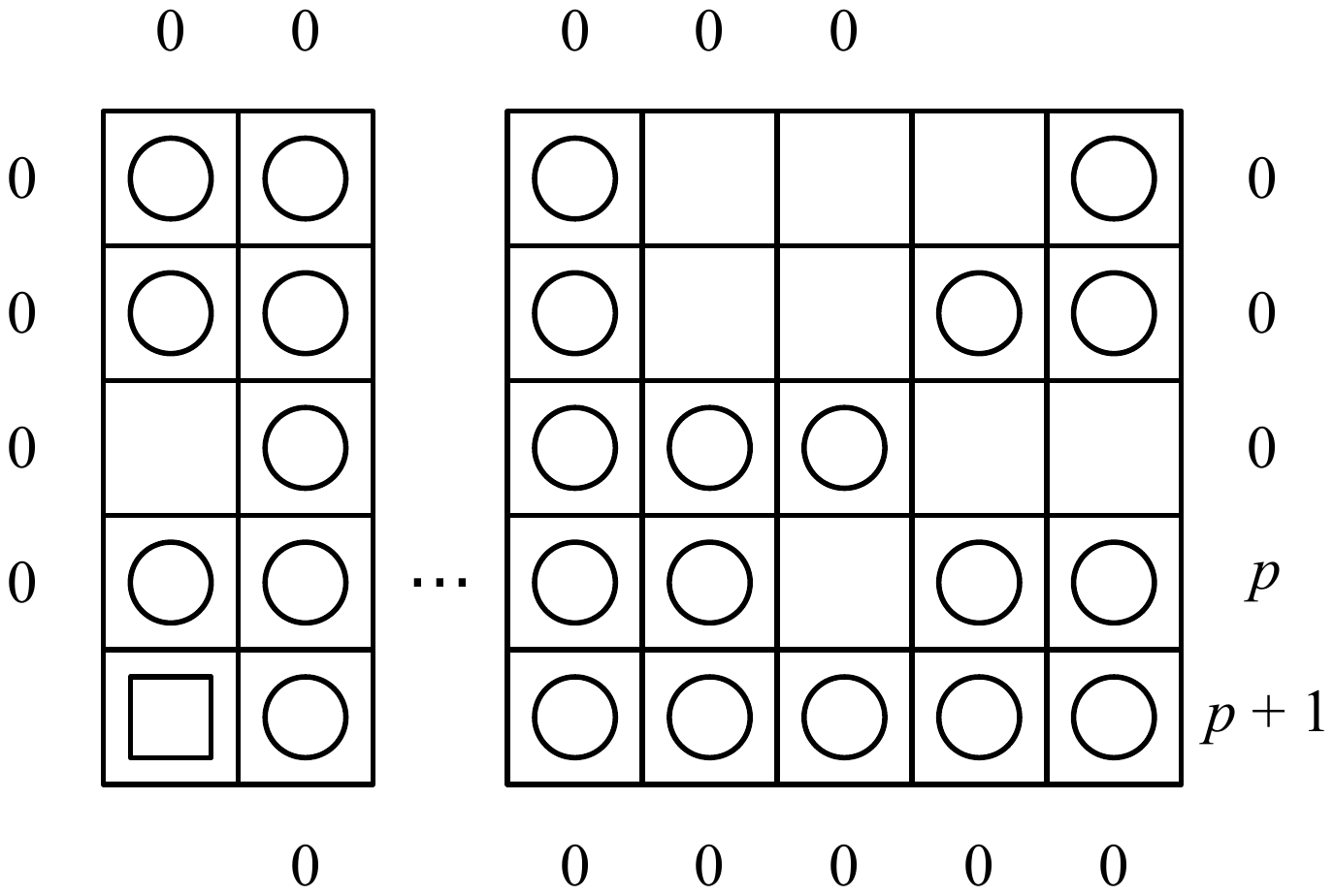}
    \caption{The initial configuration of the core gadget together with upper bounds on the number of pushes out of the gadget at each boundary edge. Omitted columns do not have a given upper bound.}
    \label{figure:variable_gadget_0}
\end{figure}

The following lemma summarizes the constraints we prove about the core gadget.

\begin{lemma}
\label{thm:core}
Starting from the position in Figure~\ref{figure:variable_gadget_0}, and assuming the white king does not push down or left from this position,
\begin{enumerate}[label=(\roman*),ref=\ref{thm:core}(\roman*)]
\item\label{thm:core-chaining} the white king leaves in the second-rightmost column, and
\item when the white king leaves either
\begin{enumerate}[ref=\ref{thm:core}(\roman{enumi}\alph*)]
\item\label{thm:core-true} the gadget is as shown in Figure~\ref{figure:variable_gadget_O1} and $p+1$ white pawns have been pushed out along the bottom row of the gadget, or
\item\label{thm:core-false} the gadget is as shown in Figure~\ref{figure:variable_gadget_O2} and $p$ white pawns have been pushed out along the second-to-bottom row of the gadget,
\end{enumerate}
\item\label{thm:core-integrity} and no other pieces have left the gadget. \xxx{do we use this?}\xxx[Mikhail]{I don't know, but it seems worth including just because if we do this theorem statement fully describes the situation instead of partially describing it.}
\end{enumerate}
\end{lemma}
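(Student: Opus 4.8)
The plan is to track the state of the gadget push-by-push, showing that the white king's trajectory is forced except for a single binary choice that encodes the variable's truth value. Recall from the move-wasting gadget analysis that, outside that gadget, White may only push and the king therefore advances exactly one push per turn; so a ``trajectory'' is just a sequence of king pushes, and it suffices to argue that at each reachable position the set of legal pushes is exactly the intended one. I would organize the whole argument as an induction on the number of pushes made, carrying an explicit invariant that records the current gadget configuration (which holes remain and where the king sits) so that the same bookkeeping can be reapplied at each step.

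First I would establish the base case: from the initial configuration (Figure~\ref{figure:variable_gadget_0}), with down- and left-pushes forbidden by hypothesis, I would enumerate the king's candidate pushes and eliminate all but the intended one(s). The central tool is the table of edge bounds annotated in Figure~\ref{figure:variable_gadget_0}: a push is legal only if every piece it displaces either lands in a hole or exits through a boundary edge lacking a side rail, and the number of pieces that can leave through a given edge is capped by the number of empty squares beyond it. Any push that would force more pieces off an edge than that edge admits, or through a side rail, is illegal; this is what rules out the lateral escapes and confines the king to the gadget's interior corridor of holes.

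Next I would carry out the inductive step, walking the king along that forced corridor and recording the induced pawn displacements, until it reaches the designated branch square, where two legal pushes are available. Here I split into the two cases of claim~(ii). In the ``true'' case I would verify that taking one branch fills the bottom row and ejects exactly $p+1$ white pawns along it, leaving the gadget as in Figure~\ref{figure:variable_gadget_O1}; in the ``false'' case, that the other branch ejects exactly $p$ pawns along the second-to-bottom row, leaving Figure~\ref{figure:variable_gadget_O2}. A single conserved count---holes consumed versus pawns pushed out---gives both exact totals and confirms that each terminal configuration leaves the king exiting via the second-rightmost column, establishing claim~(\ref{thm:core-chaining}). Claim~(\ref{thm:core-integrity}) then follows by re-reading the edge bounds along the verified trajectory: every push was checked to send pieces only into interior holes or out the two intended bottom edges, so no piece ever leaves through any other edge.

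The main obstacle is the forcedness argument---showing that at \emph{every} intermediate position the only legal pushes are the intended ones, not merely at the start. The chief risk is an unintended push that becomes legal because an earlier push opened a new hole or exposed a previously shielded unrailed edge. I would therefore state the per-step invariant explicitly and verify that each forced push preserves it, so that the base-case elimination of alternative pushes extends verbatim to every subsequent position; getting this invariant to simultaneously pin down the king's square, the surviving holes, and the exit counts is the delicate part of the proof.
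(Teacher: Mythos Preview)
Your approach---push-by-push case analysis using the edge-capacity annotations of Figure~\ref{figure:variable_gadget_0} to prune illegal pushes---is exactly what the paper does, and the invocation of the move-wasting gadget to reduce White's turn to a single push is correct. Two structural points deserve correction, though.

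First, the binary choice is not reached after a forced corridor; it is the \emph{first} push. From the initial square, with down and left excluded by hypothesis, both ``right'' and ``up'' are legal, and this is precisely the true/false branch. Each branch is then forced for several pushes (right repeatedly, then up twice), so your induction should begin with the split and then run two parallel forced chains, not a single corridor leading to a later split.

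Second, your framing as ``a single binary choice'' understates the case analysis. In each of the two main branches there is a further position with two legal pushes, only one of which continues toward the exit; the other leads in one or two pushes to a configuration with no legal push at all, where White loses. The paper handles these dead ends explicitly (Figures~\ref{figure:variable_gadget_2aa} and~\ref{figure:variable_gadget_2ba}). Your invariant-based plan will need to detect and dispose of these sub-branches as well, since they are genuine legal alternatives that must be shown to be losing, not merely illegal.
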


We will construct the existential and universal variable gadgets from the core gadget such that the assumption holds. Lemma~\ref{thm:core-chaining} ensures we can chain variable gadgets together in sequence without the white king escaping. The outcomes implied by Lemma~\ref{thm:core-true} and \ref{thm:core-false} correspond to setting the variable to true or false (respectively) by filling in the empty squares in the connection block that could be used to satisfy a clause gadget for a clause containing the opposite literal; that is, pushing pawns out along the bottom row of a gadget prevents all negative literals from being used to satisfy a clause, and similarly for the second-to-bottom row and positive literals.

\begin{figure}
  \centering
  \begin{minipage}{0.45\linewidth}
    \centering
    \includegraphics[scale=.5]{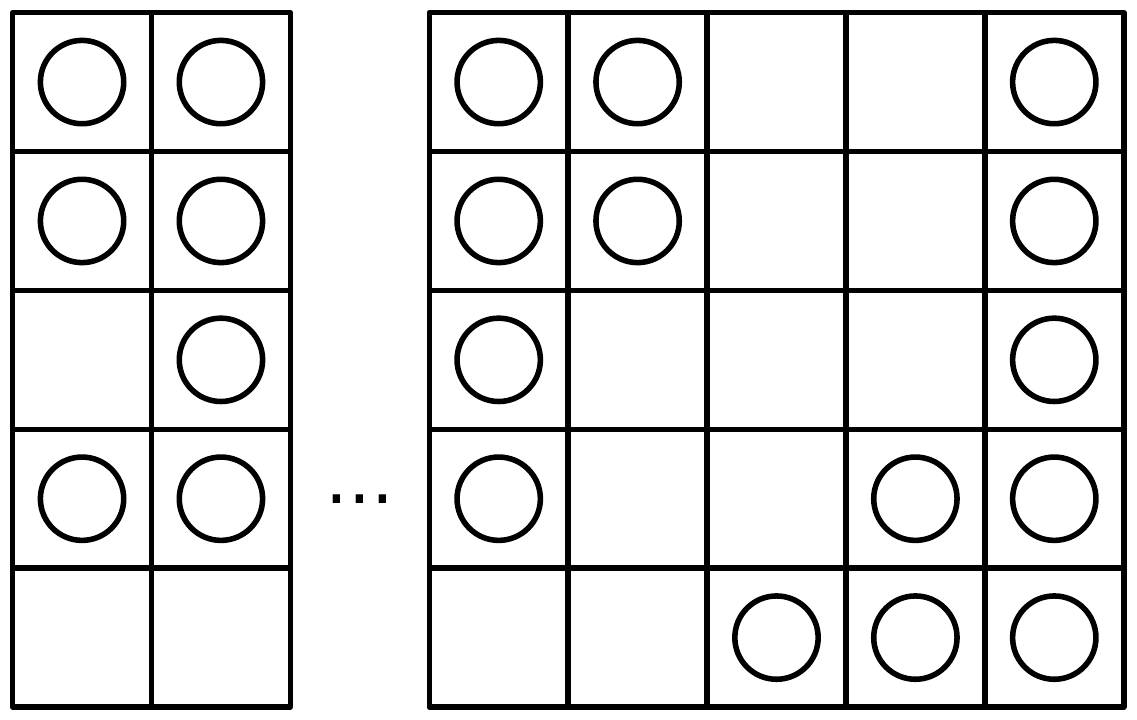}
    \caption{The final configuration of the core gadget after setting the variable to true.}
    \label{figure:variable_gadget_O1}
  \end{minipage}\hfill
  \begin{minipage}{0.45\linewidth}
    \centering
    \includegraphics[scale=.5]{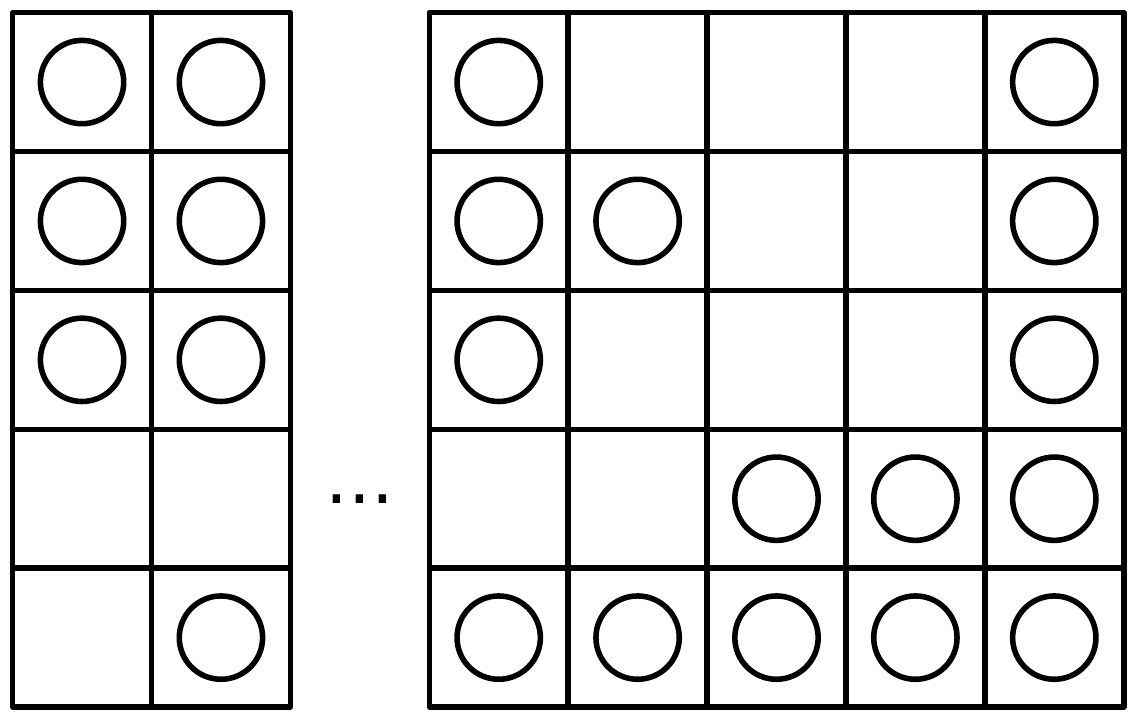}
    \caption{The final configuration of the core gadget after setting the variable to false.}
    \label{figure:variable_gadget_O2}
  \end{minipage}
\end{figure}

\begin{proof}
We proceed by case analysis starting from Figure~\ref{figure:variable_gadget_0}. The move-wasting gadget consumes White's moves, and there are no black pieces in the core gadget, so we need only analyze the sequence of White's pushes.

Suppose the white king first pushes right. Because of the upper bounds along the top and bottom edges of the gadget, the only legal push in the resulting configuration is to the right, and this remains the case until the white king reaches the fourth column from the right of the gadget. At this point $p+1$ pawns have been pushed off the right edge along the bottom row of the gadget, so there are no empty squares remaining in that row, so pushing right is no longer possible and the only legal push is up. Then the only legal push is again up because of the constraints on the left edge of the gadget. Figure~\ref{figure:variable_gadget_1a} shows the result of this sequence of pushes.

\begin{figure}
    \centering
    \raisebox{-.5\height}{\includegraphics[scale=.25]{images/variable_gadget_0}}
    ~~\raisebox{-.5\height}{\scalebox{2}{$\to$}}~~
    \raisebox{-.5\height}{\includegraphics[scale=.25]{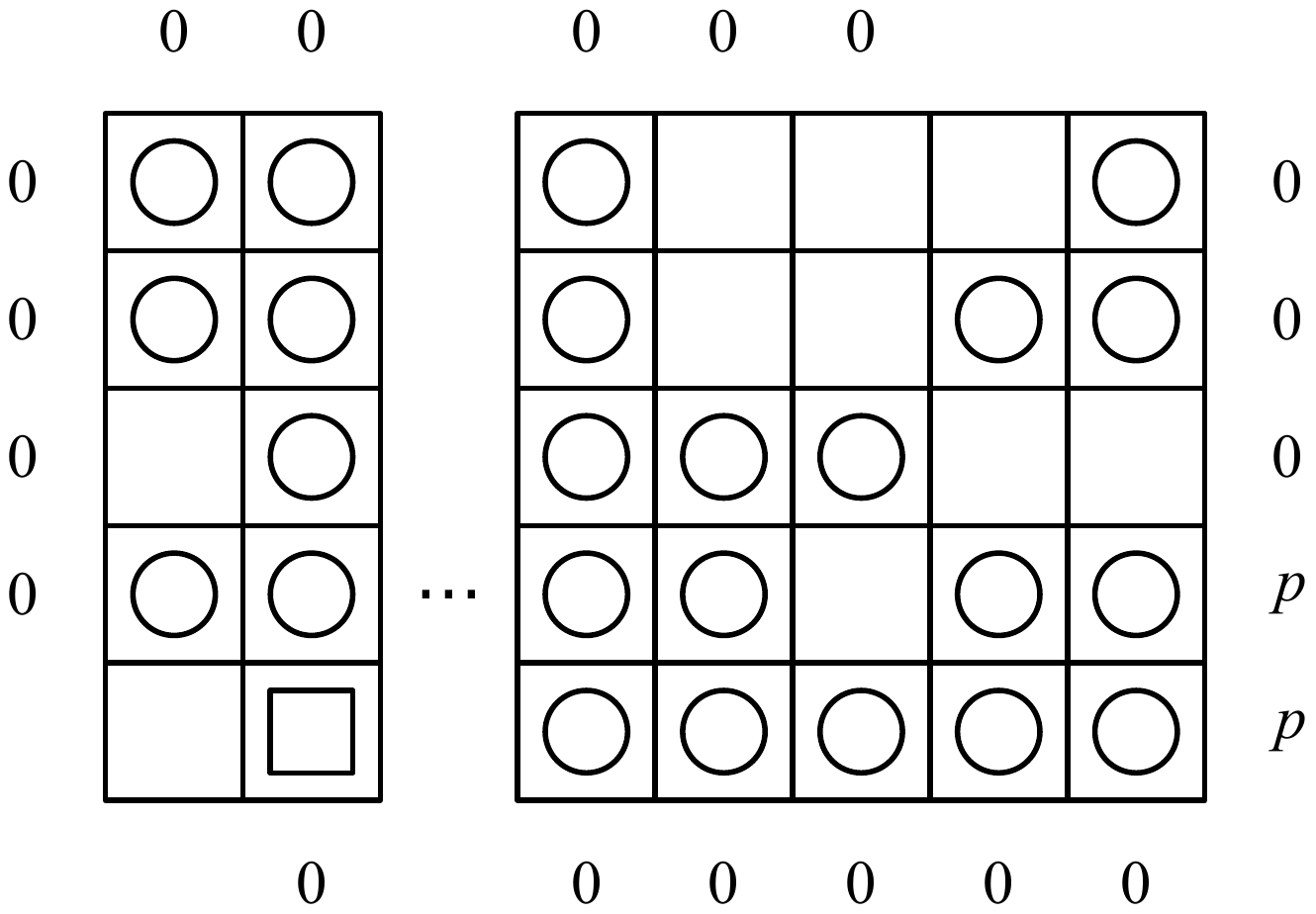}}
    ~~\raisebox{-.5\height}{\scalebox{2}{$\to$}}
    \raisebox{-.5\height}{$\cdots$}
    \raisebox{-.5\height}{\scalebox{2}{$\to$}}~~
    \raisebox{-.5\height}{\includegraphics[scale=.25]{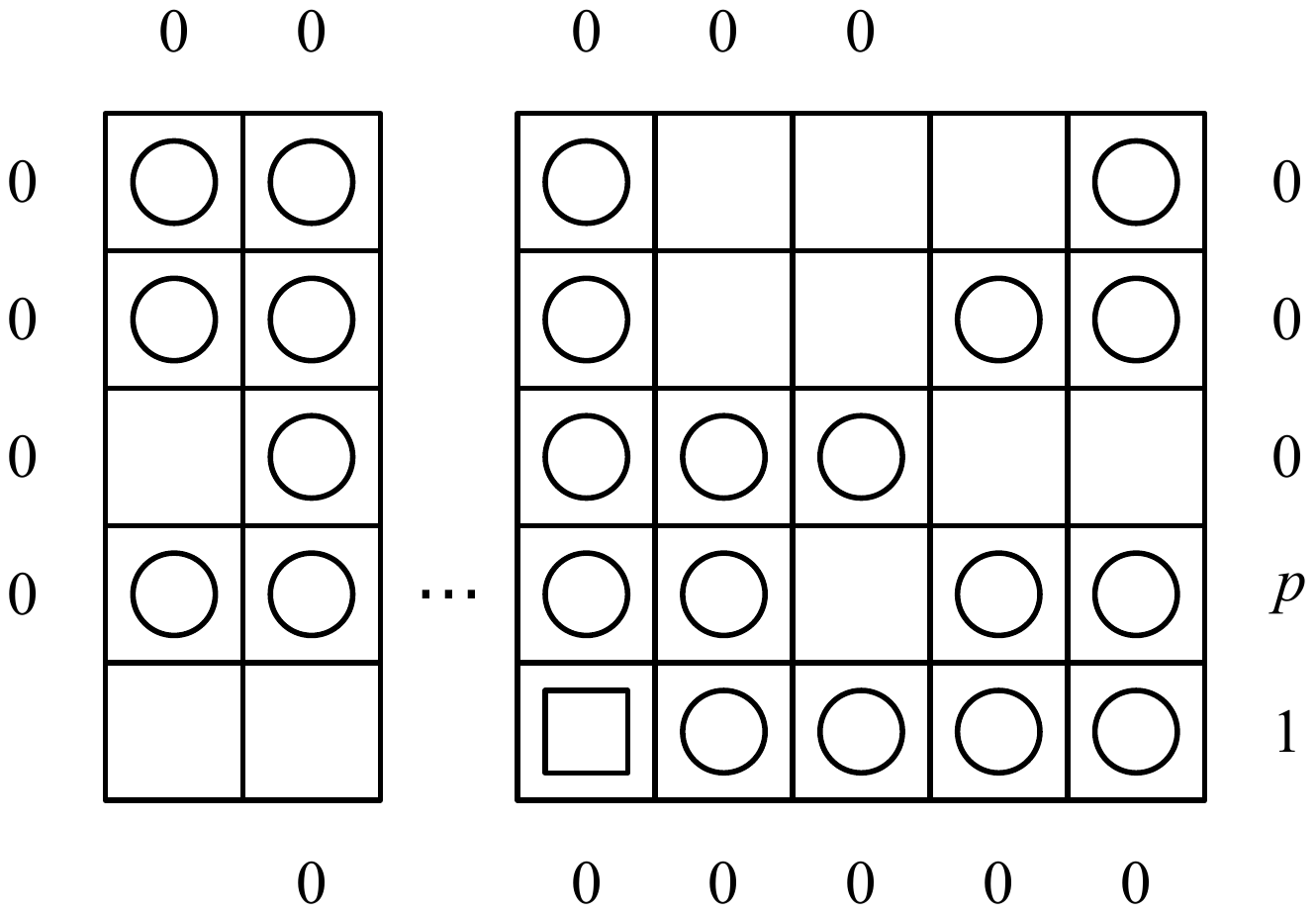}} 
    ~~\raisebox{-.5\height}{\scalebox{2}{$\to$}}~~
    \raisebox{-.5\height}{\includegraphics[scale=.25]{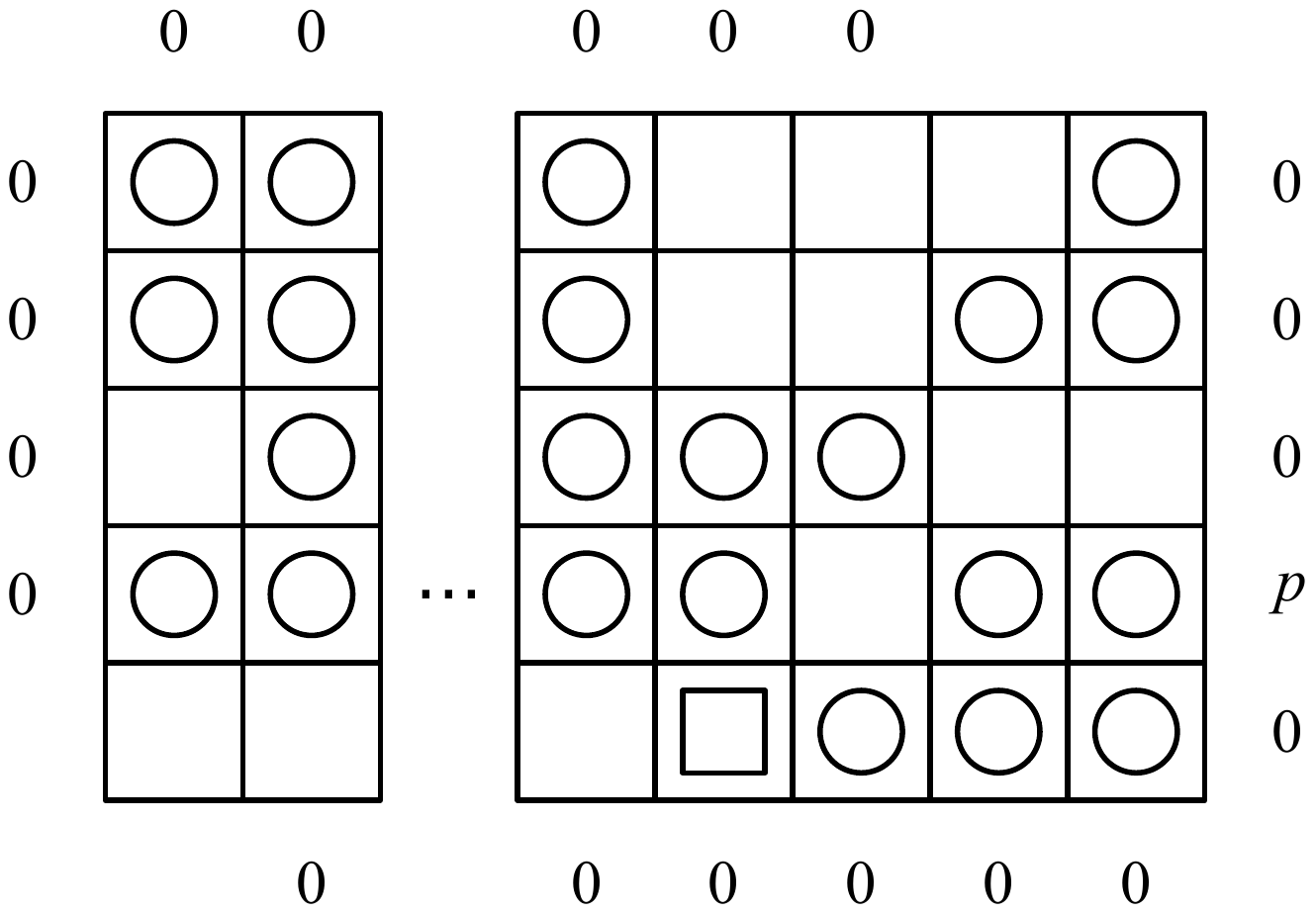}}
    ~~\raisebox{-.5\height}{\scalebox{2}{$\to$}}~~
    \raisebox{-.5\height}{\includegraphics[scale=.25]{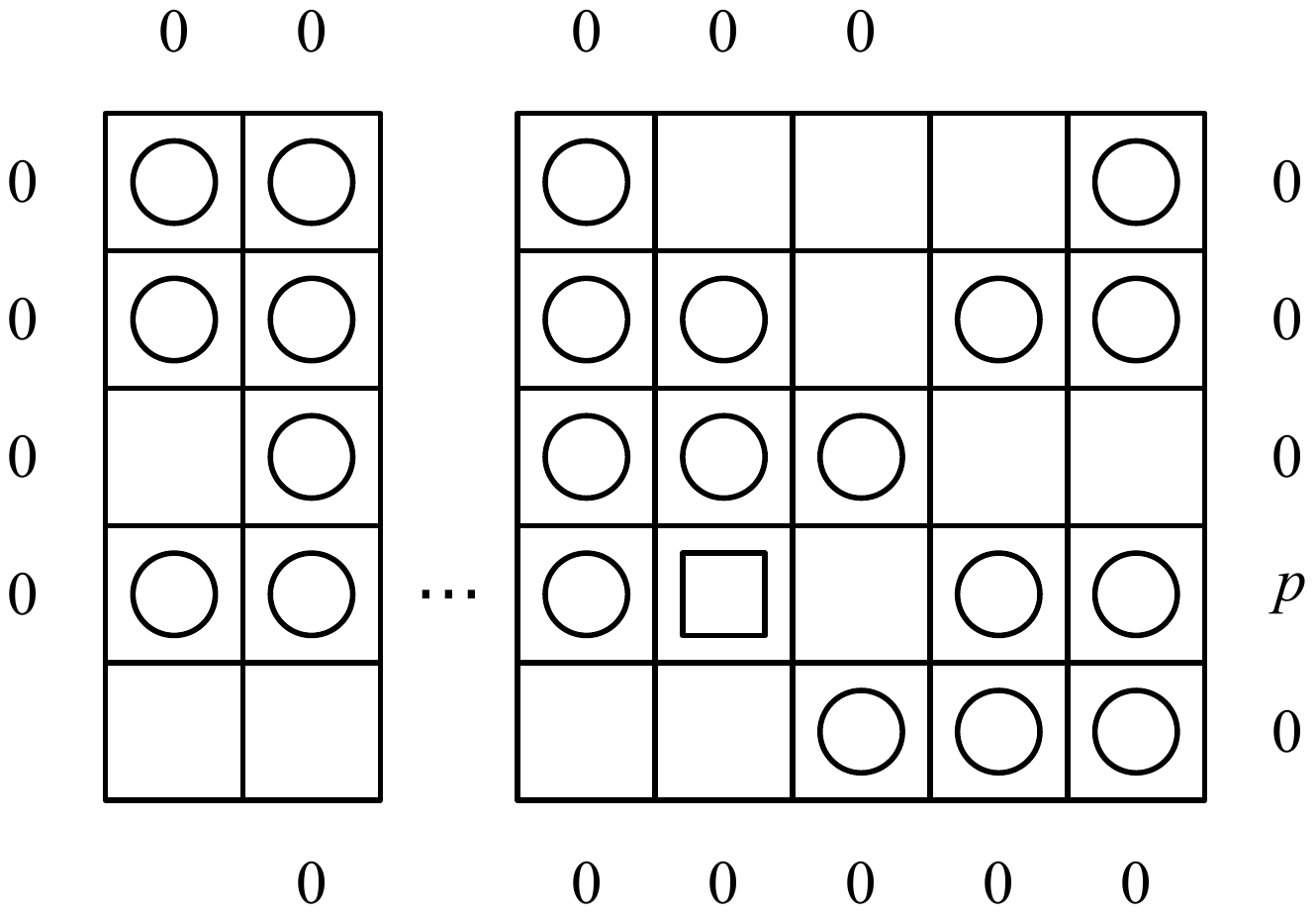}}
    ~~\raisebox{-.5\height}{\scalebox{2}{$\to$}}~~
    \raisebox{-.5\height}{\includegraphics[scale=.25]{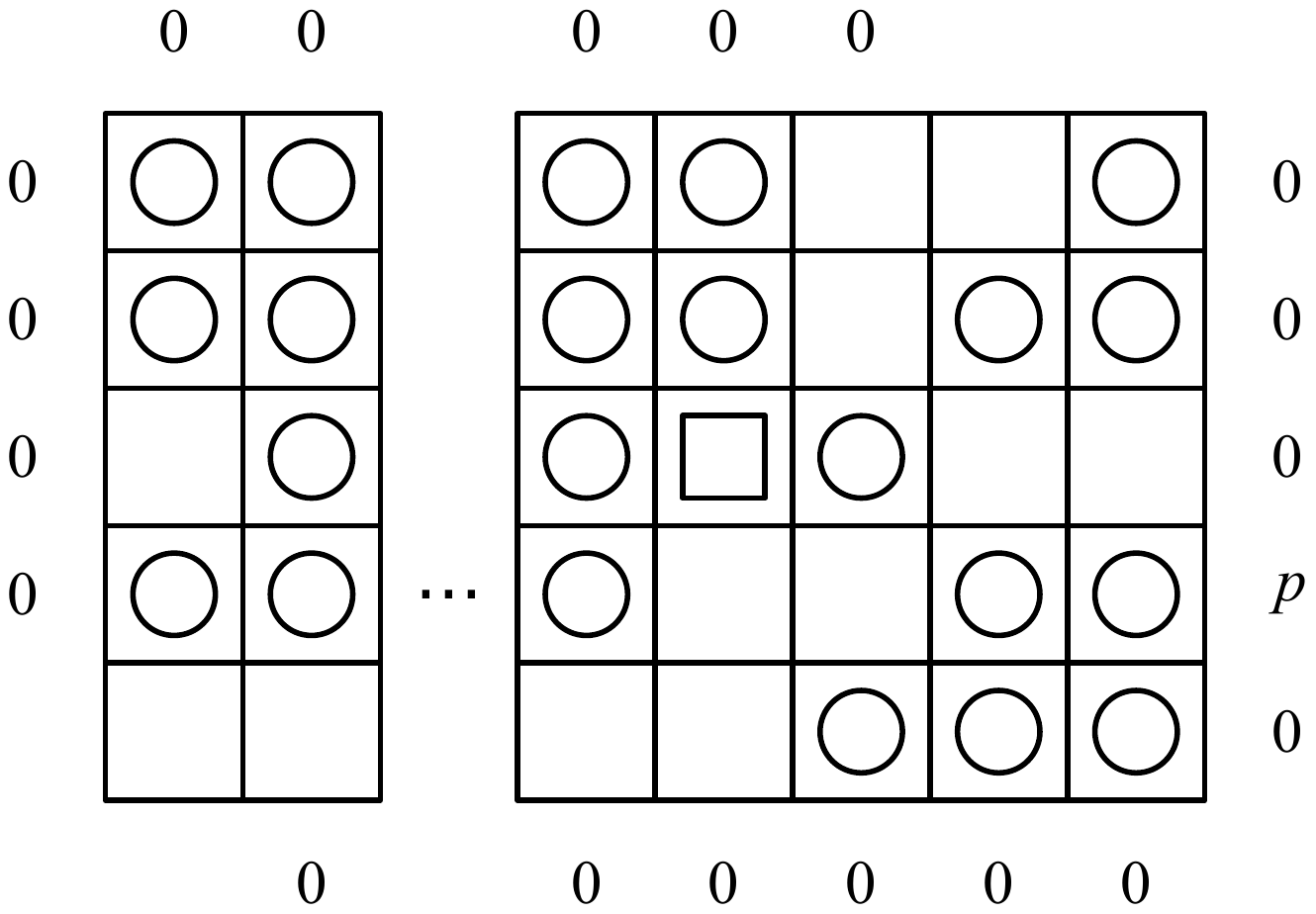}}
    \caption{One possible push sequence starting from the initial state of the core gadget.}
    \label{figure:variable_gadget_1a}
\end{figure}

If the white king pushes left from this position, the only possible next push is down, after which there are no legal pushes, resulting in a loss for White. Figure~\ref{figure:variable_gadget_2aa} shows this sequence of pushes.

\begin{figure}
    \centering
    \raisebox{-.5\height}{\includegraphics[scale=.25]{images/variable_gadget_5a}}
    ~~\raisebox{-.5\height}{\scalebox{2}{$\to$}}~~
    \raisebox{-.5\height}{\includegraphics[scale=.25]{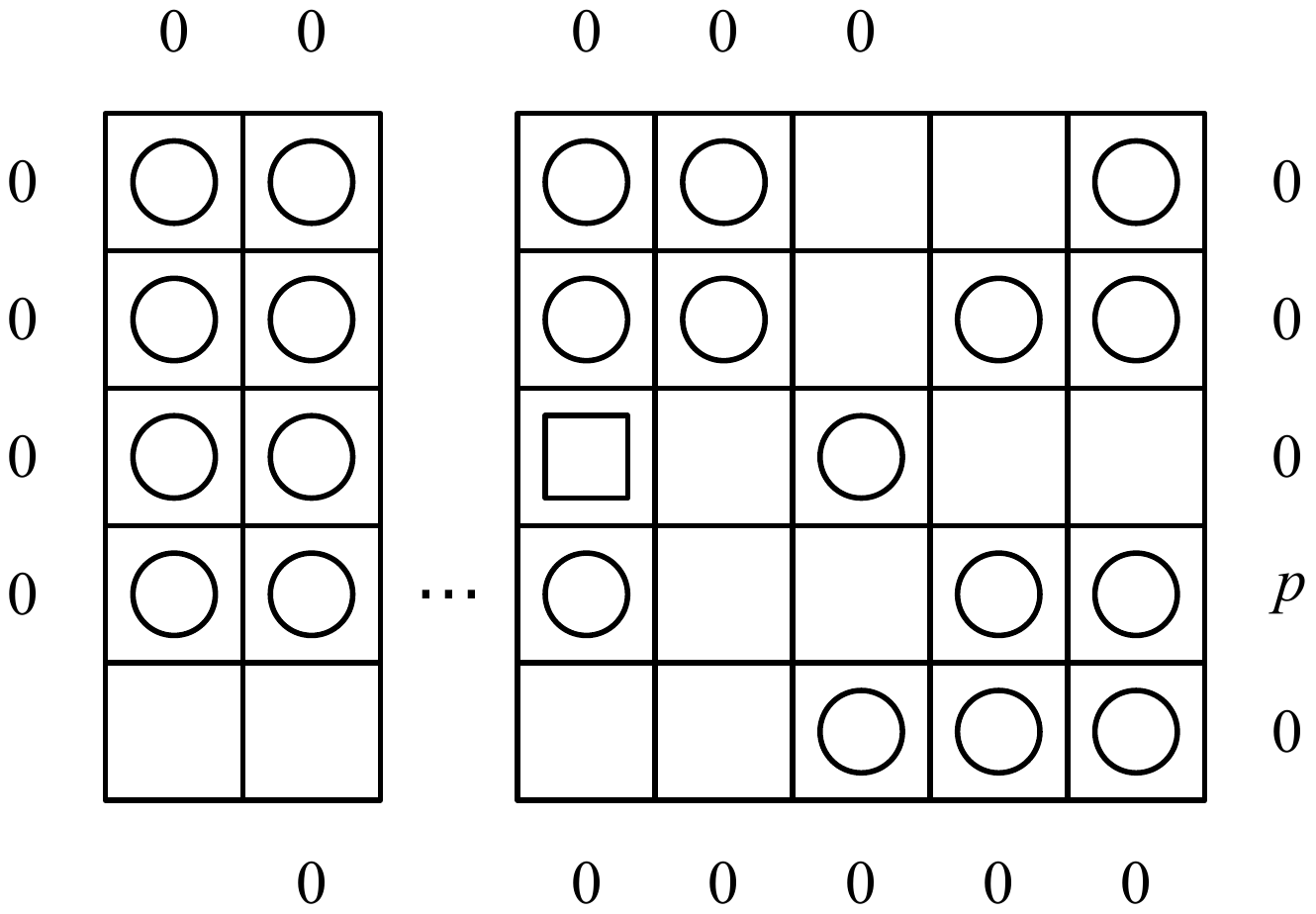}}
    ~~\raisebox{-.5\height}{\scalebox{2}{$\to$}}~~
    \raisebox{-.5\height}{\includegraphics[scale=.25]{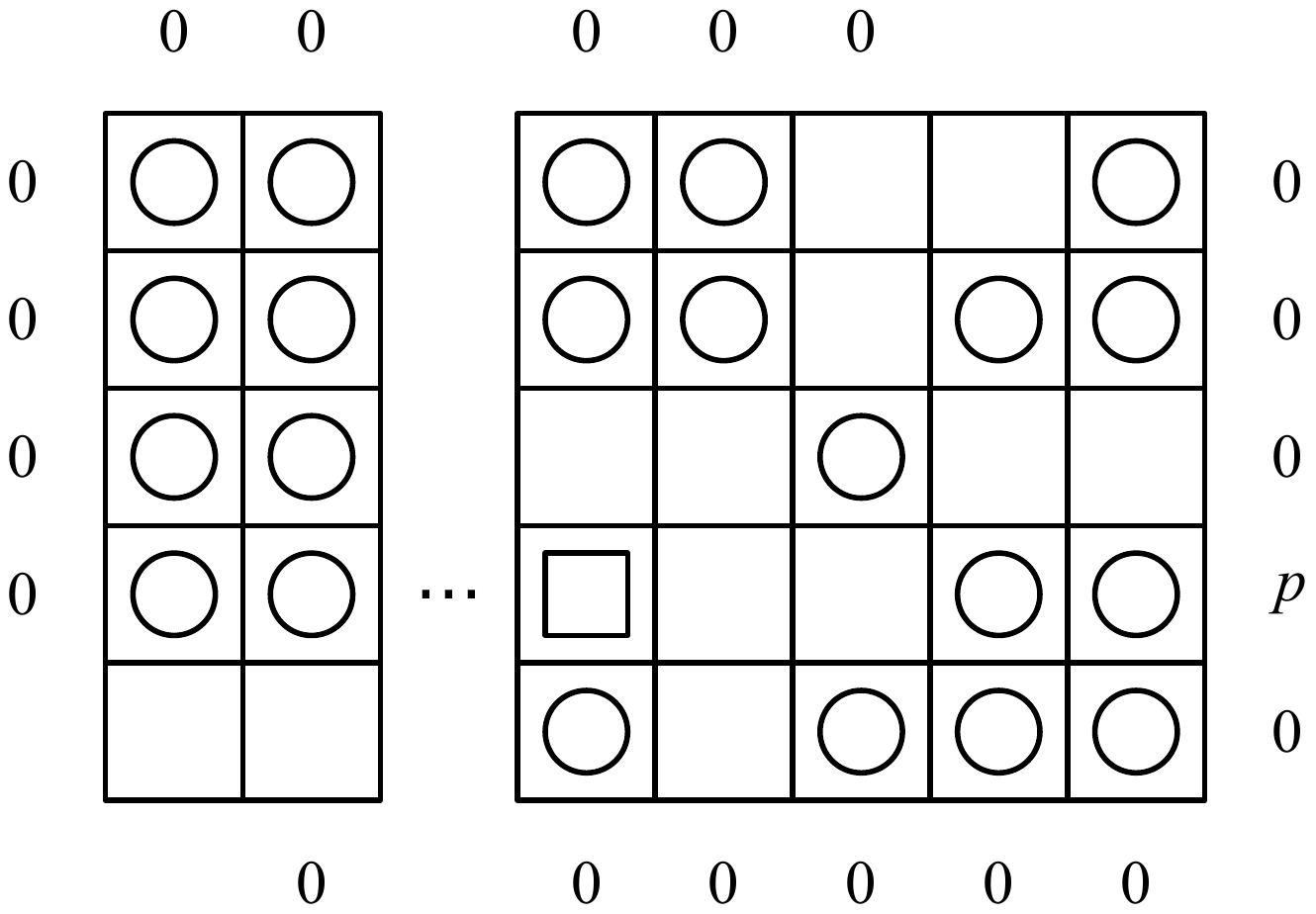}}
    \caption{The result of pushing left and down from the last position in Figure~\ref{figure:variable_gadget_2aa}. White has no legal pushes in the final position.}
    \label{figure:variable_gadget_2aa}
\end{figure}

The only other legal push from the last position in Figure~\ref{figure:variable_gadget_1a} is to the right, after which pushes right, up, up and up again are the only legal pushes. This sequence results in the white king, preceded by a white pawn, exiting the top of the gadget in the second-rightmost column, as desired by Lemma~\ref{thm:core-chaining}. Figure~\ref{figure:variable_gadget_2ab} shows the positions resulting from this sequence. The final position reached is the position in Figure~\ref{figure:variable_gadget_O1}, $p+1$ pawns were pushed out of the gadget to the right along the bottom row, as desired by Lemma~\ref{thm:core-true}, and and no other pieces were pushed out of the gadget, as desired by Lemma~\ref{thm:core-integrity}.

\begin{figure}
    \centering
    \raisebox{-.5\height}{\includegraphics[scale=.25]{images/variable_gadget_5a}}
    ~~\raisebox{-.5\height}{\scalebox{2}{$\to$}}~~
    \raisebox{-.5\height}{\includegraphics[scale=.25]{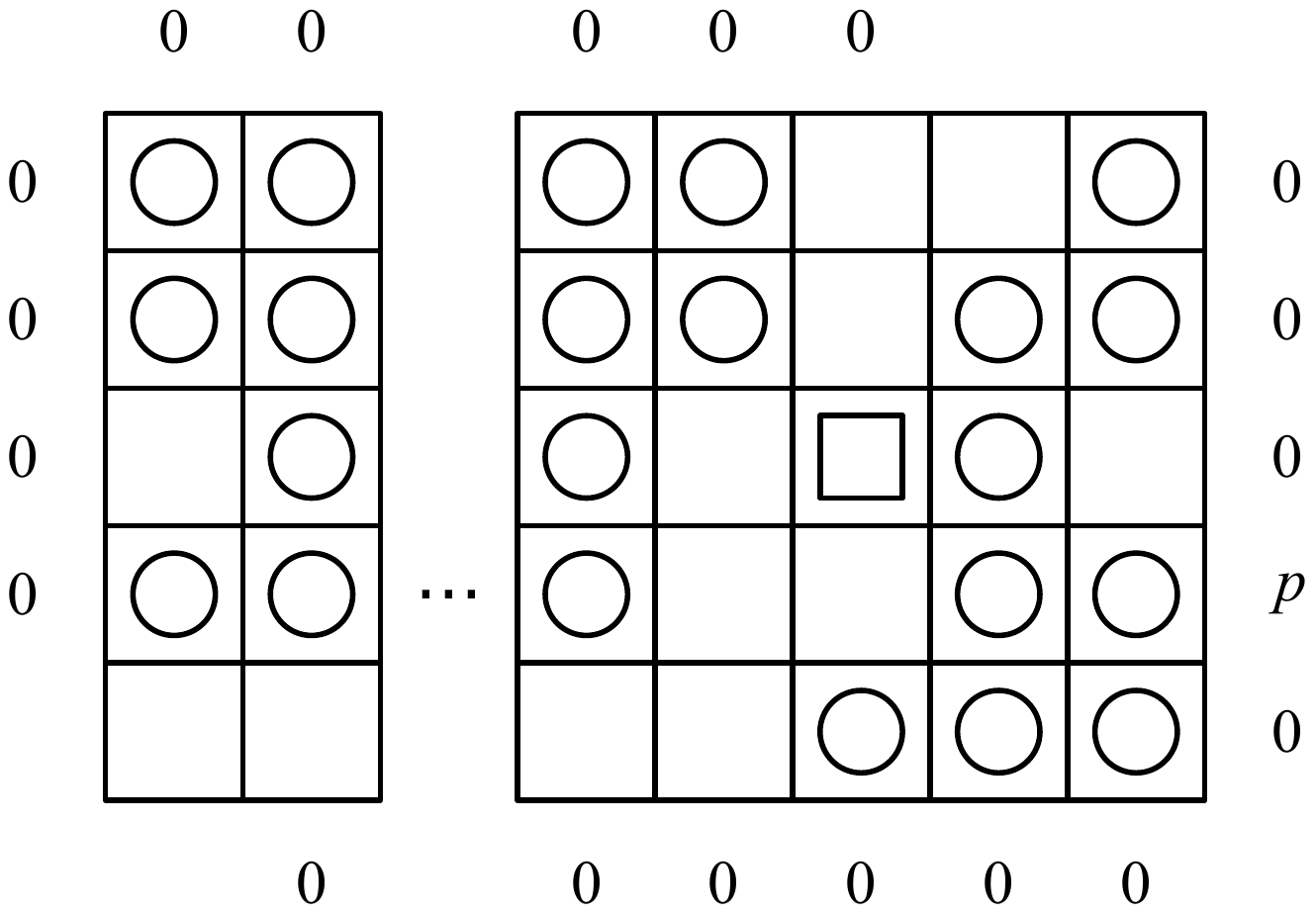}}
    ~~\raisebox{-.5\height}{\scalebox{2}{$\to$}}~~
    \raisebox{-.5\height}{\includegraphics[scale=.25]{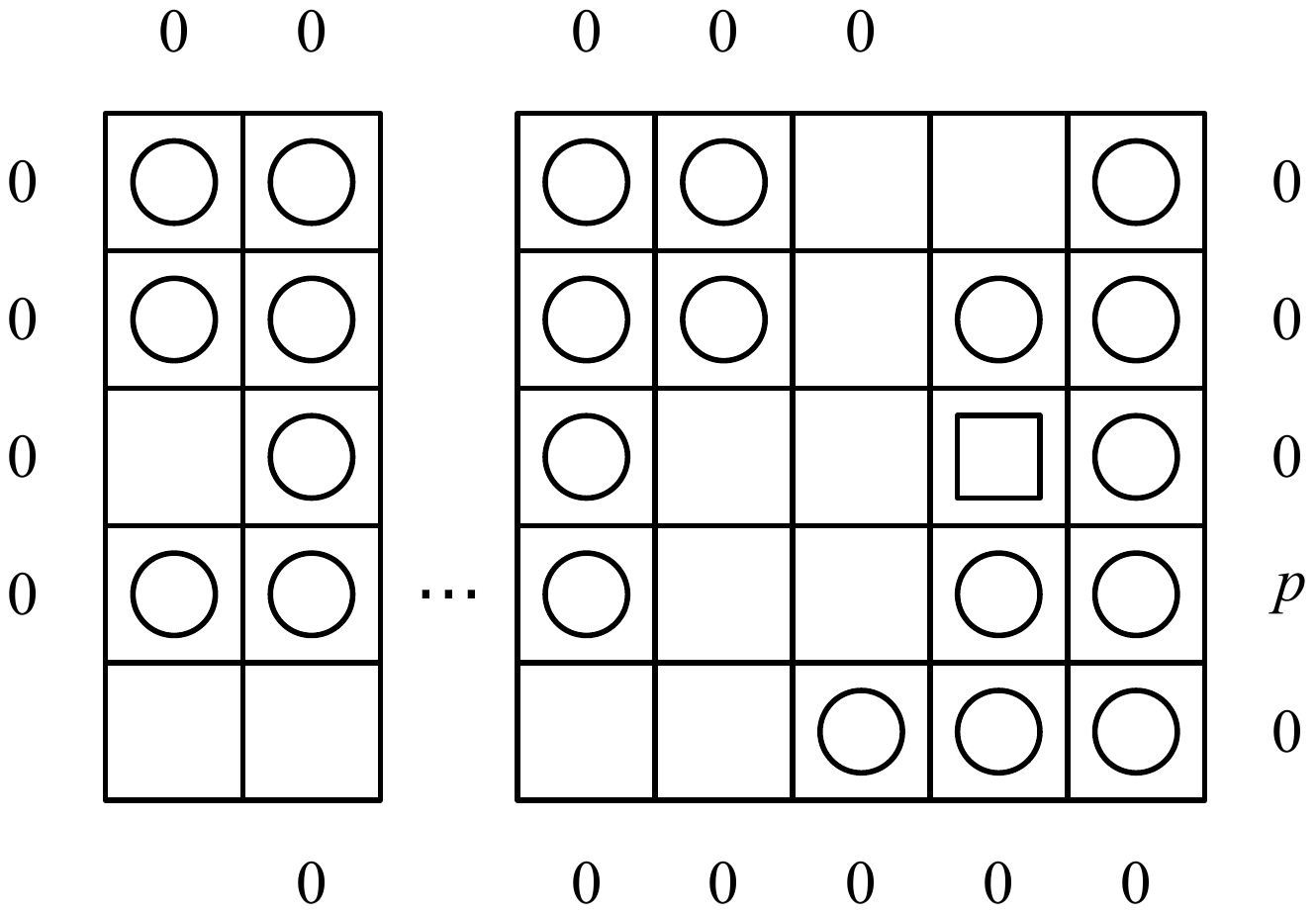}}
    ~~\raisebox{-.5\height}{\scalebox{2}{$\to$}}~~
    \raisebox{-.5\height}{\includegraphics[scale=.25]{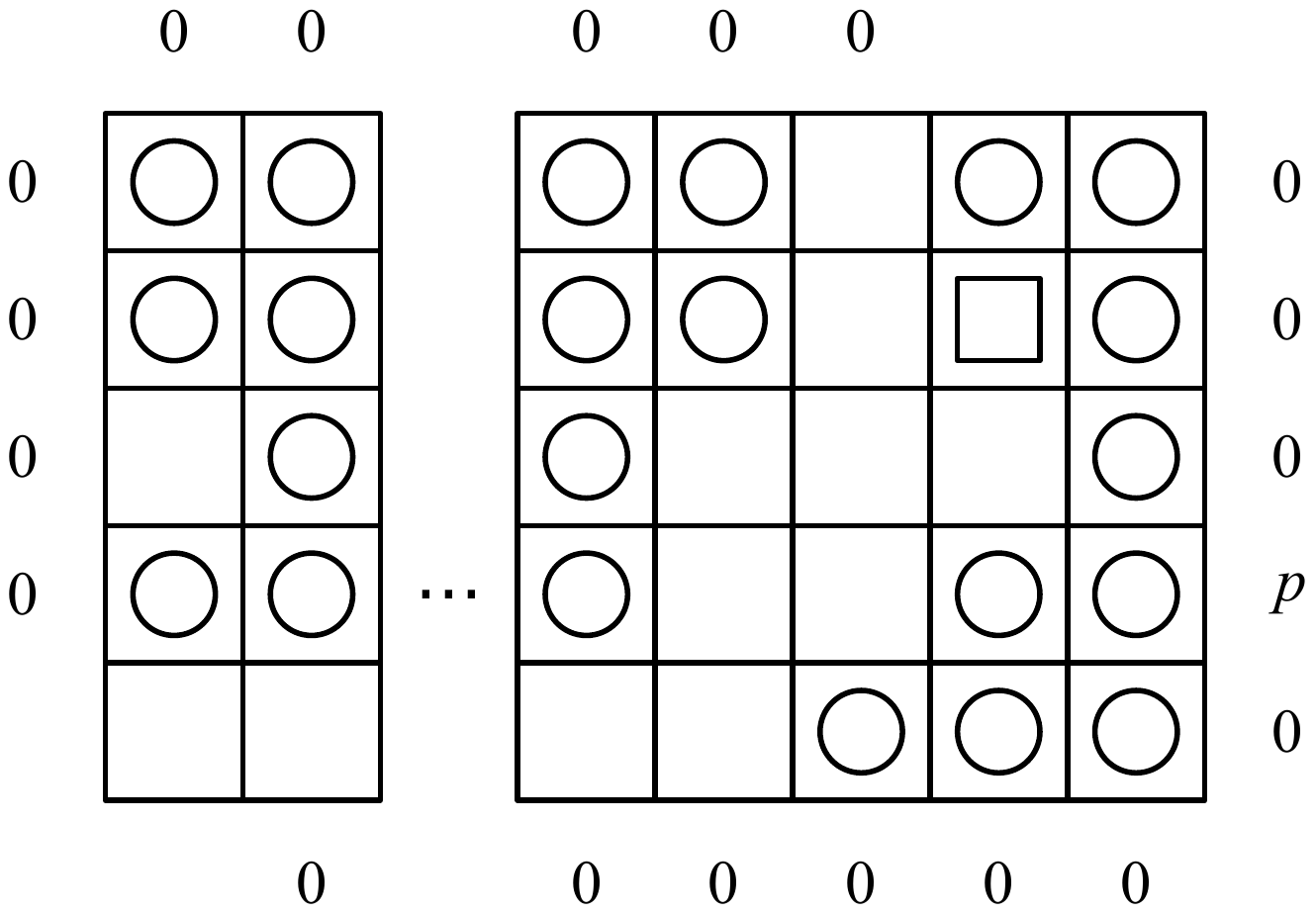}}
    ~~\raisebox{-.5\height}{\scalebox{2}{$\to$}}~~
    \raisebox{-.5\height}{\includegraphics[scale=.25]{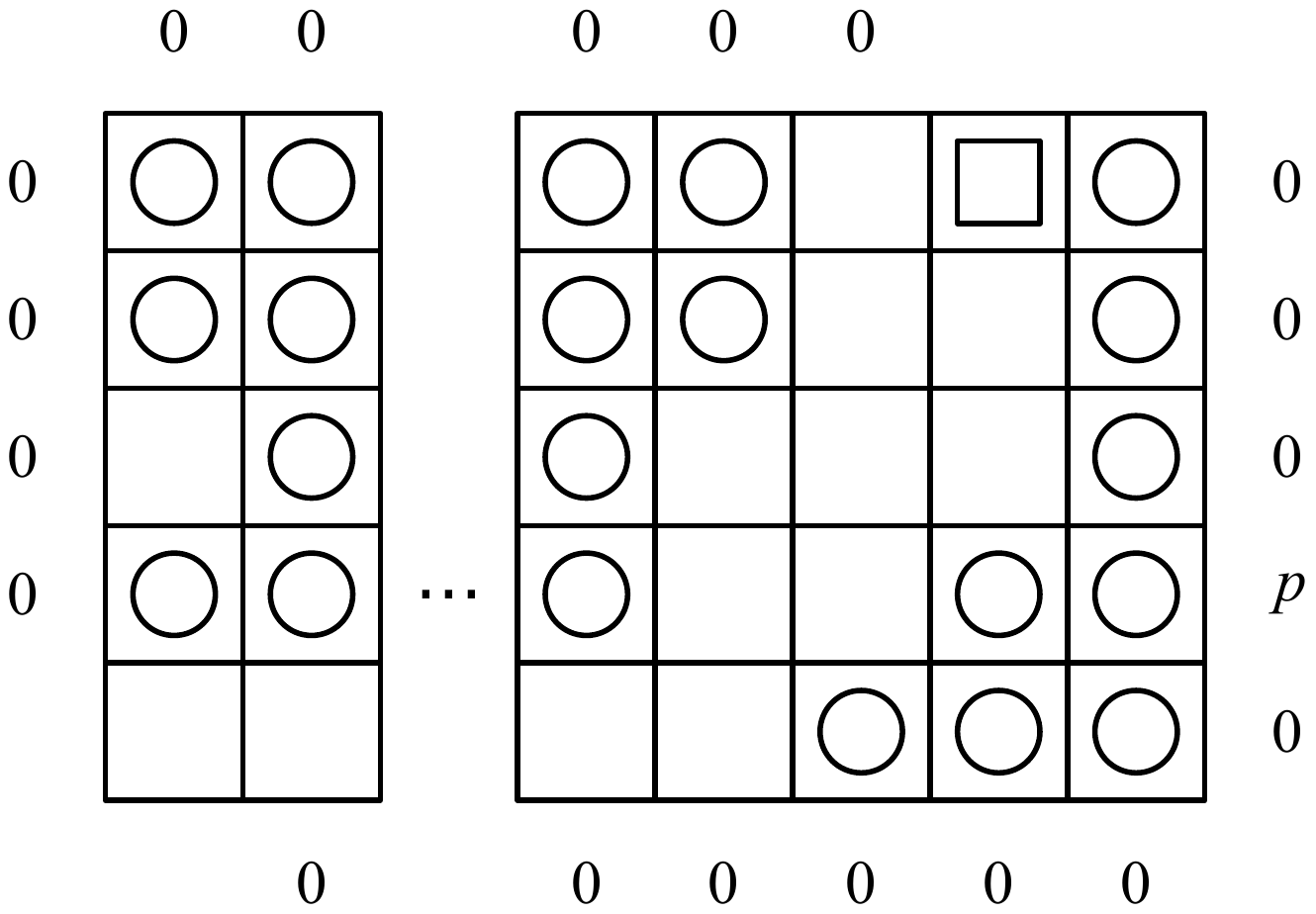}}
    ~~\raisebox{-.5\height}{\scalebox{2}{$\to$}}~~
    \raisebox{-.5\height}{\includegraphics[scale=.25]{images/variable_gadget_O1}}
    \caption{The result of pushing right from the last position in Figure~\ref{figure:variable_gadget_1a}, reaching the position in Figure~\ref{figure:variable_gadget_O1}.}
    \label{figure:variable_gadget_2ab}
\end{figure}

Now suppose that the white king pushes up from the initial configuration. Because of the constraints on the gadget boundary, the only legal push is to the right until the white king reaches the fourth column from the right of the gadget. At this point $p$ pawns have been pushed off the right edge along the second-to-bottom row of the gadget, so there are no empty squares remaining in that row, so pushing right is no longer possible and the only legal push is up. Then the only legal push is again up because of the constraints on the left edge of the gadget. Figure~\ref{figure:variable_gadget_1b} shows the result of this sequence of pushes.

\begin{figure}
    \centering
    \raisebox{-.5\height}{\includegraphics[scale=.25]{images/variable_gadget_0}}
    ~~\raisebox{-.5\height}{\scalebox{2}{$\to$}}~~
    \raisebox{-.5\height}{\includegraphics[scale=.25]{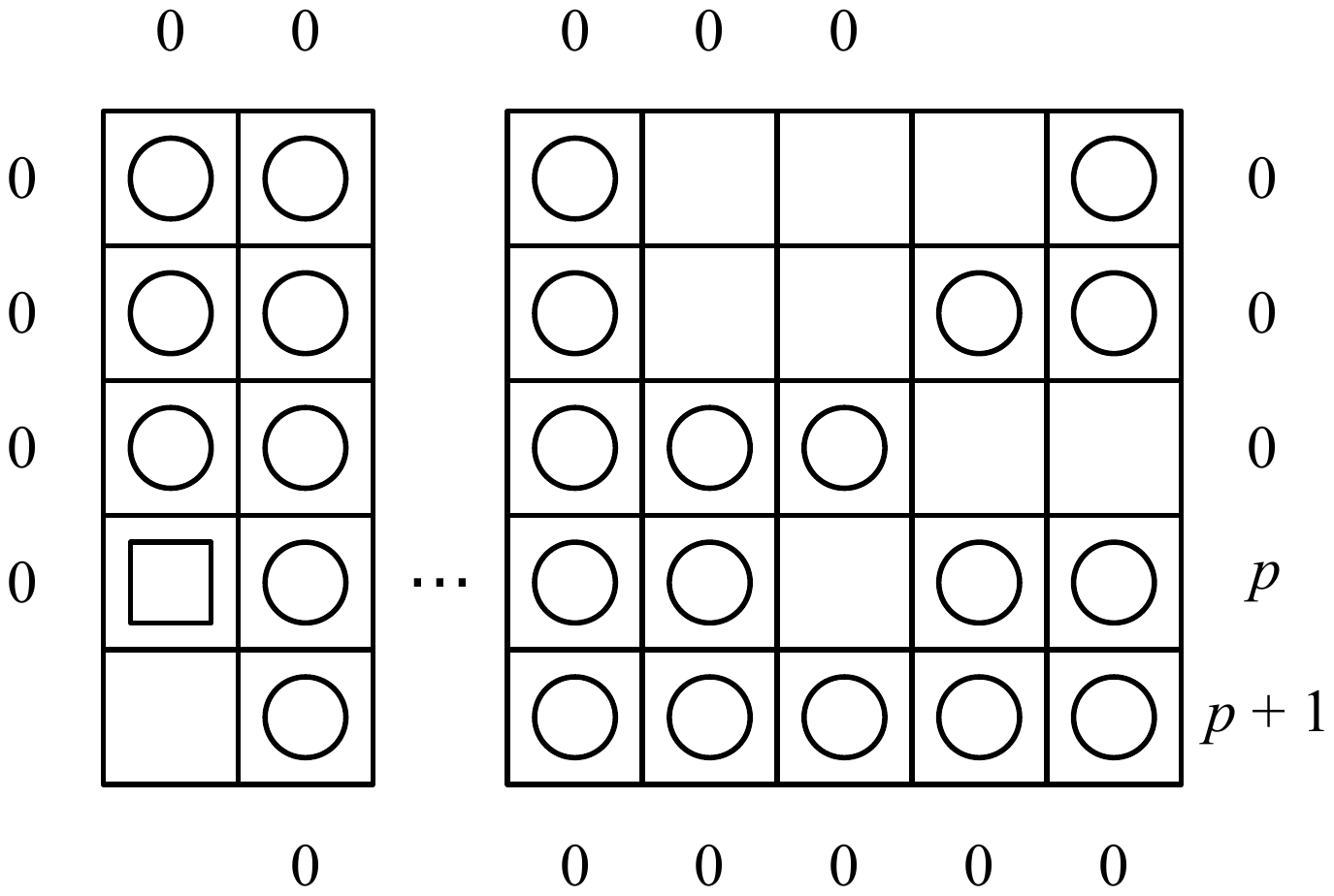}}
    ~~\raisebox{-.5\height}{\scalebox{2}{$\to$}}~~
    \raisebox{-.5\height}{\includegraphics[scale=.25]{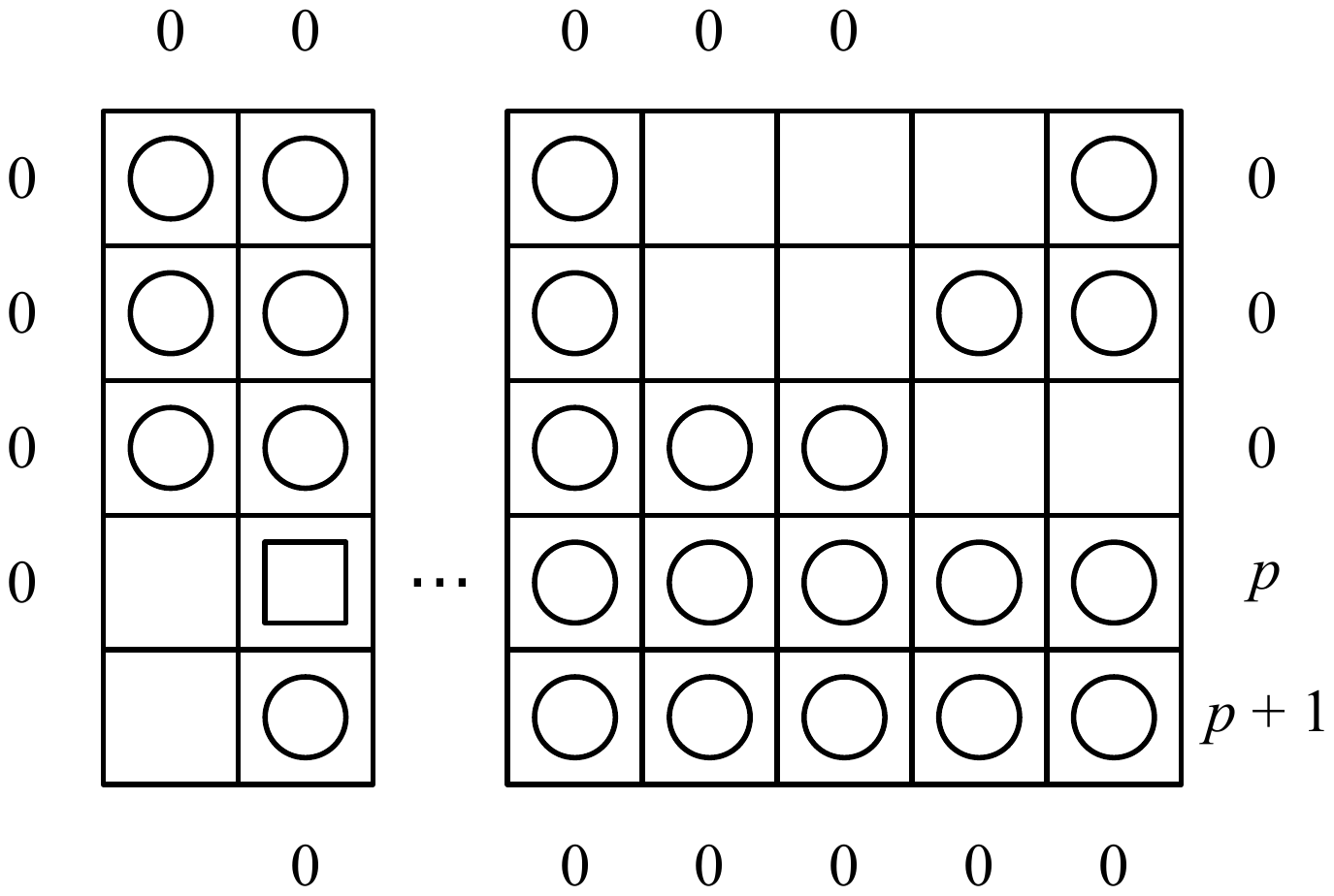}}
    ~~\raisebox{-.5\height}{\scalebox{2}{$\to$}}
    \raisebox{-.5\height}{$\cdots$}
    \raisebox{-.5\height}{\scalebox{2}{$\to$}}~~
    \raisebox{-.5\height}{\includegraphics[scale=.25]{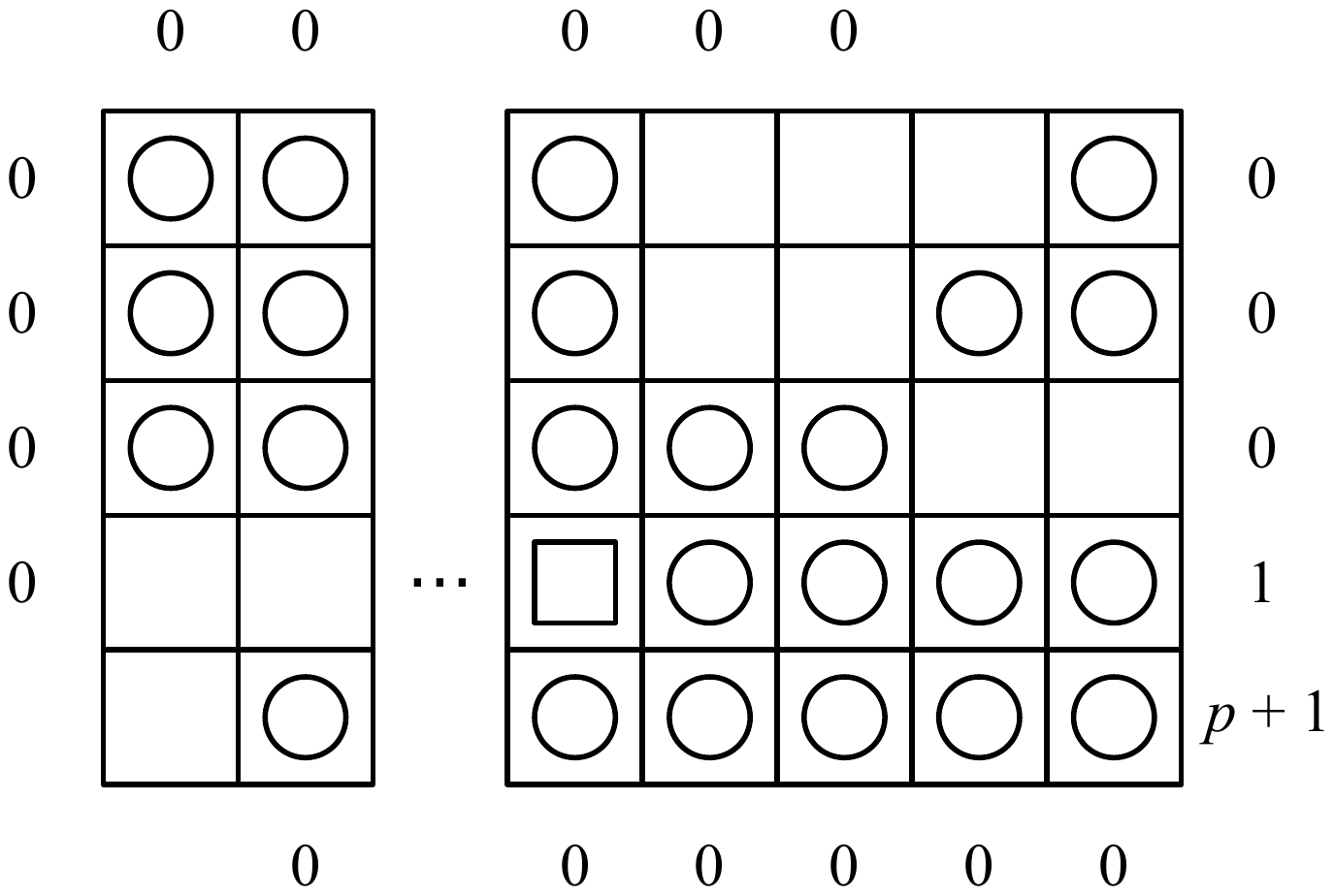}}
    ~~\raisebox{-.5\height}{\scalebox{2}{$\to$}}~~
    \raisebox{-.5\height}{\includegraphics[scale=.25]{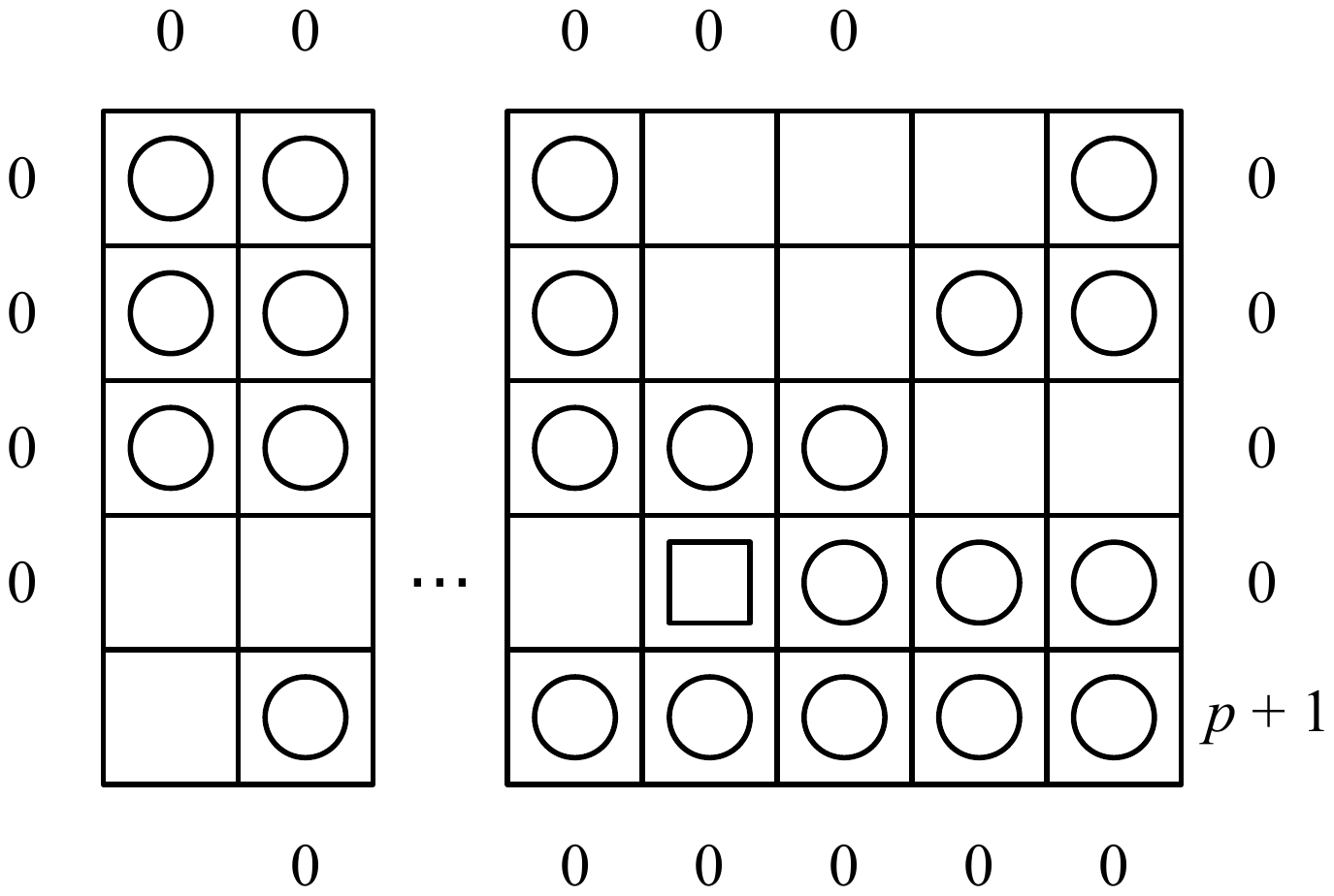}}
    ~~\raisebox{-.5\height}{\scalebox{2}{$\to$}}~~
    \raisebox{-.5\height}{\includegraphics[scale=.25]{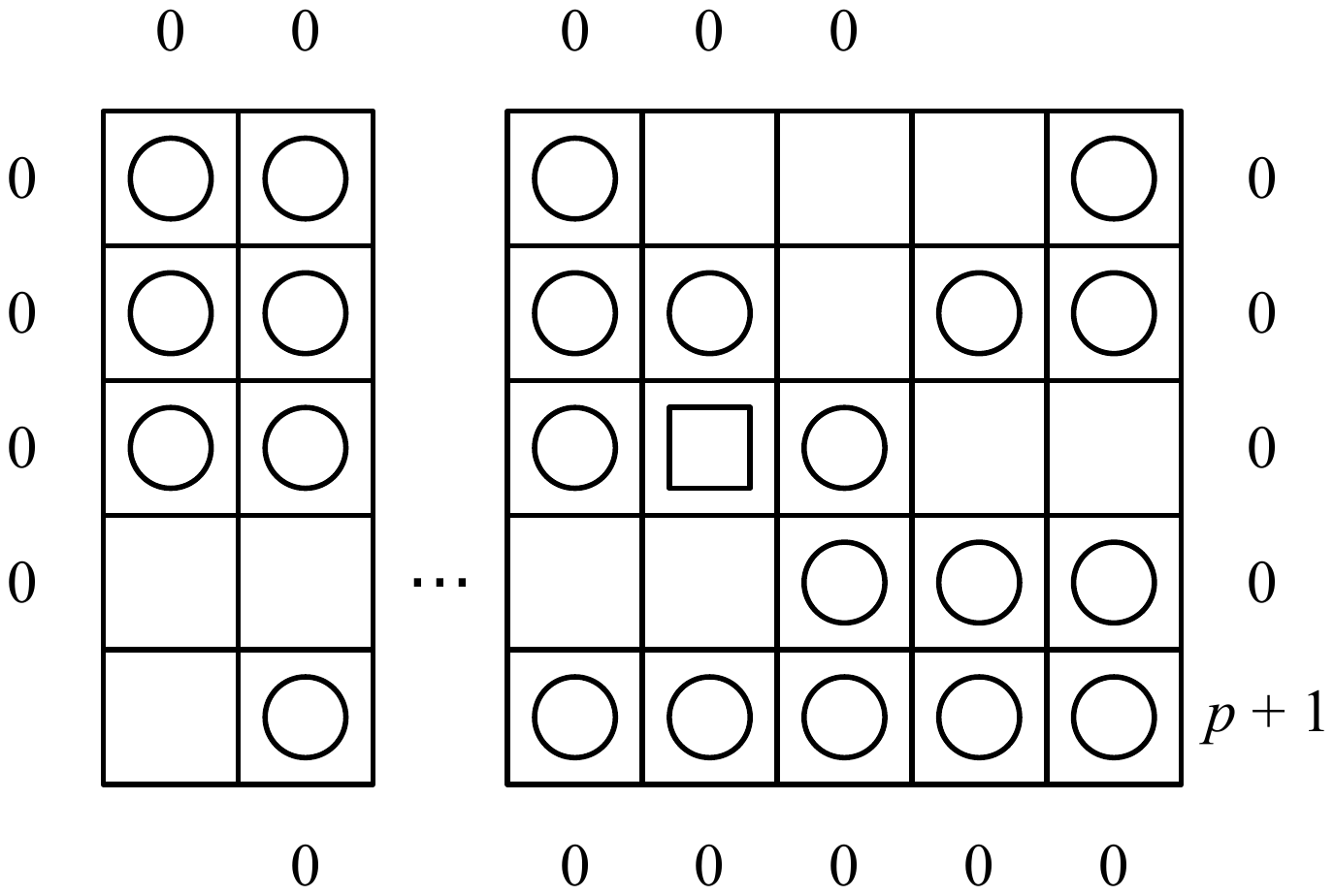}}
    \caption{The other possible push sequence starting from the initial state of the core gadget.}
    \label{figure:variable_gadget_1b}
\end{figure}

If the white king pushes up from this position, there are no legal pushes in the resulting position, resulting in a loss for White. Figure~\ref{figure:variable_gadget_2ba} shows this push and the resulting losing position.

\begin{figure}
    \centering
    \raisebox{-.5\height}{\includegraphics[scale=.25]{images/variable_gadget_5b}}
    ~~\raisebox{-.5\height}{\scalebox{2}{$\to$}}~~
    \raisebox{-.5\height}{\includegraphics[scale=.25]{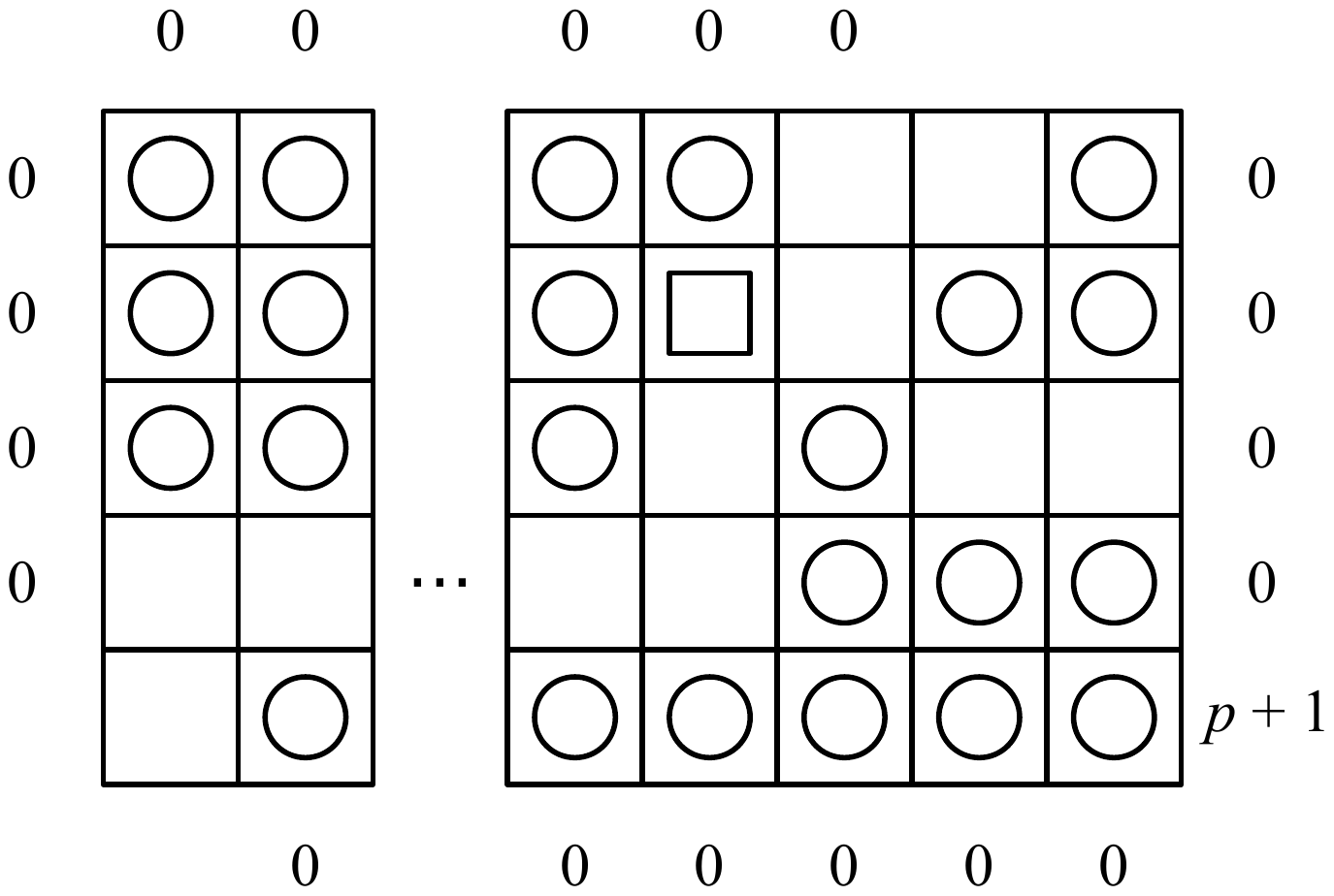}}
    \caption{The result of pushing up from the last position in Figure~\ref{figure:variable_gadget_1b}. White has no legal pushes in the final position.}
    \label{figure:variable_gadget_2ba}
\end{figure}

The only other legal push from the last position in Figure~\ref{figure:variable_gadget_1b} is to the right, after which pushes right, up, up and up again are the only legal pushes. This sequence results in the white king, preceded by a white pawn, exiting the top of the gadget in the second-rightmost column, as desired by Lemma~\ref{thm:core-chaining}. Figure~\ref{figure:variable_gadget_2bb} shows the positions resulting from this sequence. The final position reached is the position in Figure~\ref{figure:variable_gadget_O2}, and $p$ pawns were pushed out of the gadget to the right along the second-to-bottom row, as desired by Lemma~\ref{thm:core-false}. No other pieces were pushed out of the gadget, as desired by Lemma~\ref{thm:core-integrity}.

\begin{figure}
    \centering
    \raisebox{-.5\height}{\includegraphics[scale=.25]{images/variable_gadget_5b}}
    ~~\raisebox{-.5\height}{\scalebox{2}{$\to$}}~~
    \raisebox{-.5\height}{\includegraphics[scale=.25]{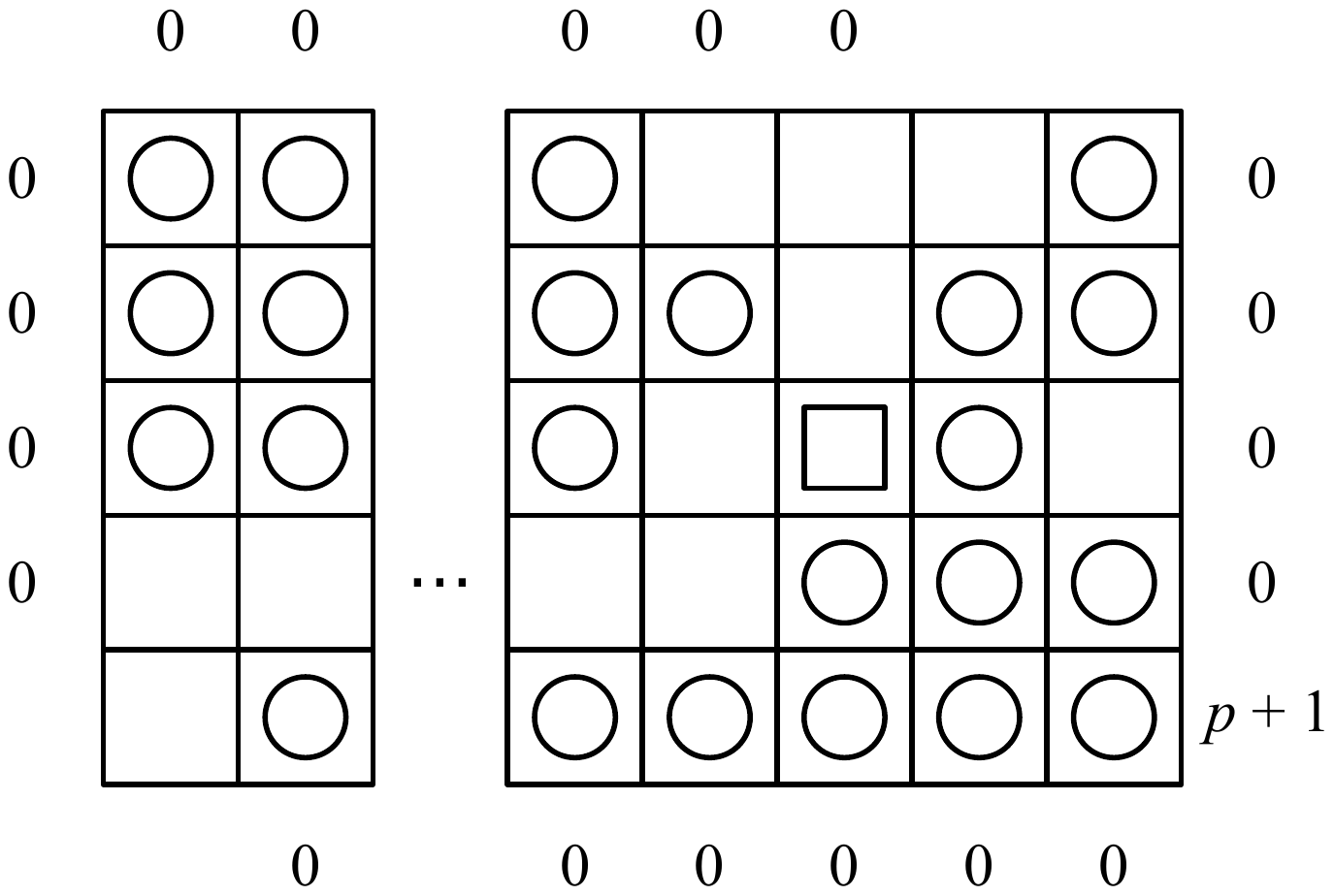}}
    ~~\raisebox{-.5\height}{\scalebox{2}{$\to$}}~~
    \raisebox{-.5\height}{\includegraphics[scale=.25]{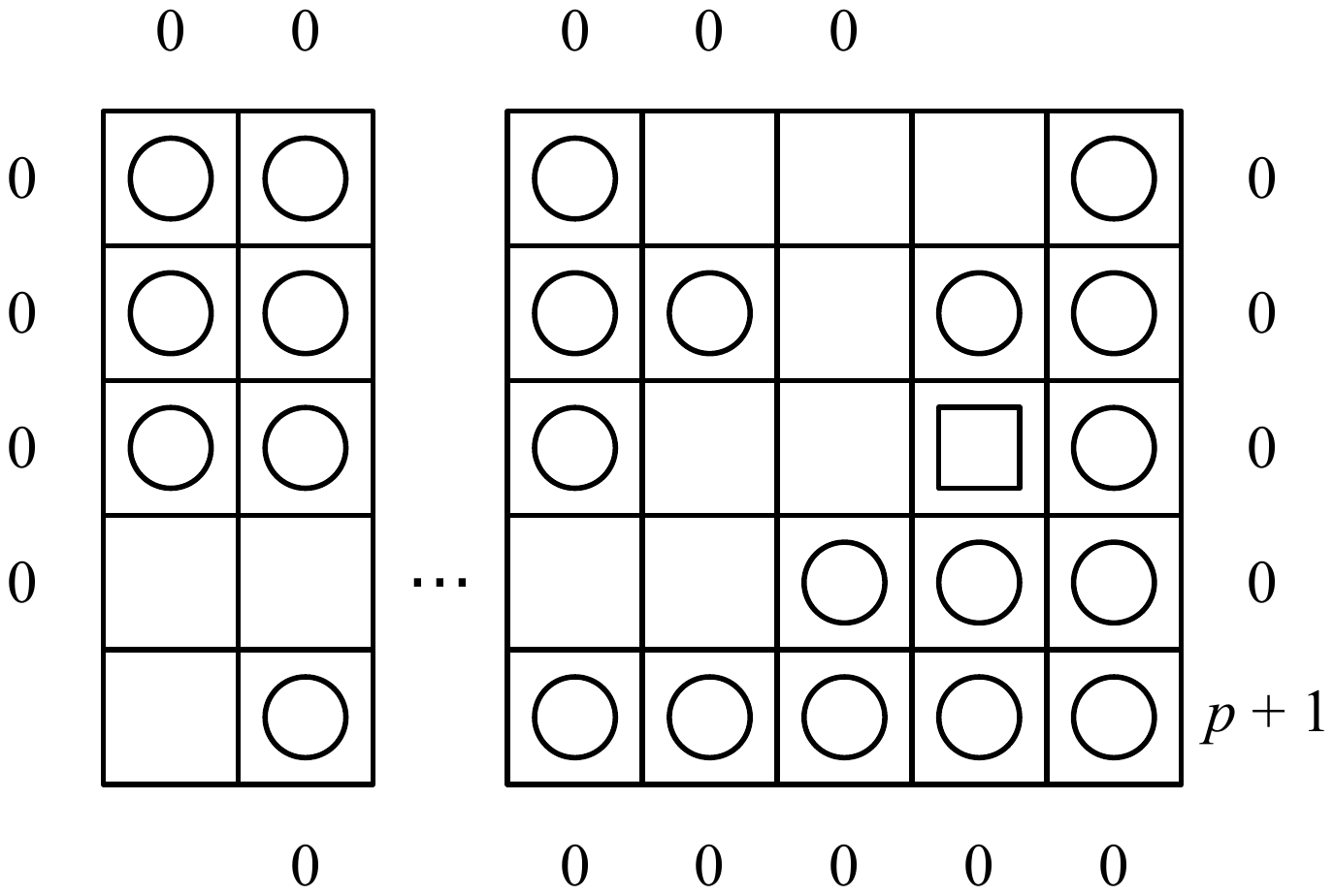}}
    ~~\raisebox{-.5\height}{\scalebox{2}{$\to$}}~~
    \raisebox{-.5\height}{\includegraphics[scale=.25]{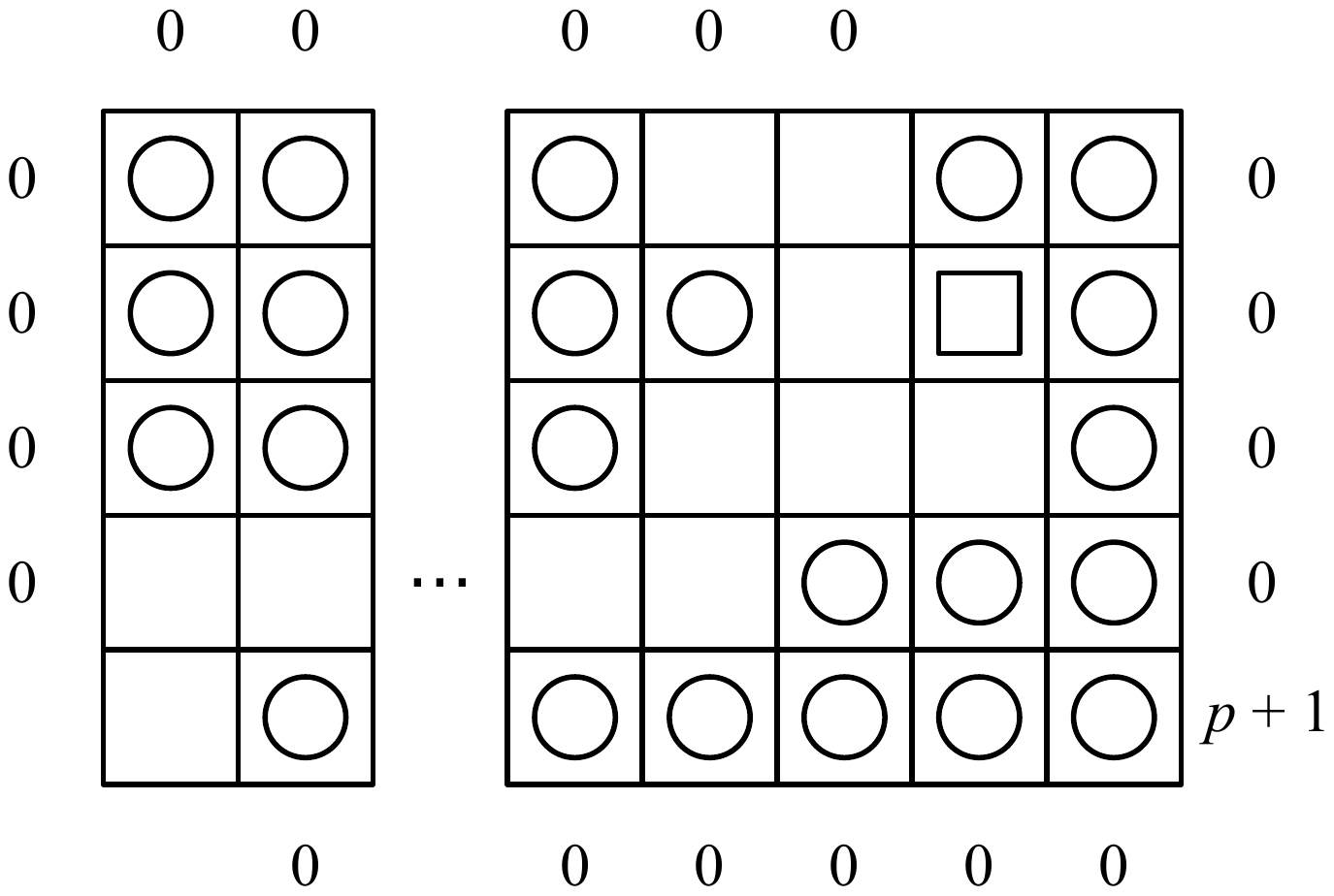}}
    ~~\raisebox{-.5\height}{\scalebox{2}{$\to$}}~~
    \raisebox{-.5\height}{\includegraphics[scale=.25]{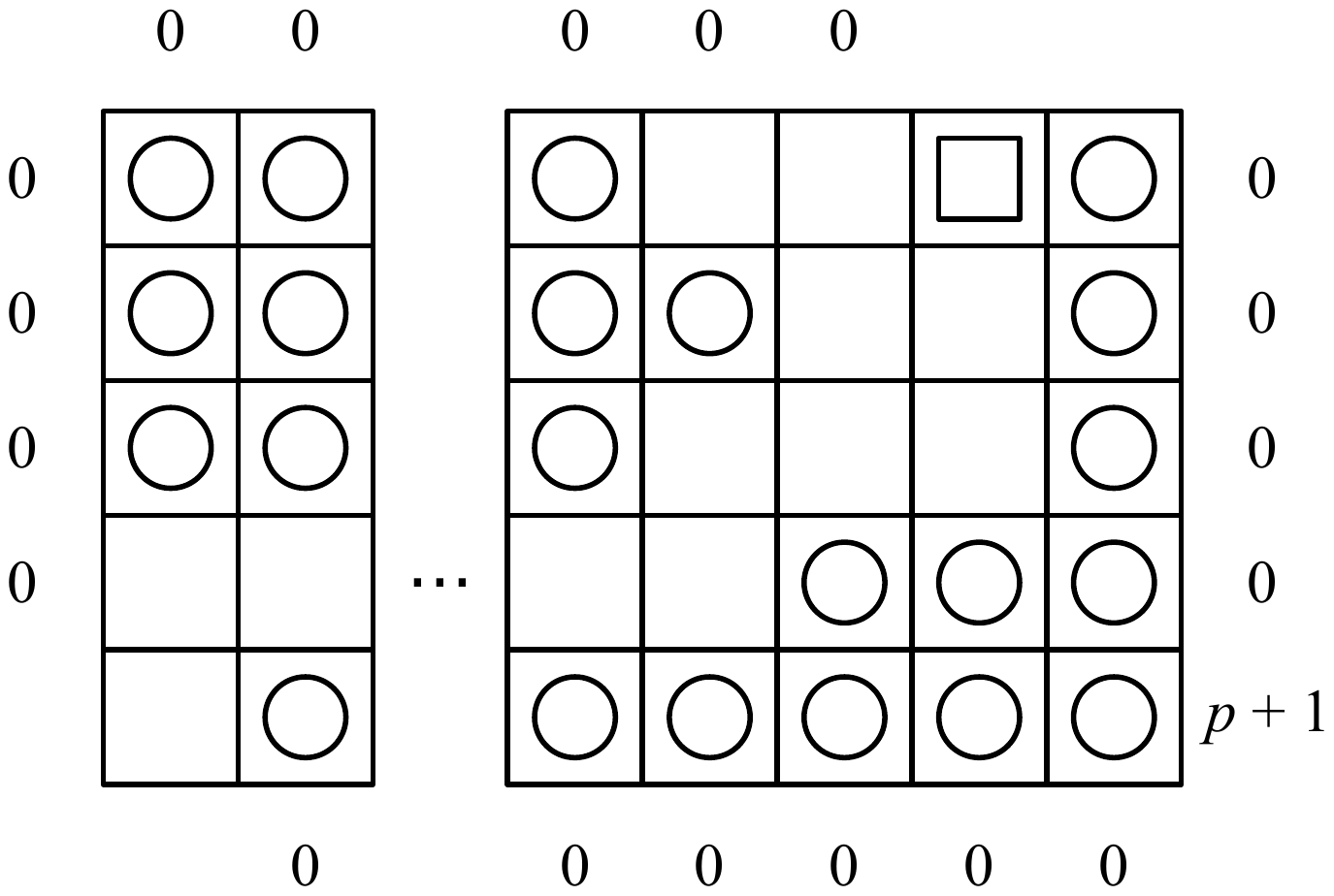}}
    ~~\raisebox{-.5\height}{\scalebox{2}{$\to$}}~~
    \raisebox{-.5\height}{\includegraphics[scale=.25]{images/variable_gadget_O2}}
    \caption{The result of pushing right from the last position in Figure~\ref{figure:variable_gadget_1b}, reaching the position in Figure~\ref{figure:variable_gadget_O2}.}
    \label{figure:variable_gadget_2bb}
\end{figure}

This completes the case analysis.
\end{proof}

\paragraph{Existential variable gadget:} The existential variable gadget, shown in Figure~\ref{figure:existential_variable_gadget}, is nearly the same as the core gadget, differing only in the bottom of the leftmost column. When instantiated in the reduction, the white king enters the gadget by pushing a white pawn up into the leftmost column, becoming exactly the core gadget. From the position immediately after the white king enters the gadget, the white king cannot push left (because there are no empty spaces in the row to the left) nor down (because it just pushed up, leaving an empty space in its former position), satisfying the assumption in Lemma~\ref{thm:core}. Thus by Lemma~\ref{thm:core-chaining}, the white king leaves the existential variable gadget in the second-rightmost column with a white pawn above it, and by either Lemma~\ref{thm:core-true} or \ref{thm:core-false}, all empty squares in one of two rows of the connection block are now filled by pawns pushed out of the existential variable gadget.

\iffull
\begin{figure}
    \centering
    \includegraphics[scale=.5]{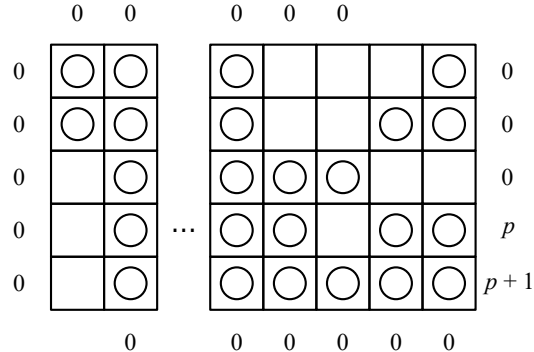}
    \caption{The existential variable gadget.}
    \label{figure:existential_variable_gadget}
\end{figure}
\fi

\paragraph{Universal variable gadget:} The universal variable gadget consists of two disconnected regions. The left subregion of the gadget occupies a $(p+6) \times 5$ rectangle in the \emph{variable gadget I} block. As the white king proceeds through the left region of the gadget, a subregion of the gadget reaches the initial state of the core gadget. The right region of the gadget occupies a $4 \times 4$ rectangle in the \emph{variable gadget II} block and contains a black pawn to allow Black to control the value of the variable. The bottom of the right region is one row lower than the bottom of the left region. The area between the two regions of the gadget (in the three rows shared by both) is entirely filled by white pawns. Figure~\ref{figure:universal_variable_gadget} shows the universal variable gadget, including the pawn-filled area between the regions.

\iffull
\begin{figure}
    \centering
    \includegraphics[scale=.5]{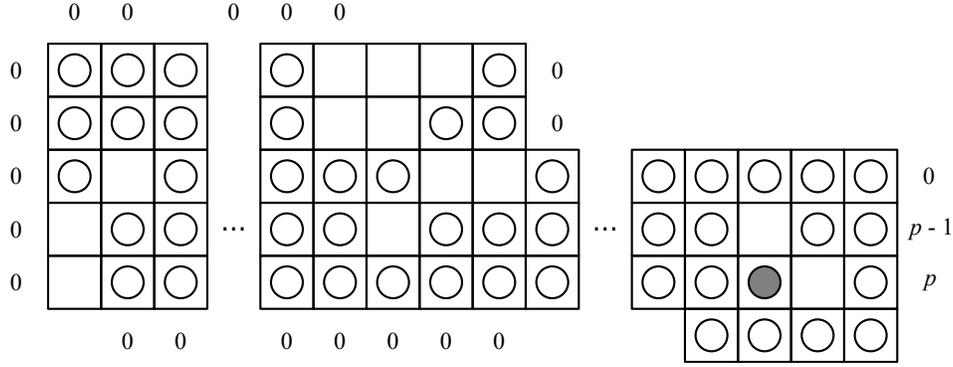}
    \caption{The universal variable gadget.}
    \label{figure:universal_variable_gadget}
\end{figure}
\fi

\begin{figure}
    \centering
    \includegraphics[scale=.5]{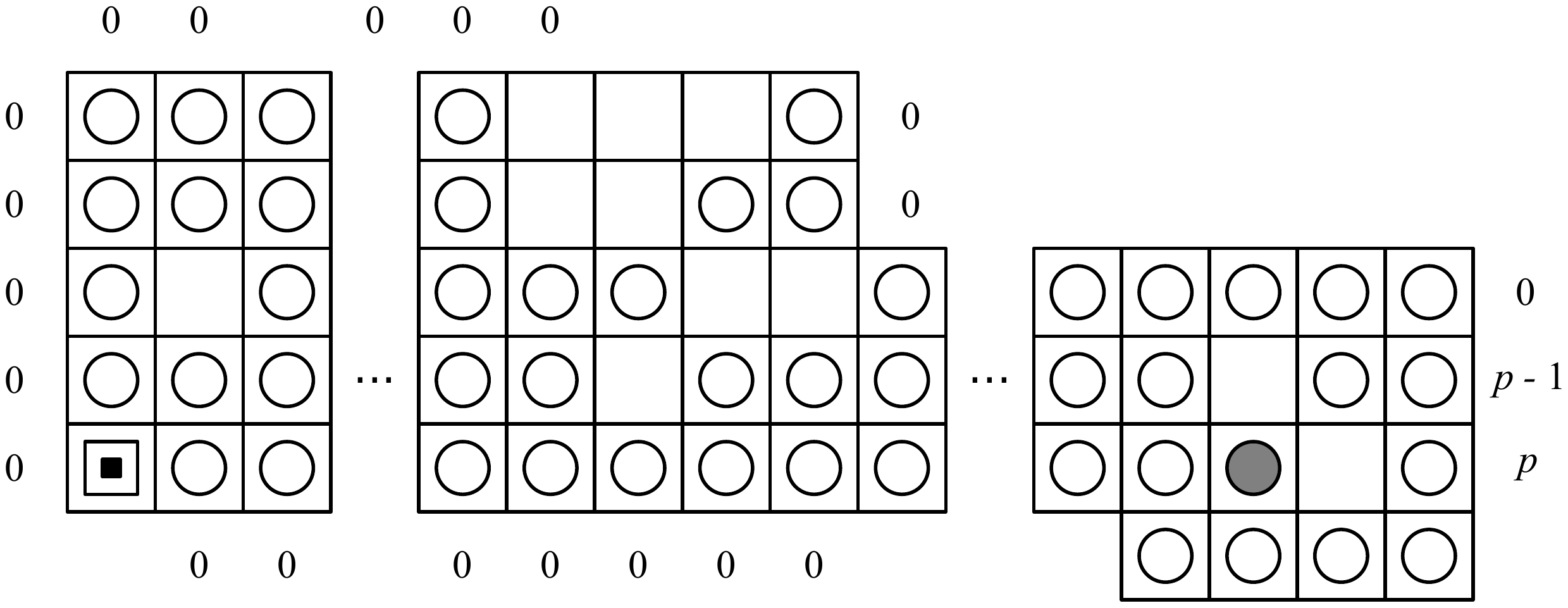}
    \caption{The universal variable gadget after the white king enters.}
    \label{figure:universal_variable_gadget_1}
\end{figure}

As with the existential variable gadget, when instantiated in the reduction, the white king enters the universal variable gadget by pushing a white pawn up into the leftmost column. Figure~\ref{figure:universal_variable_gadget_1} shows the resulting position. Regardless of Black's move, White's only legal push is to the right. By moving the black pawn, Black can choose between the two positions in Figure~\ref{figure:universal_variable_gadget_2}, depending on which of the two rows the black pawn is in when White pushes.

\begin{figure}
    \centering
    \subcaptionbox{\label{figure:universal_variable_gadget_2a}}
      {\includegraphics[scale=.25]{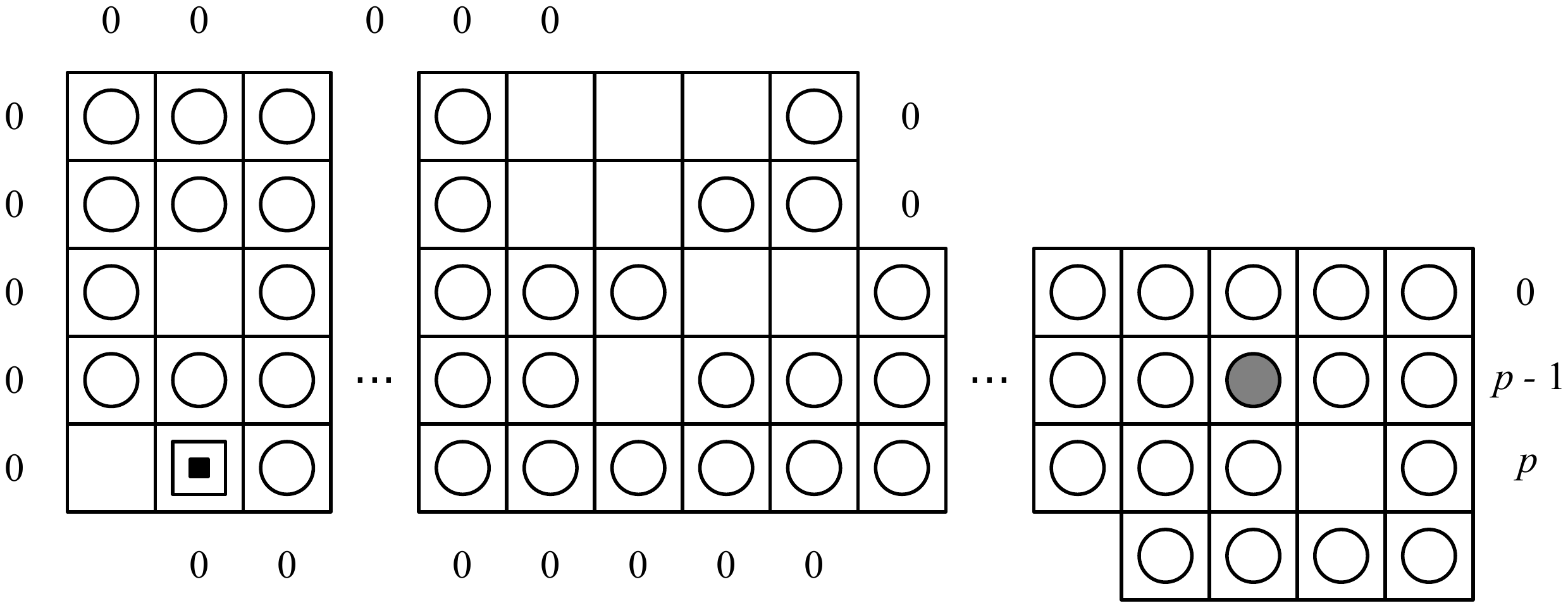}}
    ~~~~
    \subcaptionbox{\label{figure:universal_variable_gadget_2b}}
      {\includegraphics[scale=.25]{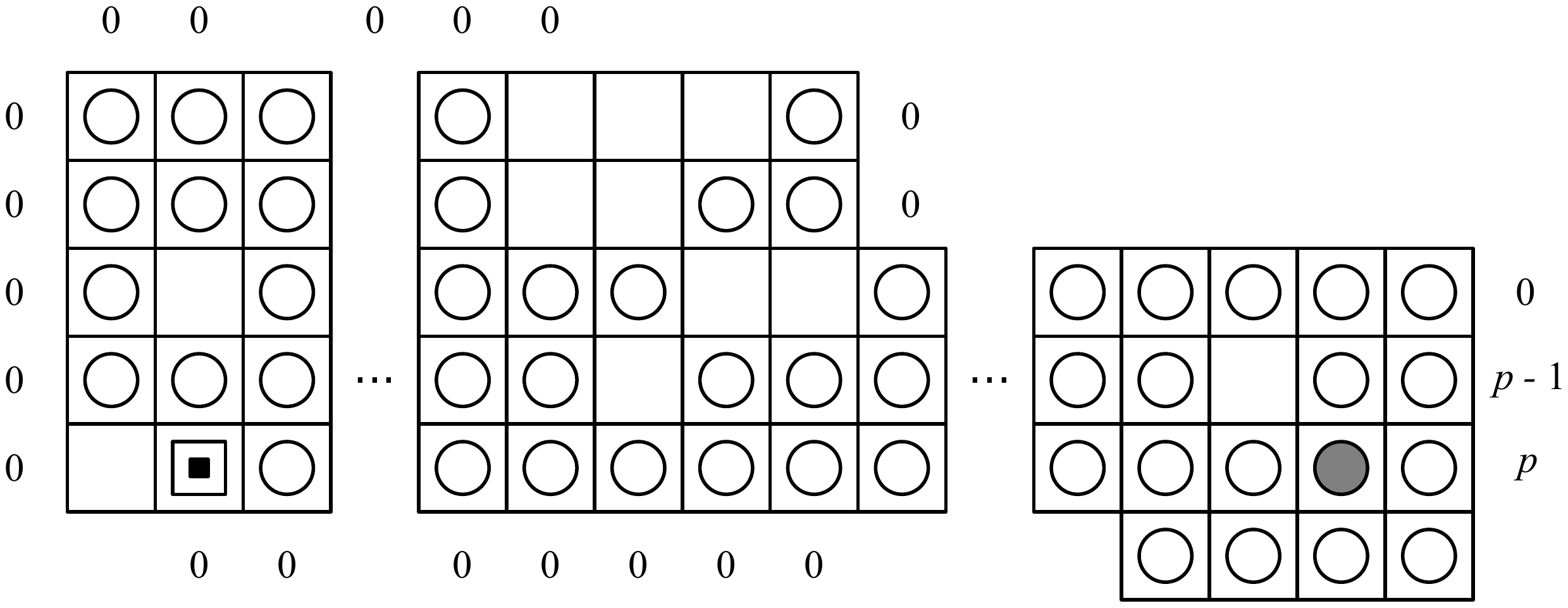}}
    \caption{The two possible configurations of the universal variable gadget one white turn after the configuration from Figure~\ref{figure:universal_variable_gadget_1}.}
    \label{figure:universal_variable_gadget_2}
\end{figure} 

In both of the resulting positions, the black pawn is surrounded, so Black can no longer influence events in this gadget. The left region of the gadget, without the leftmost column, is identical to the initial position of the core gadget. In both positions, the white king cannot push left (empty space) or down (no empty spaces down in the column), satisfying the assumption in Lemma~\ref{thm:core}. Thus either Lemma~\ref{thm:core-true} or Lemma~\ref{thm:core-false} holds. Because of the edge constraints, in Figure~\ref{figure:universal_variable_gadget_2a}, only Lemma~\ref{thm:core-true} is possible, resulting in Figure~\ref{figure:universal_variable_gadget_3a}. Similarly, in Figure~\ref{figure:universal_variable_gadget_2b}, only Lemma~\ref{thm:core-false} is possible, resulting in Figure~\ref{figure:universal_variable_gadget_3b}. By moving the black pawn to select one of these two cases, Black sets the value of the corresponding variable. Then by Lemma~\ref{thm:core-chaining}, the white king leaves in the second-rightmost column of the left region (in the \emph{variable gadget I} block) of the gadget. In both cases, the black pawn remains surrounded by white pawns in the right region of the gadget.

\begin{figure}
    \centering
    \subcaptionbox{\label{figure:universal_variable_gadget_3a}}
      {\includegraphics[scale=.25]{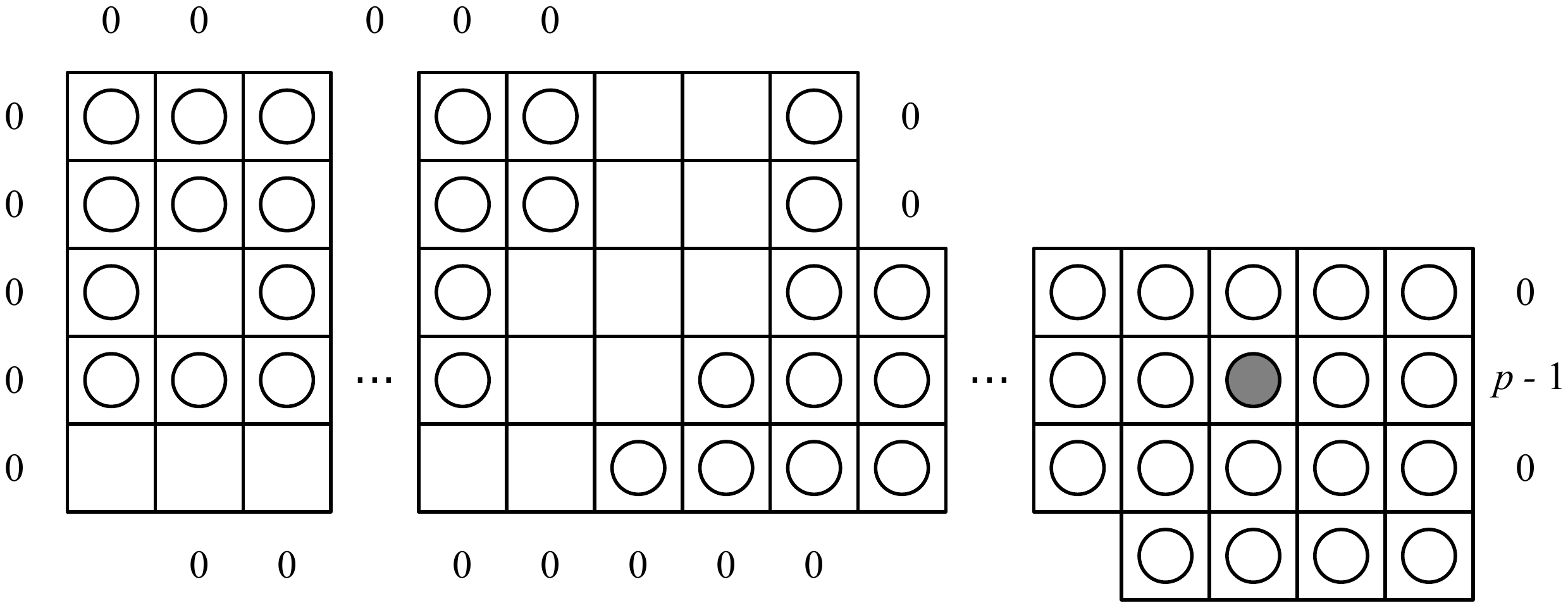}}
    ~~~~
    \subcaptionbox{\label{figure:universal_variable_gadget_3b}}
      {\includegraphics[scale=.25]{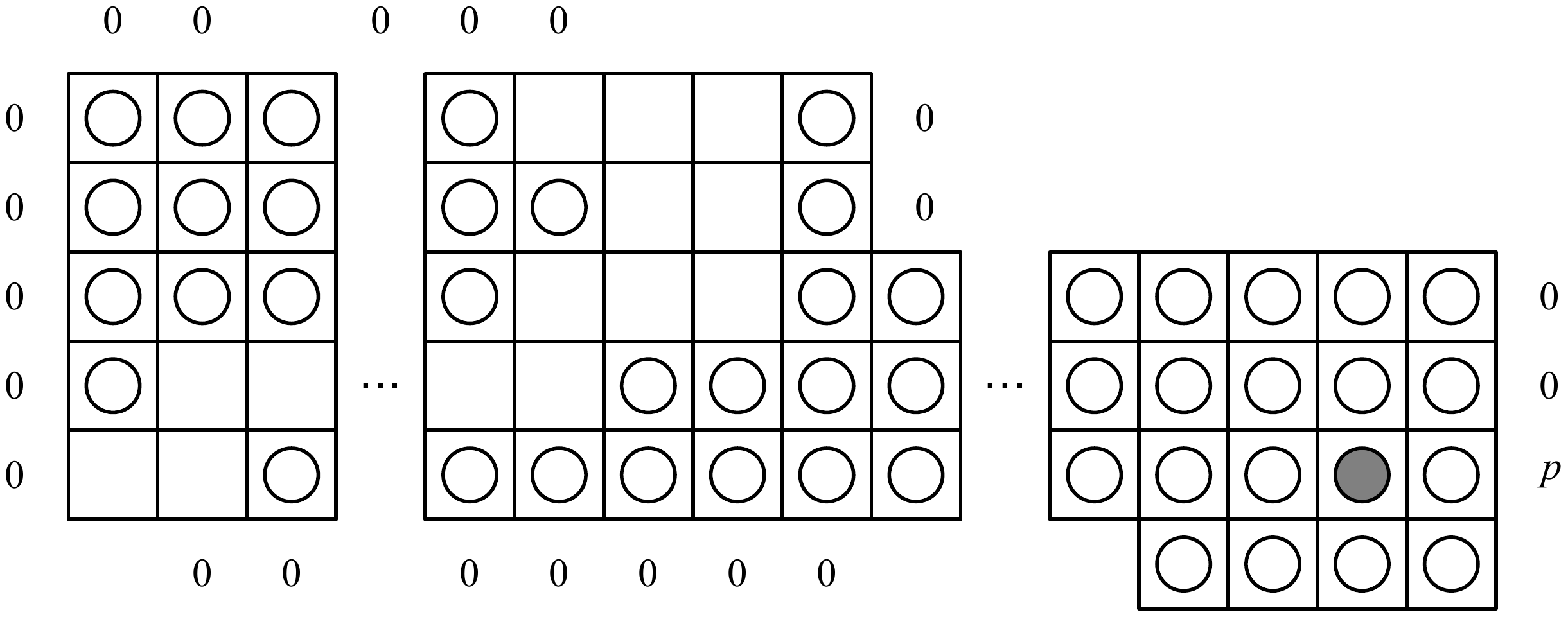}}
    \caption{The two possible final positions of the universal variable gadget after the white king exits.}
    \label{figure:universal_variable_gadget_3}
\end{figure} 
}

\subsection{Bridge gadget}
\abstractlater{\subsection{Bridge gadget}}

The bridge gadget, shown in Figure~\ref{figure:bridge_gadget}, brings the white king from the exit of the last variable gadget to the entrance of the first clause gadget. When instantiated in the reduction, the white king enters the bridge gadget from the bottom of the leftmost column, preceded by a white pawn. The white king's traversal of the bridge gadget is entirely forced, as shown in Figure~\ref{figure:bridge_gadget_1}. The white king leaves the gadget by pushing a white pawn out to the right in the second-to-top row.

\begin{figure}[t]
  \centering
  \begin{minipage}[b]{0.3\textwidth}
    \centering
    \includegraphics[scale=.5]{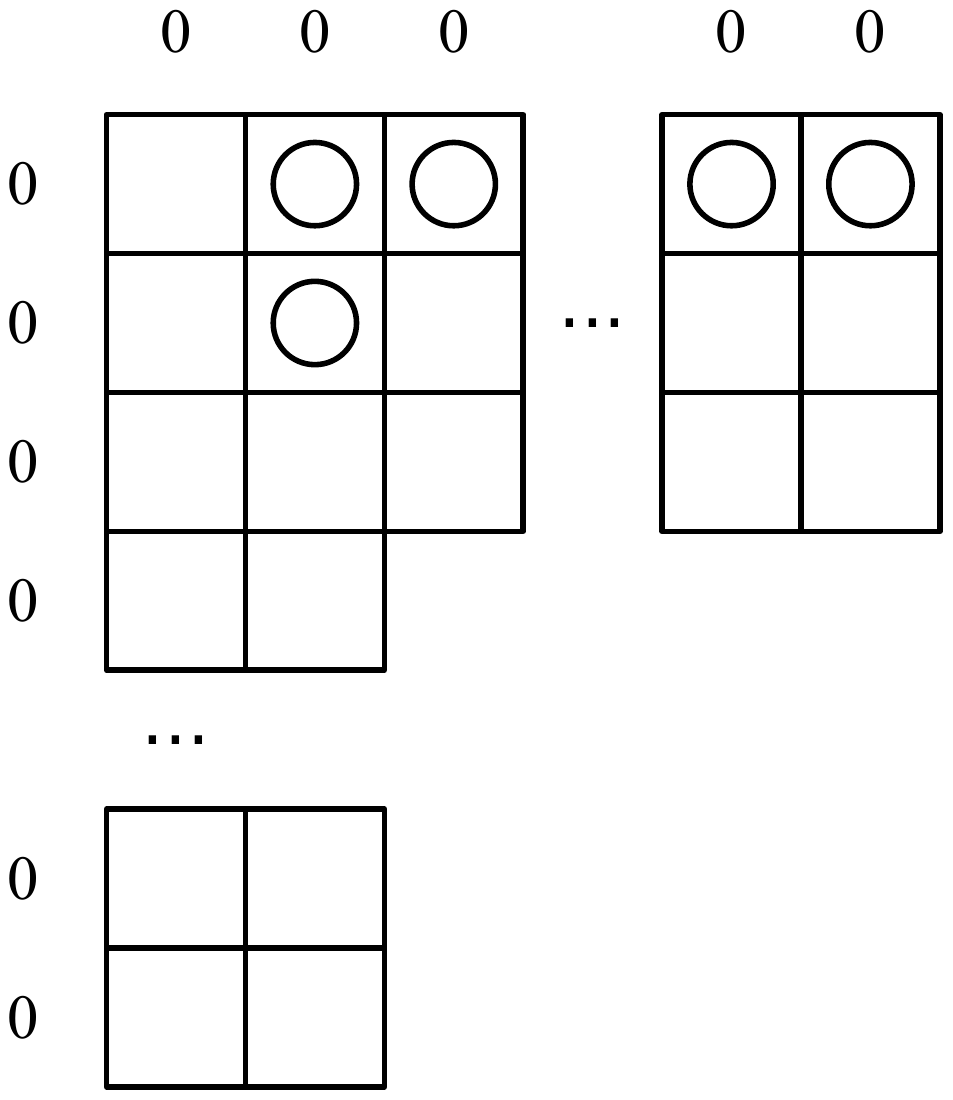}
    \caption{The bridge gadget.}
    \label{figure:bridge_gadget}
  \end{minipage}\hfill
  \begin{minipage}[b]{0.3\textwidth}
    \centering
    \includegraphics[scale=.5]{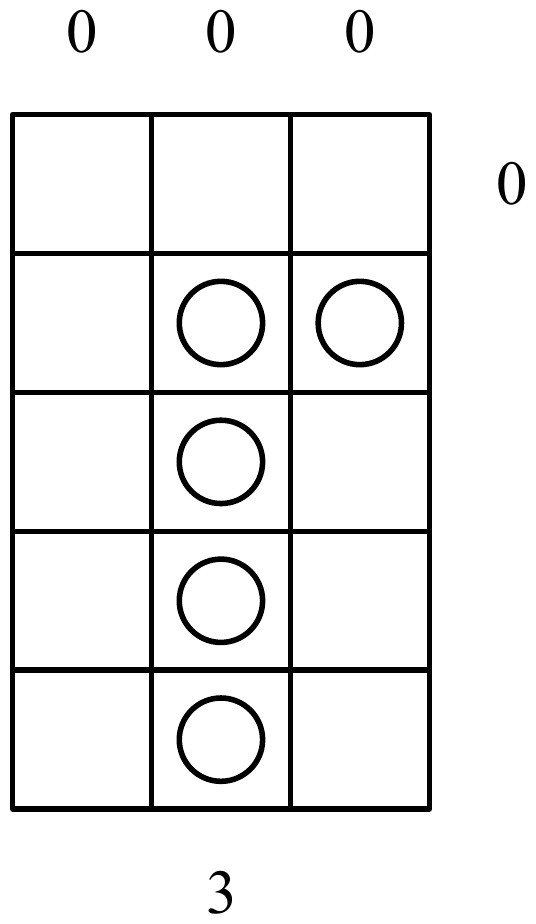}
    \caption{The clause gadget.}
    \label{figure:clause_gadget}
  \end{minipage}\hfill
  \begin{minipage}[b]{0.3\textwidth}
    \centering
    \includegraphics[scale=.5]{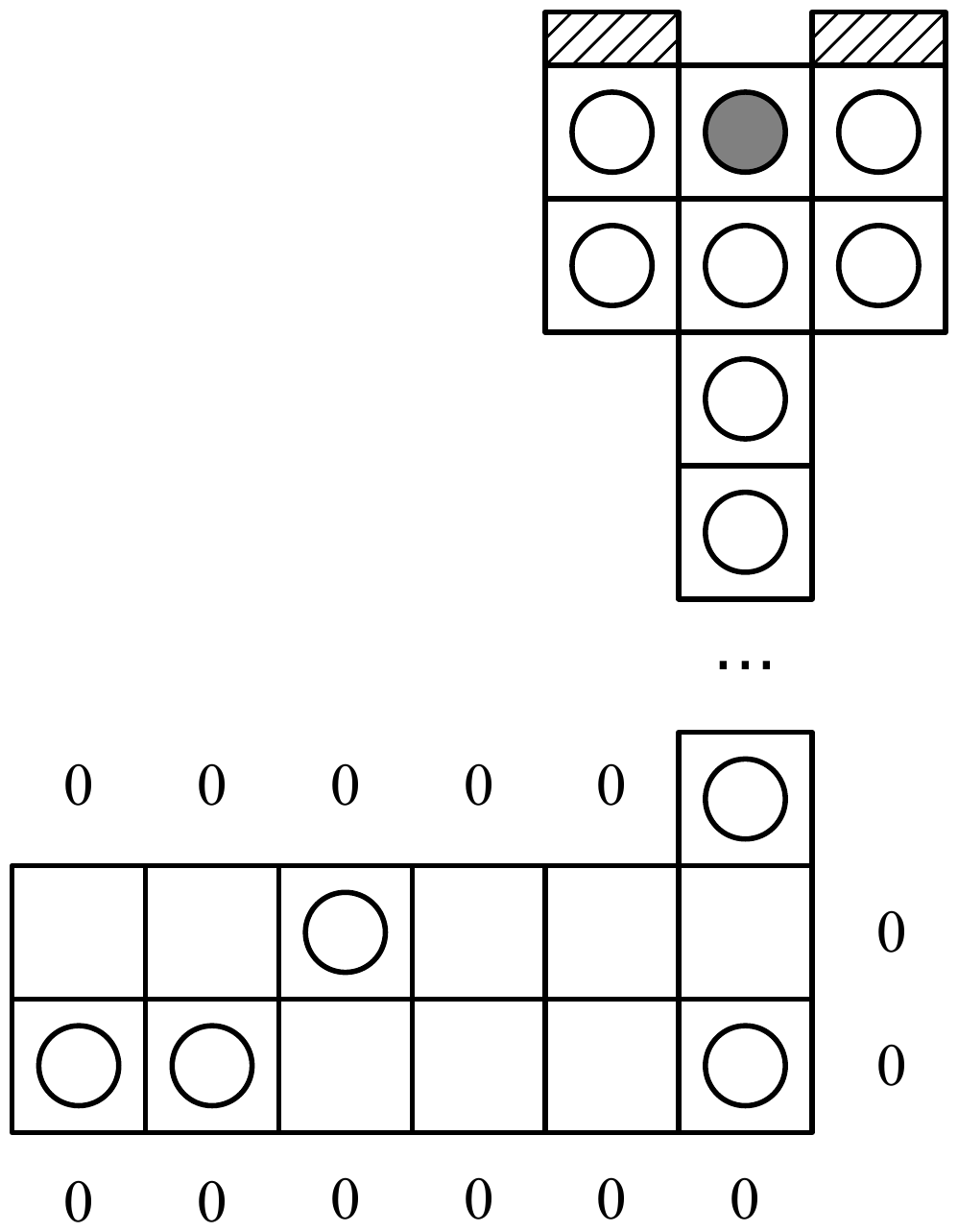}
    \caption{The reward gadget.}
    \label{figure:reward_gadget}
  \end{minipage}
\end{figure} 


\later{
\begin{figure}
    \centering
    ~~~~~~~~
    \raisebox{-.5\height}{\includegraphics[scale=.25]{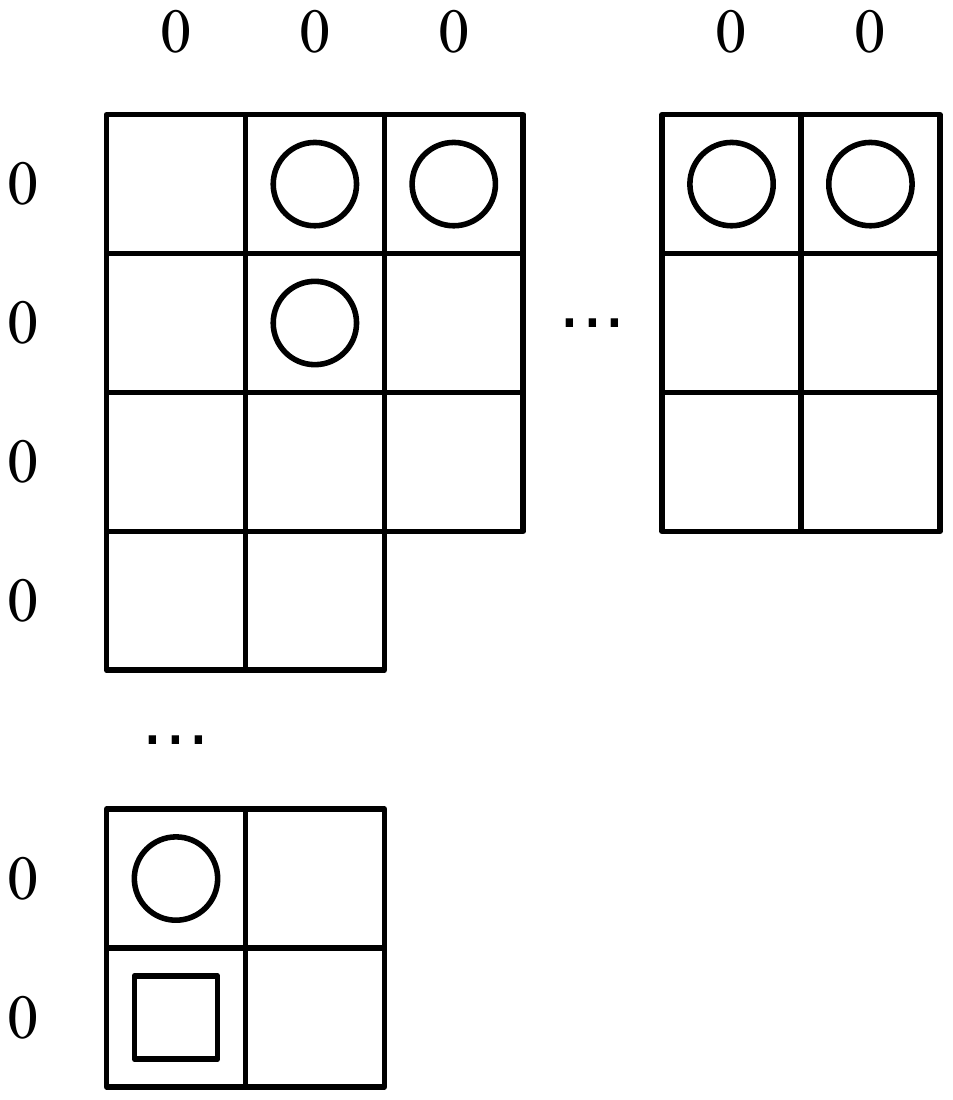}}
    ~~\raisebox{-.5\height}{\scalebox{2}{$\to$}}~~
    \raisebox{-.5\height}{\includegraphics[scale=.25]{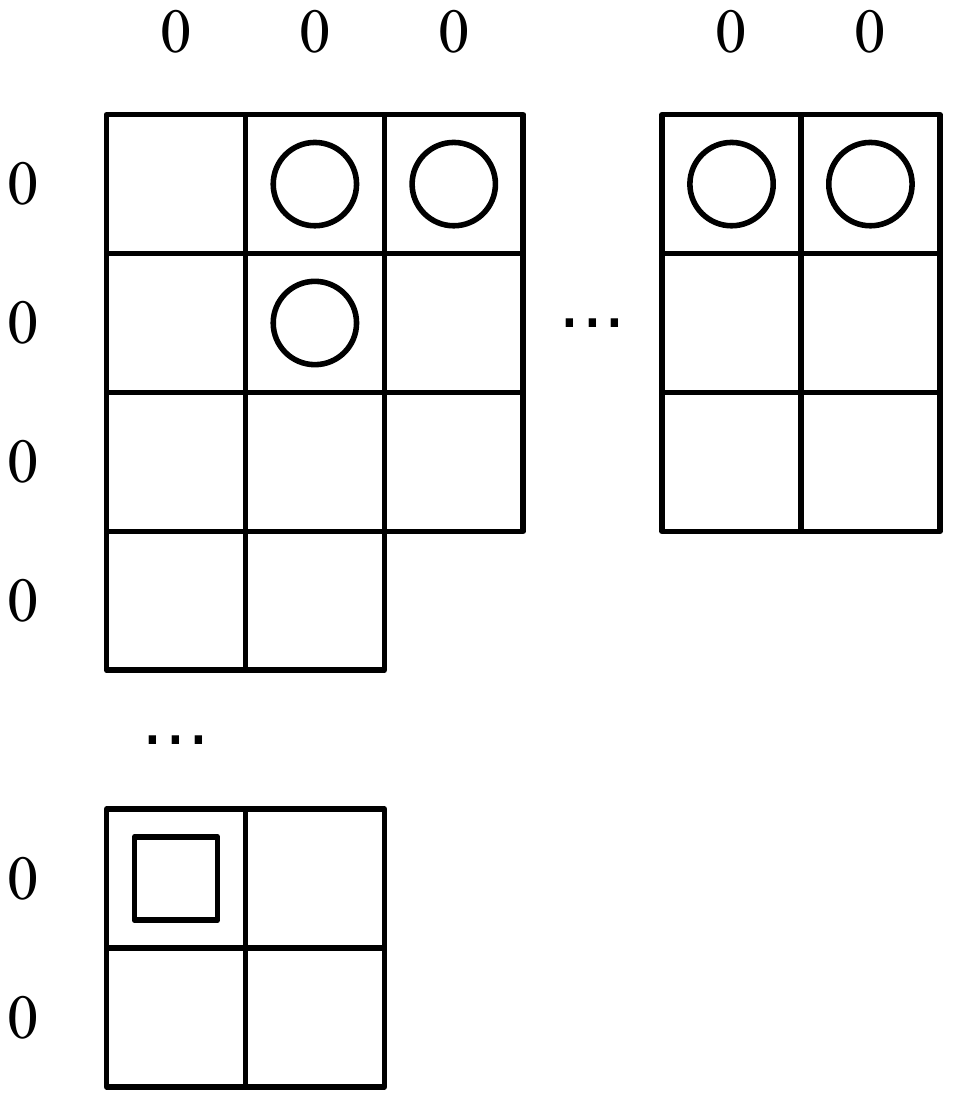}}
    ~~\raisebox{-.5\height}{\scalebox{2}{$\to$}}
    \raisebox{-.5\height}{$\cdots$}
    \raisebox{-.5\height}{\scalebox{2}{$\to$}}~~
    \raisebox{-.5\height}{\includegraphics[scale=.25]{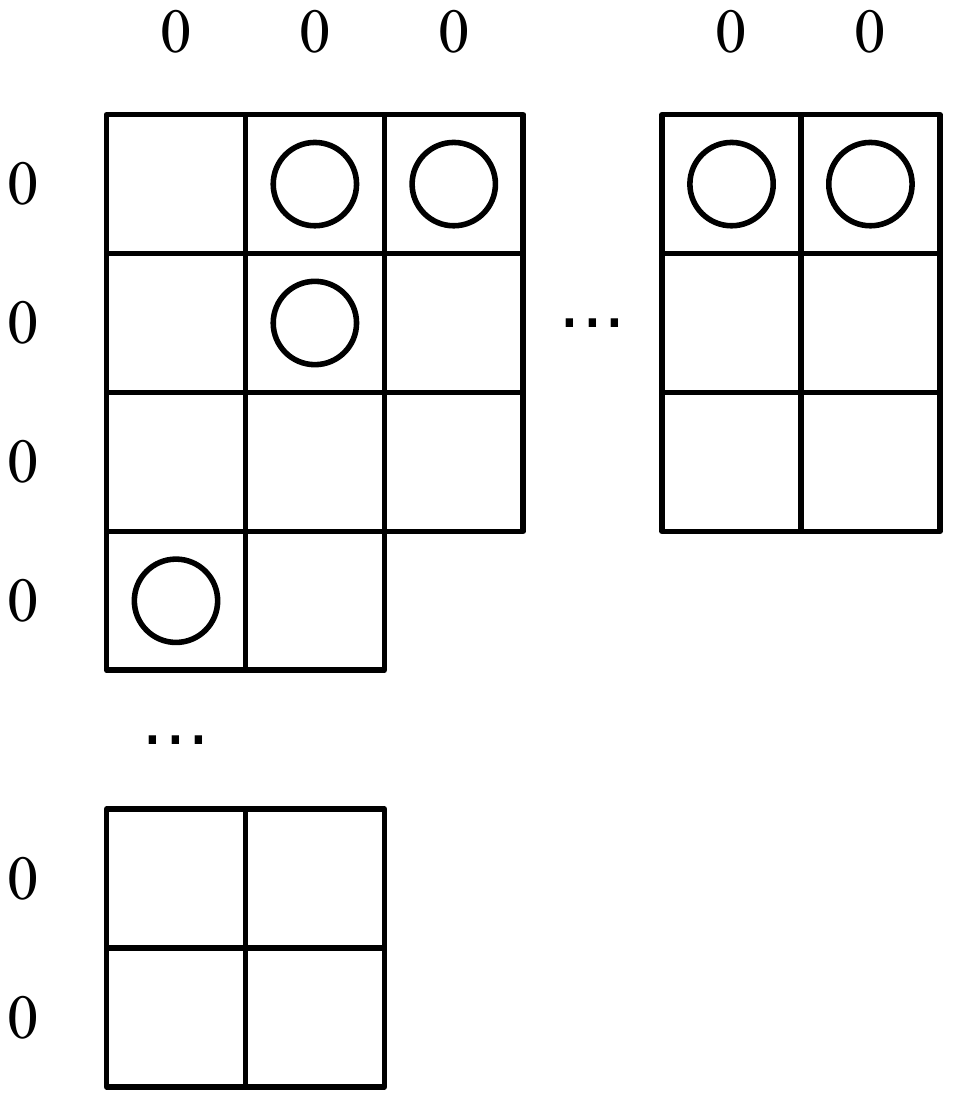}}
    ~~\raisebox{-.5\height}{\scalebox{2}{$\to$}}~~
    \raisebox{-.5\height}{\includegraphics[scale=.25]{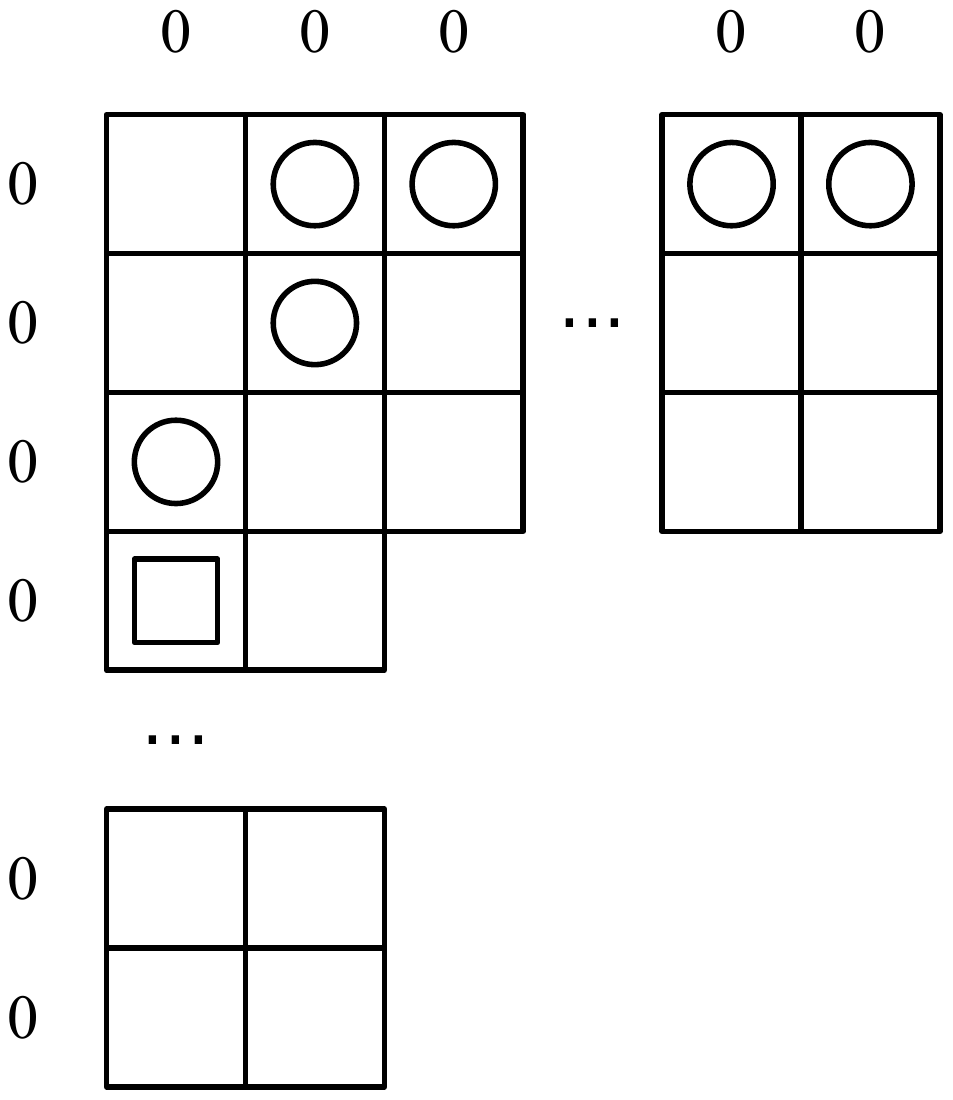}}
    ~~\raisebox{-.5\height}{\scalebox{2}{$\to$}}~~
    \raisebox{-.5\height}{\includegraphics[scale=.25]{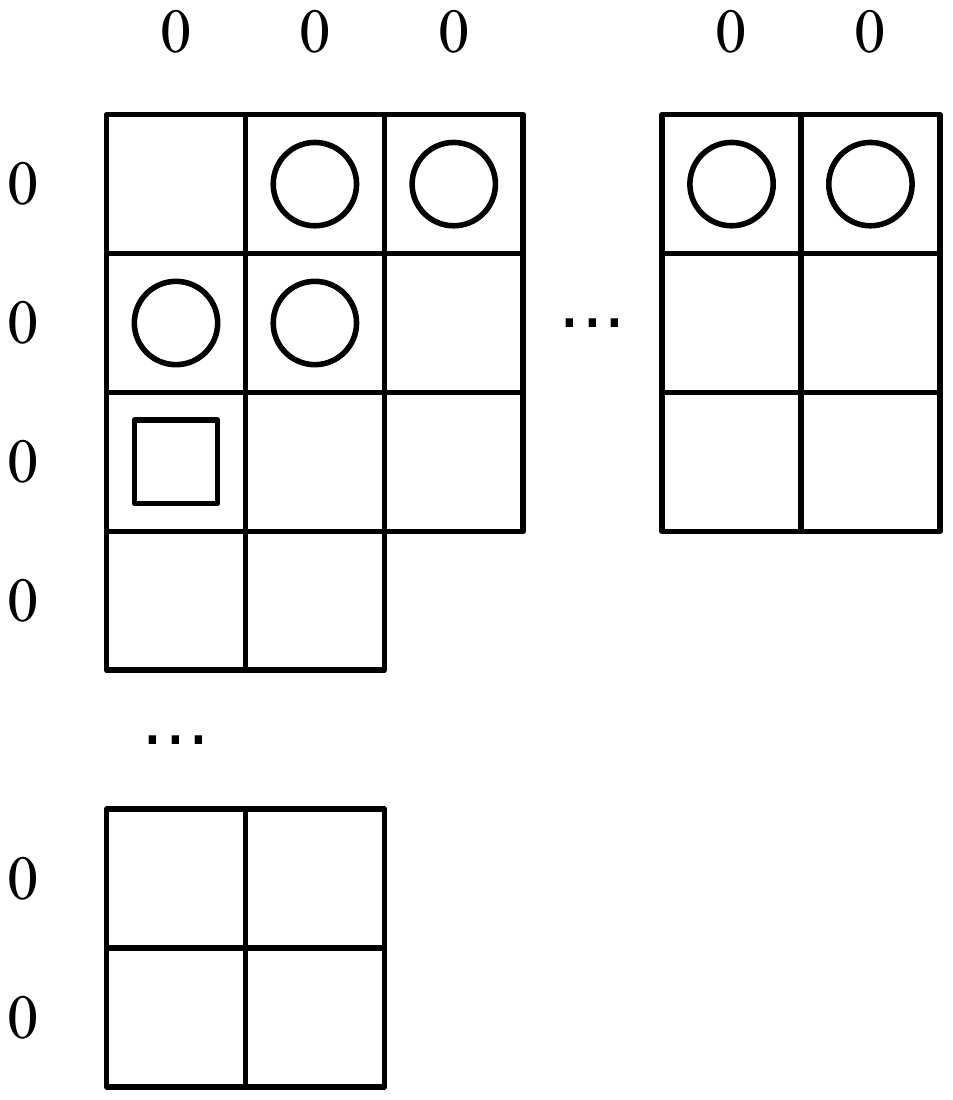}}
    ~~\raisebox{-.5\height}{\scalebox{2}{$\to$}}~~
    \raisebox{-.5\height}{\includegraphics[scale=.25]{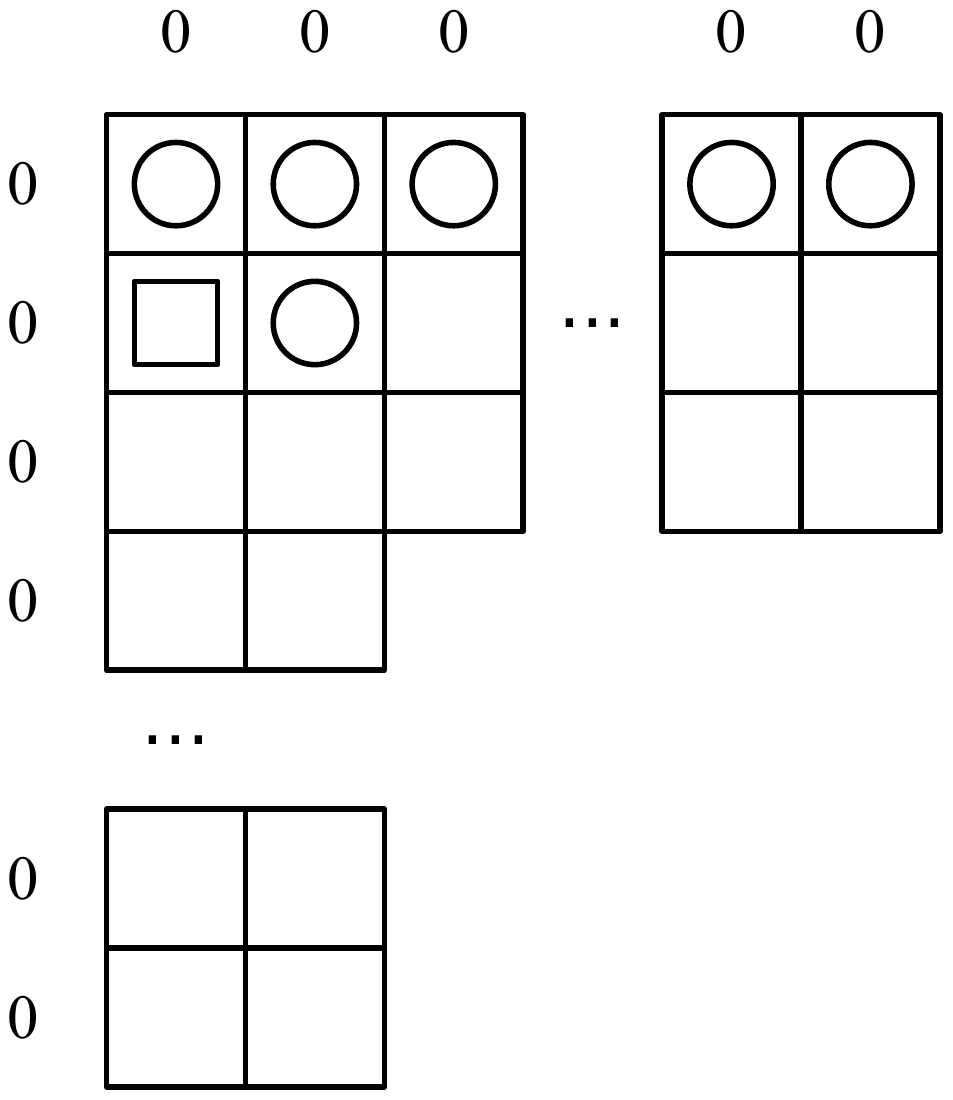}}
    ~~\raisebox{-.5\height}{\scalebox{2}{$\to$}}~~
    \raisebox{-.5\height}{\includegraphics[scale=.25]{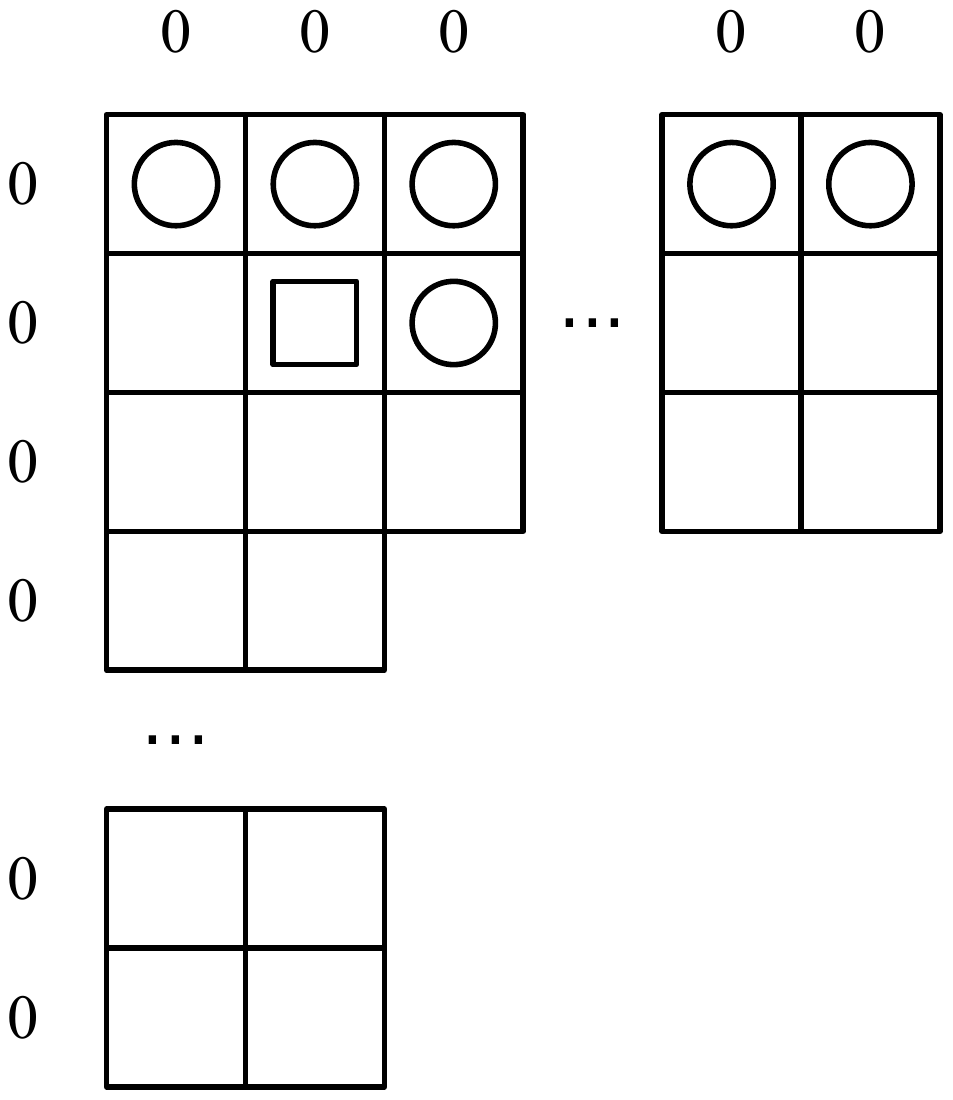}}
    ~~\raisebox{-.5\height}{\scalebox{2}{$\to$}}~~
    \raisebox{-.5\height}{\includegraphics[scale=.25]{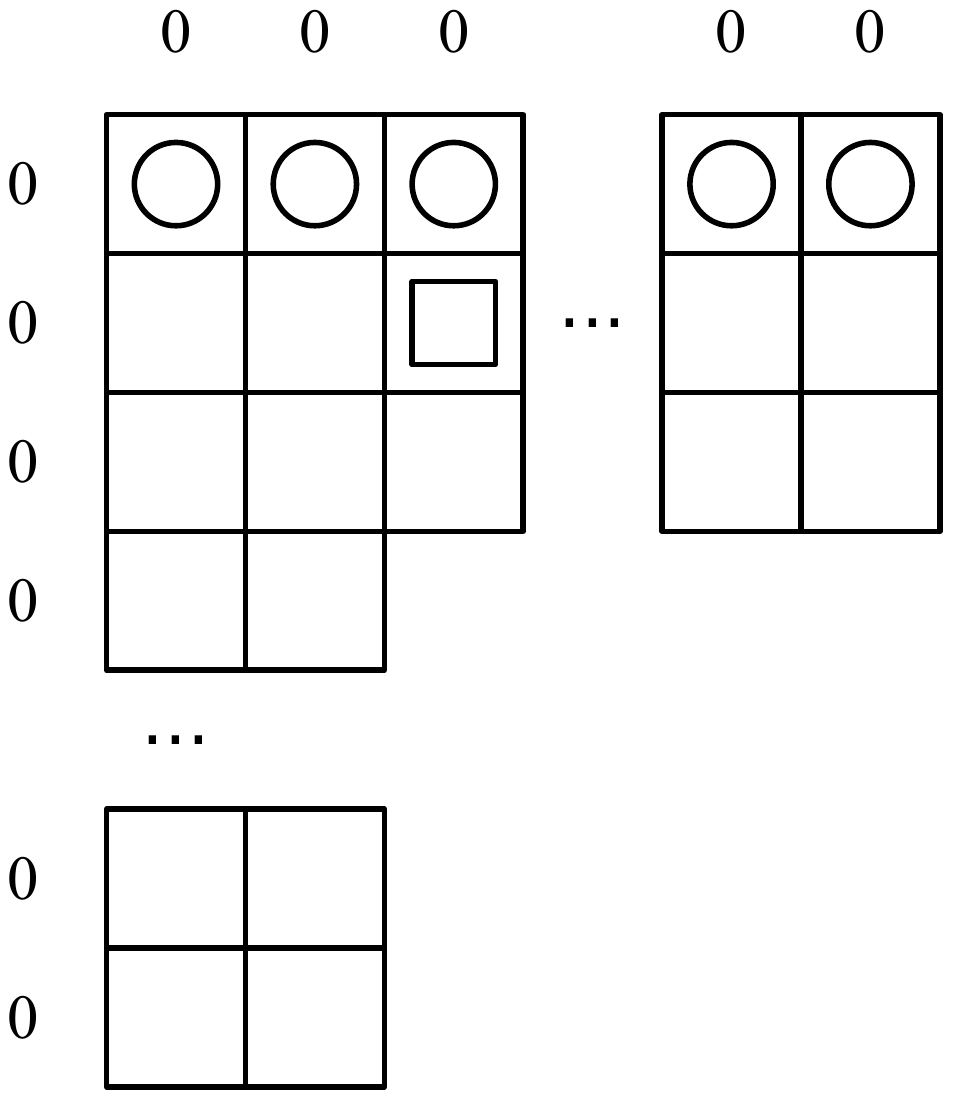}}
    ~~\raisebox{-.5\height}{\scalebox{2}{$\to$}}
    \raisebox{-.5\height}{$\cdots$}
    \raisebox{-.5\height}{\scalebox{2}{$\to$}}~~
    \raisebox{-.5\height}{\includegraphics[scale=.25]{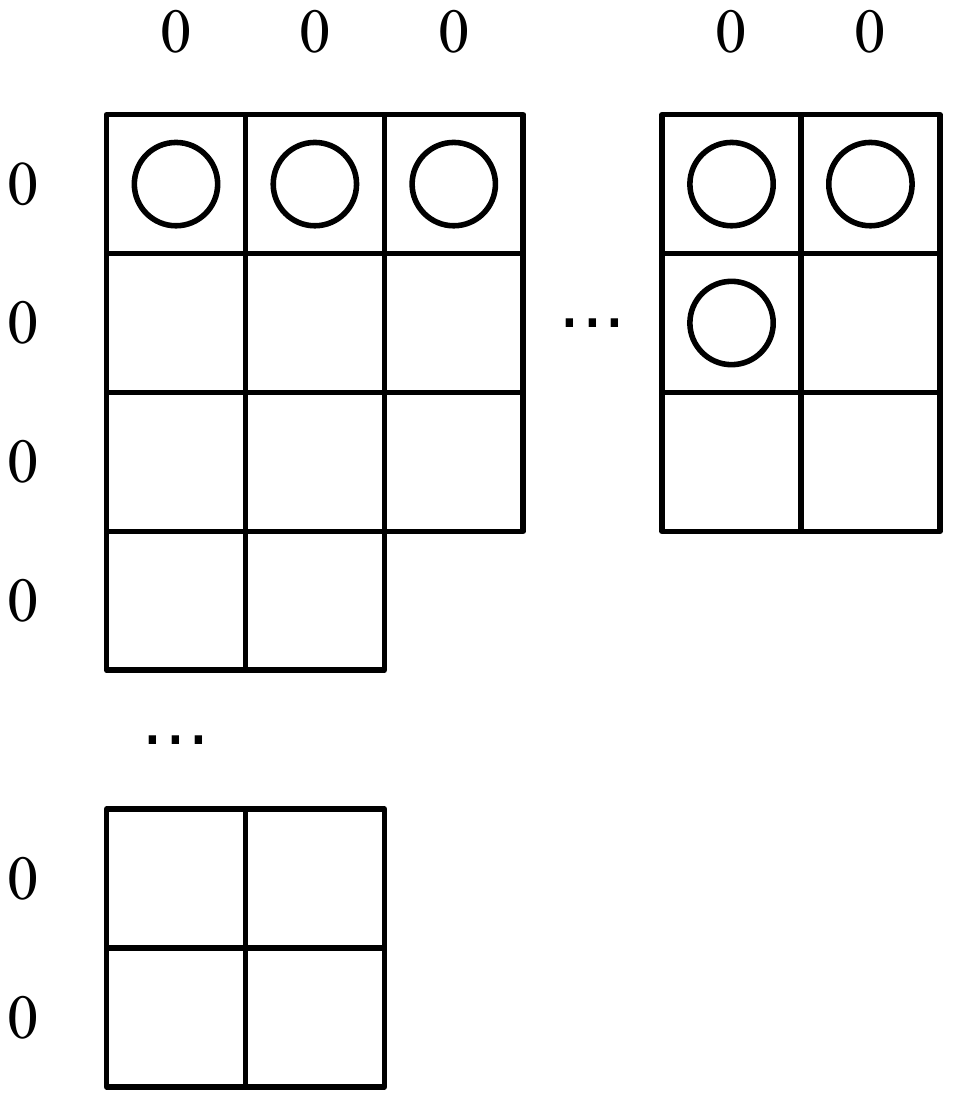}}
    ~~\raisebox{-.5\height}{\scalebox{2}{$\to$}}~~
    \raisebox{-.5\height}{\includegraphics[scale=.25]{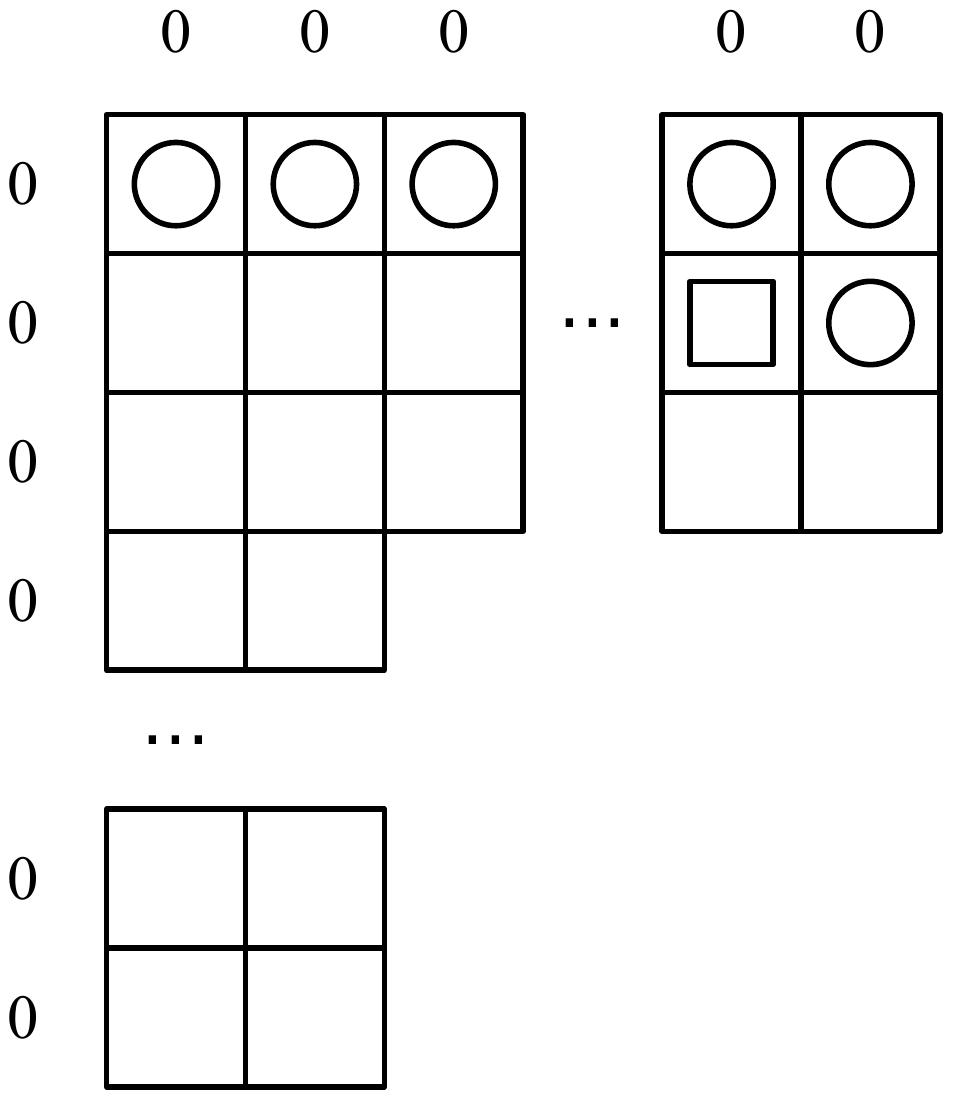}}
    ~~\raisebox{-.5\height}{\scalebox{2}{$\to$}}~~
    \raisebox{-.5\height}{\includegraphics[scale=.25]{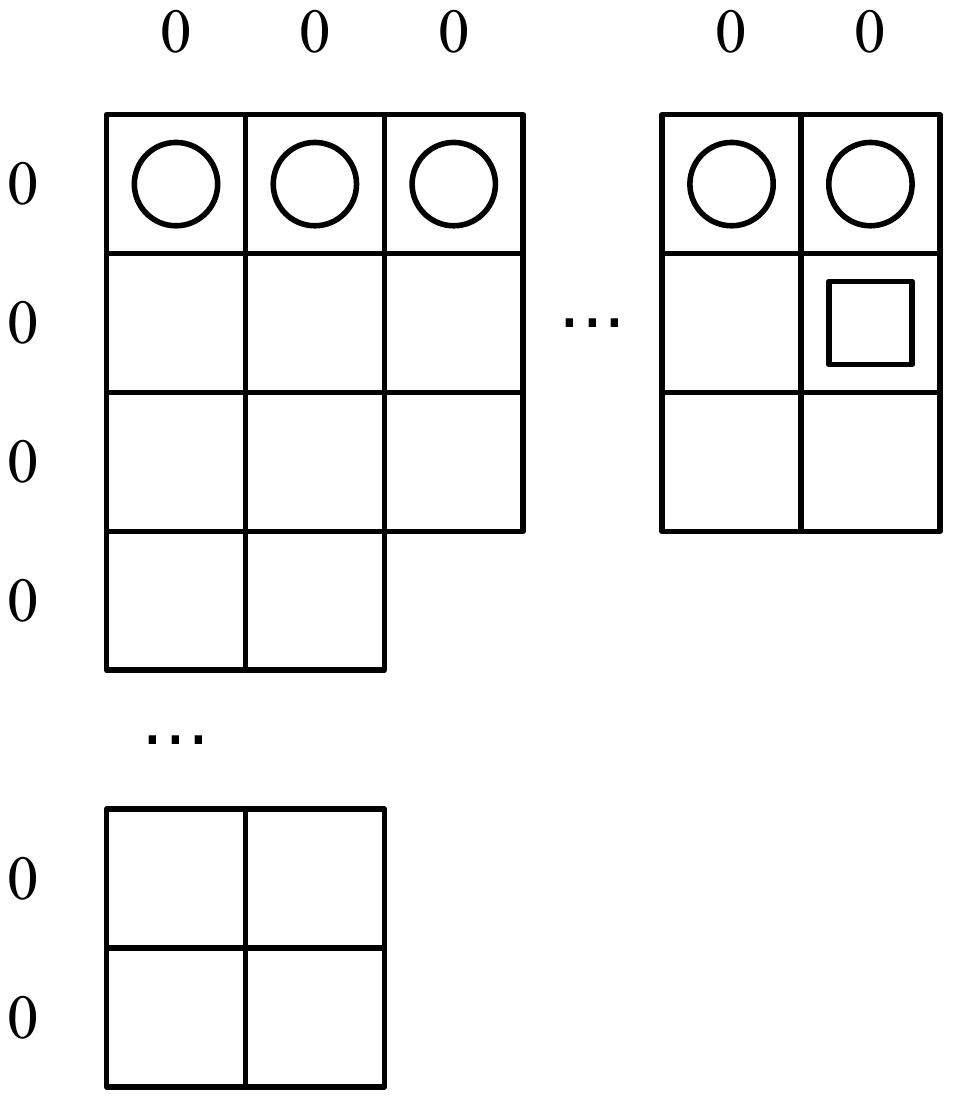}}
    ~~\raisebox{-.5\height}{\scalebox{2}{$\to$}}~~
    \raisebox{-.5\height}{\includegraphics[scale=.25]{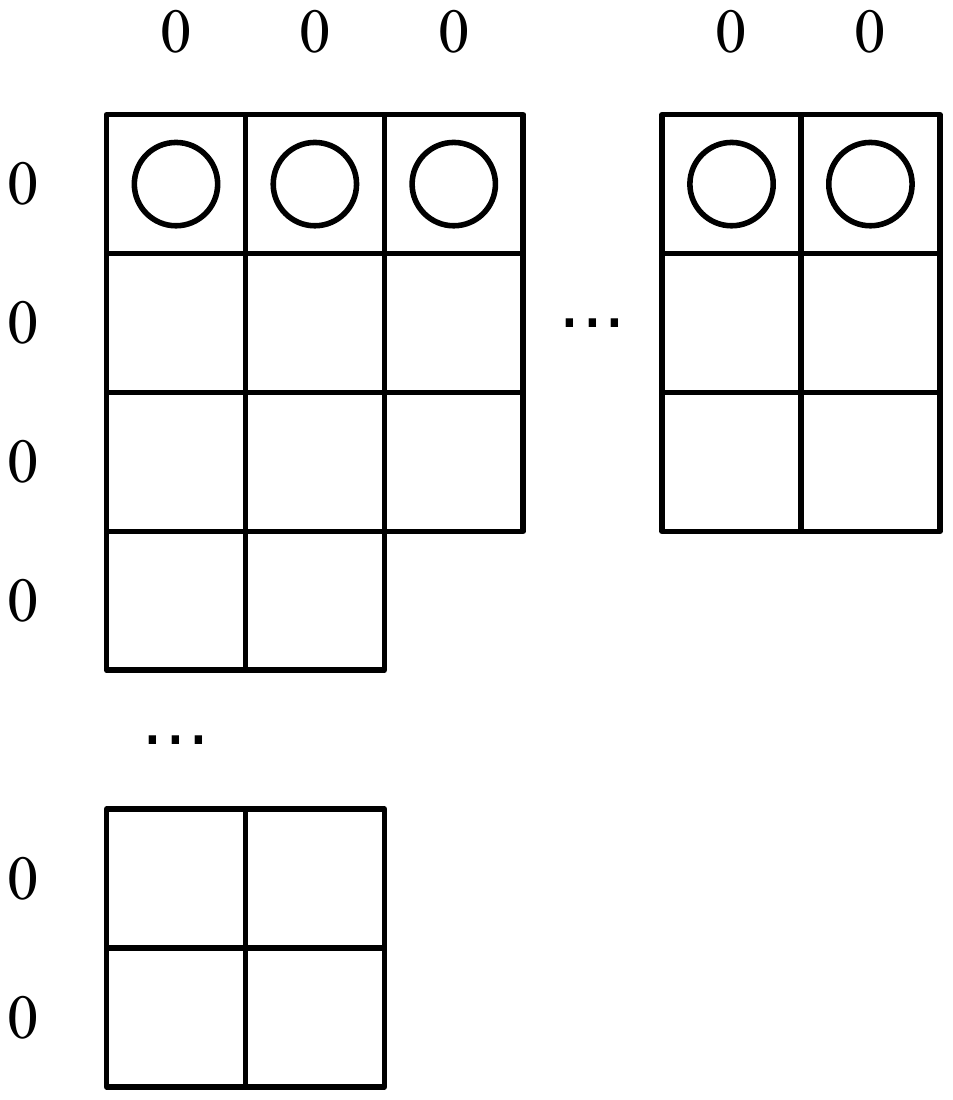}}
    \caption{The sequence of configurations of the bridge gadget as it is traversed by the white king.}
    \label{figure:bridge_gadget_1}
\end{figure}
}

\subsection{Clause gadget}
\abstractlater{\subsection{Clause gadget}}

The clause gadget, shown in Figure~\ref{figure:clause_gadget}, verifies that a column below the gadget contains at least one empty square. When instantiated in the reduction, the white king enters the gadget from the left in the top row, preceded by a white pawn. Figure~\ref{figure:clause_gadget_1} shows the resulting sequence of forced pushes. This push sequence includes a push down in the central column of the gadget; if there are no empty squares below the gadget in that column, the white king has no legal pushes and White loses. If there are more empty squares, White can continue to push down, but (when instantiated in the reduction) there are at most three total empty squares in that column, and once those squares are filled, White cannot push. Thus the white king must push right instead and leave the gadget by pushing a white pawn out to the right in the second-to-top row, as shown in Figure~\ref{figure:clause_gadget_2}.


\later{
\begin{figure}
    \centering
    \raisebox{-.5\height}{\includegraphics[scale=.25]{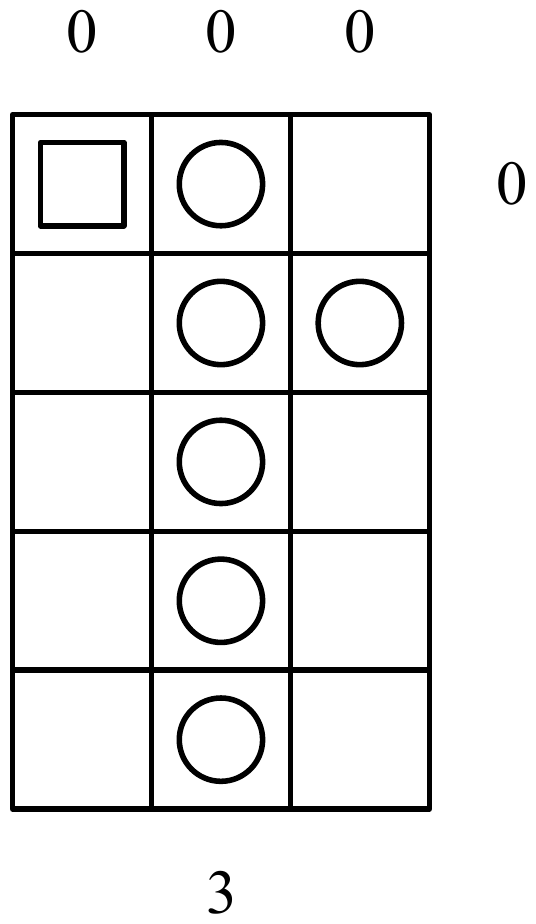}}
    ~~\raisebox{-.5\height}{\scalebox{2}{$\to$}}~~
    \raisebox{-.5\height}{\includegraphics[scale=.25]{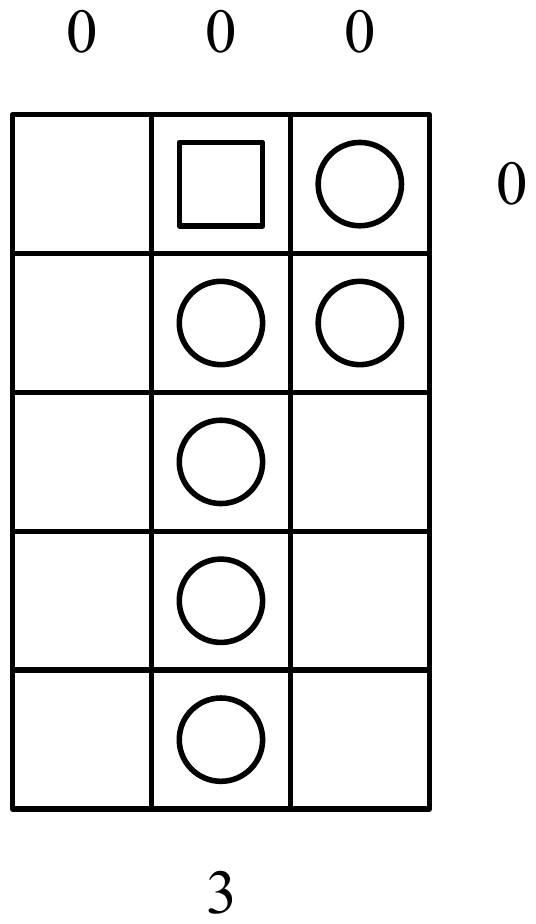}}
    ~~\raisebox{-.5\height}{\scalebox{2}{$\to$}}~~
    \raisebox{-.5\height}{\includegraphics[scale=.25]{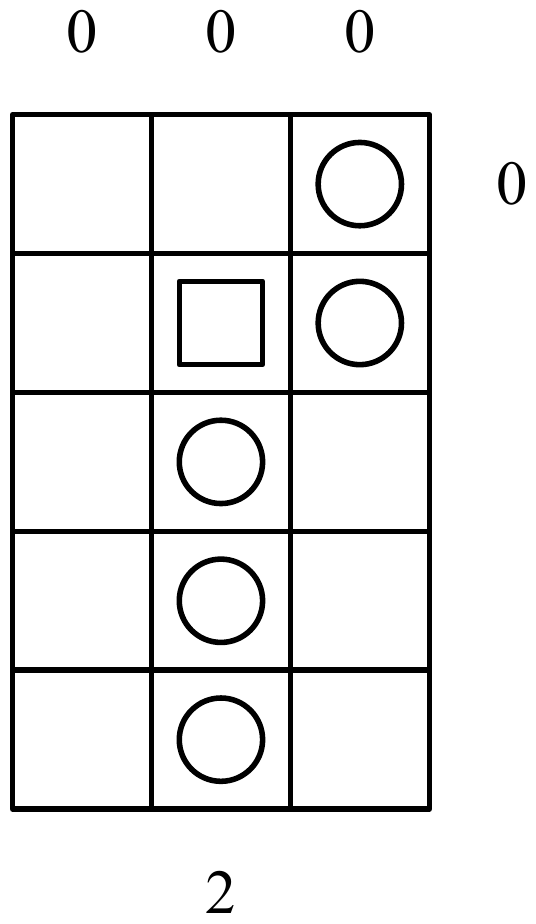}}
    \caption{The sequence of forced configurations of the clause gadget as it is traversed by the white king.}
    \label{figure:clause_gadget_1}
\end{figure}

\begin{figure}
    \centering
    \raisebox{-.5\height}{\includegraphics[scale=.25]{images/clause_3}}
    ~~\raisebox{-.5\height}{\scalebox{2}{$\to$}}~~
    \raisebox{-.5\height}{\includegraphics[scale=.25]{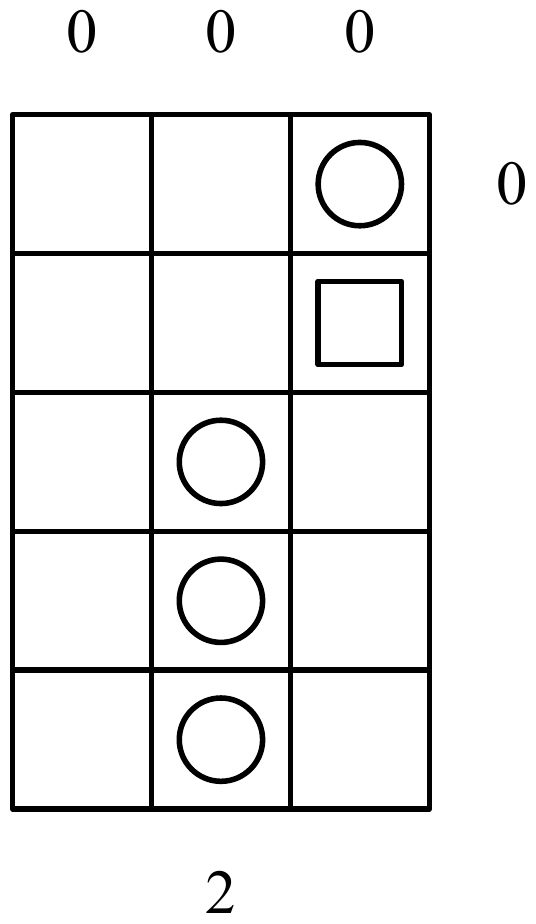}}
    ~~\raisebox{-.5\height}{\scalebox{2}{$\to$}}~~
    \raisebox{-.5\height}{\includegraphics[scale=.25]{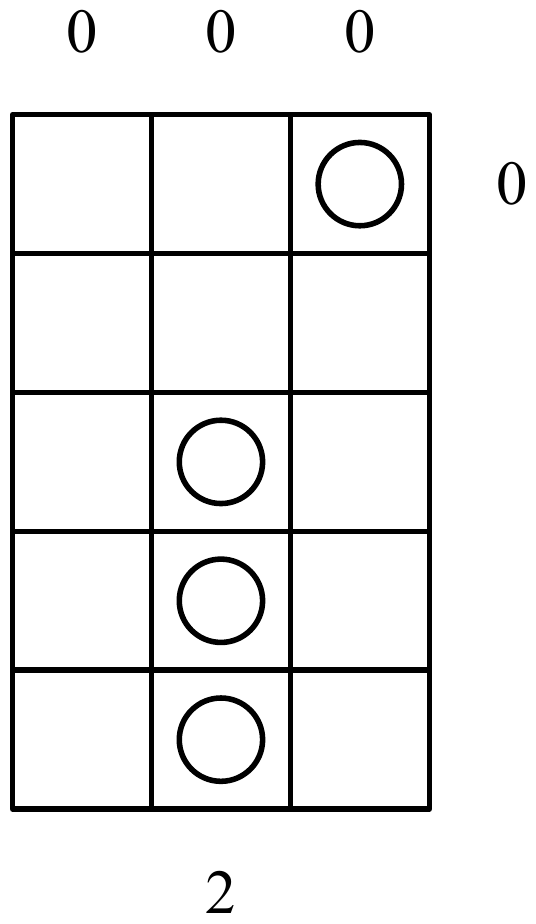}}
    \caption{Starting from the last position in Figure~\ref{figure:clause_gadget_1}, the push sequence by which the white king exits the clause gadget.}
    \label{figure:clause_gadget_2}
\end{figure}
}

\subsection{Reward gadget}
\label{sec:reward-gadget}

The reward gadget, shown in Figure~\ref{figure:reward_gadget}, allows White to win if the white king reaches the gadget. The black pawn in this gadget cannot move because it is surrounded. When instantiated in the reduction, the white king enters the gadget from the left in the top row, preceded by a white pawn. The resulting position and the following sequence of forced moves are shown in Figure~\ref{figure:reward_gadget_1}. From the final position in Figure~\ref{figure:reward_gadget_1}, White can win by moving a white pawn and the white king, then pushing upwards to push the black pawn off the board. (Recall that the move-wasting gadget no longer binds White once White can win in one turn; Black loses before Black can win using the move-wasting gadget.)


\later{
\begin{figure}
    \centering
    \raisebox{-.5\height}{\includegraphics[scale=.25]{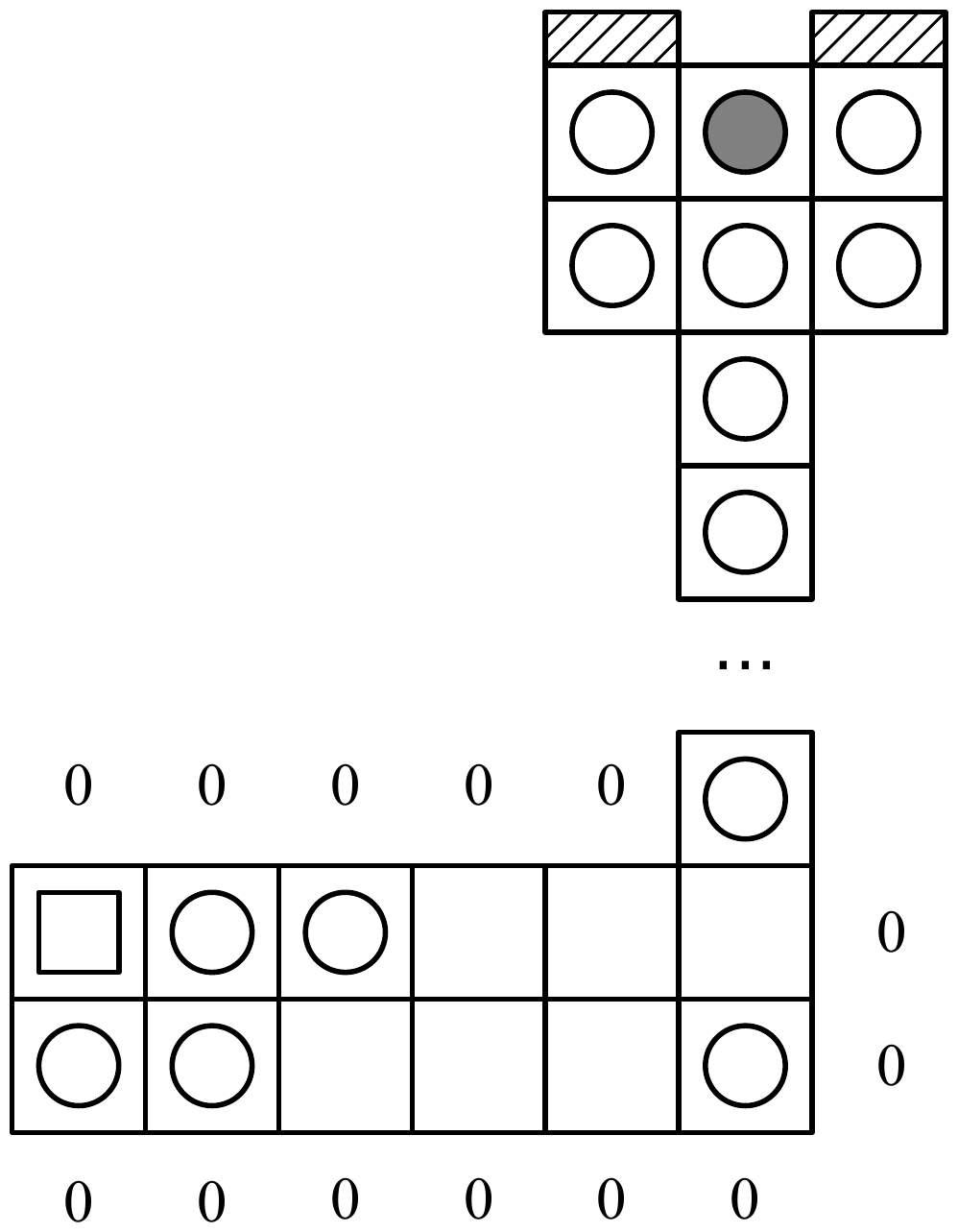}}
    ~~\raisebox{-.5\height}{\scalebox{2}{$\to$}}~~
    \raisebox{-.5\height}{\includegraphics[scale=.25]{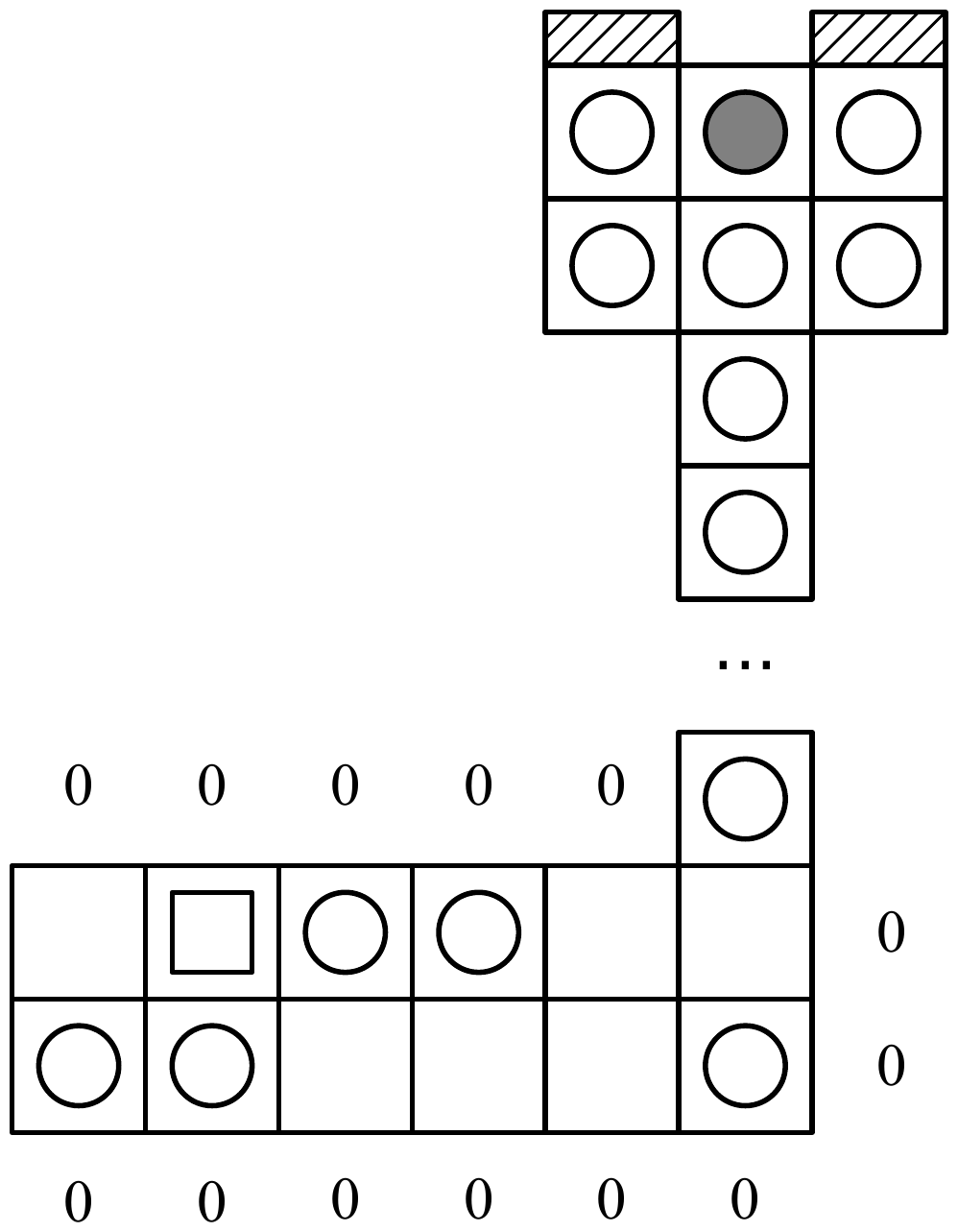}}
    ~~\raisebox{-.5\height}{\scalebox{2}{$\to$}}~~
    \raisebox{-.5\height}{\includegraphics[scale=.25]{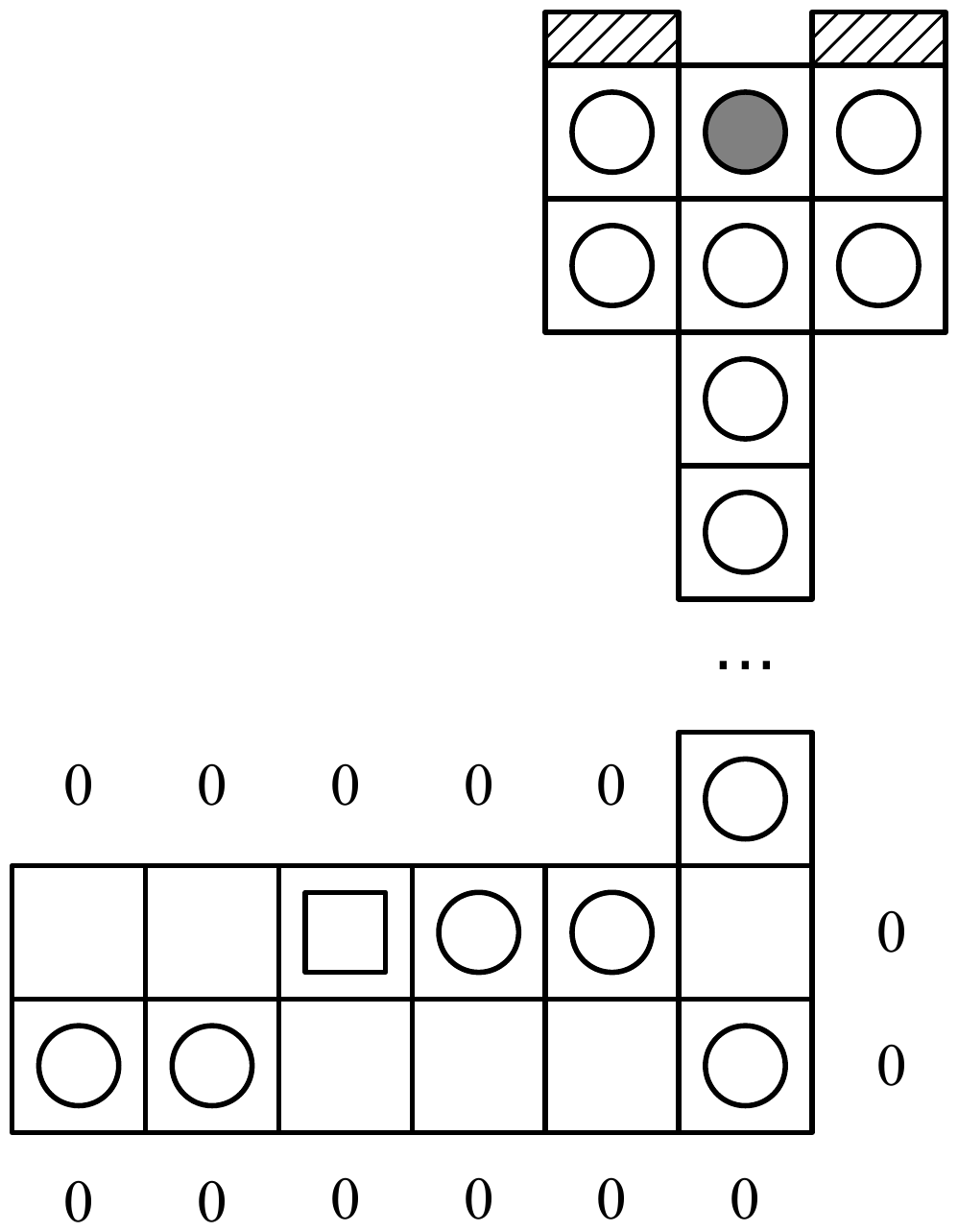}}
    \caption{The sequence of forced configurations of the reward gadget as it is traversed by the white king.}
    \label{figure:reward_gadget_1}
\end{figure}

\begin{figure}
    \centering
    \raisebox{-.5\height}{\includegraphics[scale=.25]{images/reward_3}}
    ~~\raisebox{-.5\height}{\scalebox{2}{$\to$}}~~
    \raisebox{-.5\height}{\includegraphics[scale=.25]{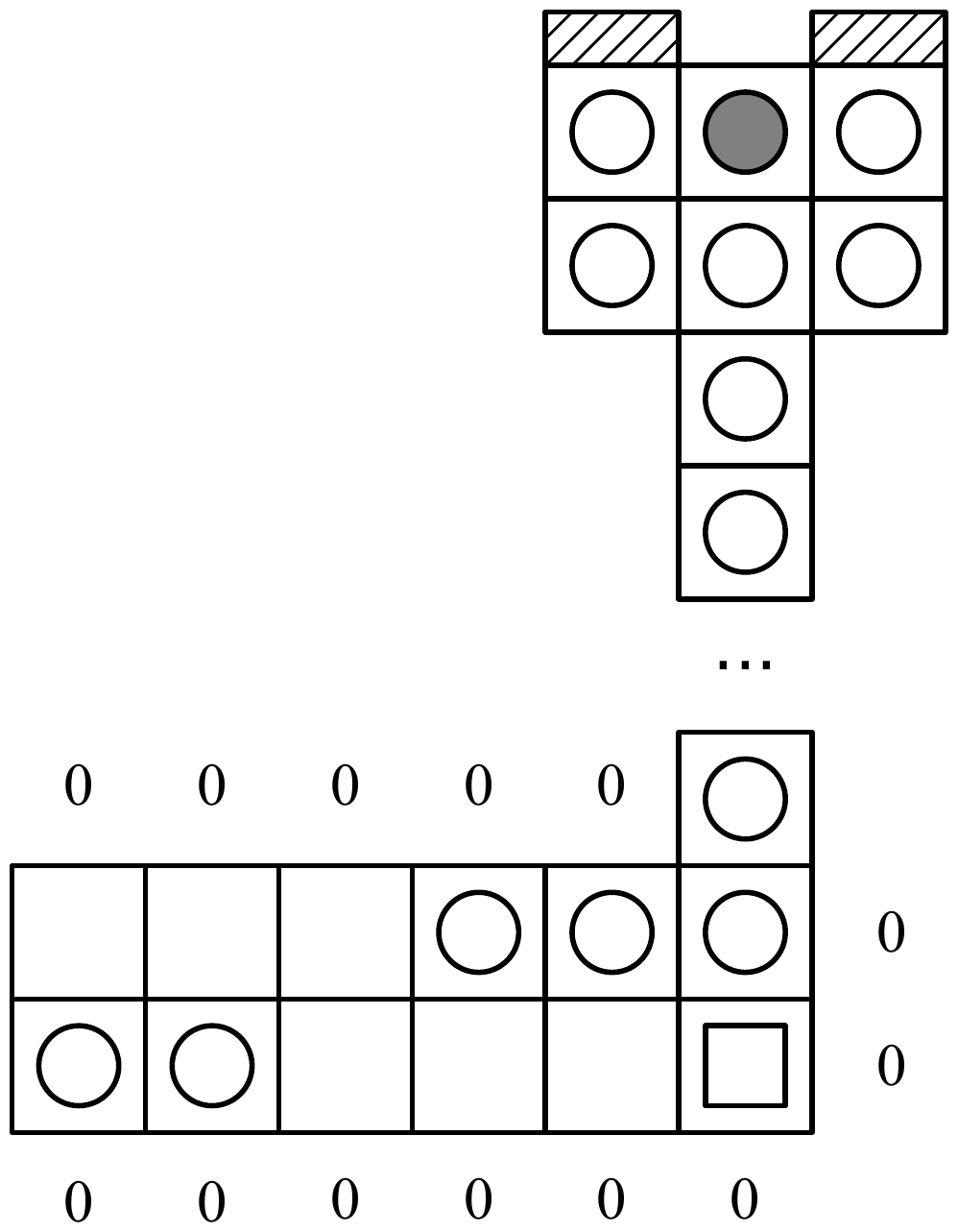}}
    ~~\raisebox{-.5\height}{\scalebox{2}{$\to$}}~~
    \raisebox{-.5\height}{\includegraphics[scale=.25]{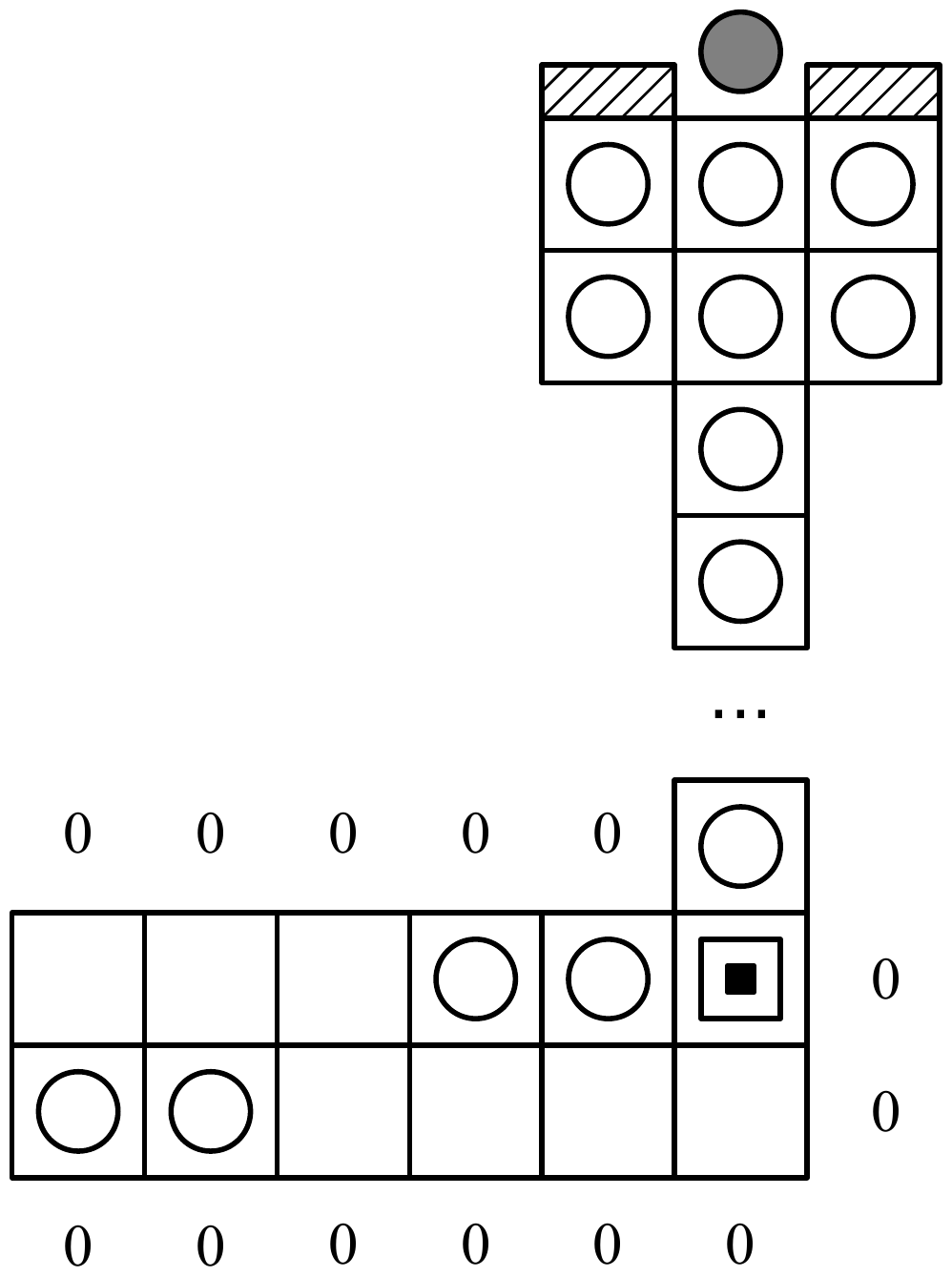}}
    \caption{White can win in a single turn from the final configuration of Figure~\ref{figure:reward_gadget_1}.}
    \label{figure:reward_gadget_2}
\end{figure}

\subsection{Layout}

Having described the gadgets, it remains to show how to instantiate them in a \pushfight{} game state for a given quantified 3-CNF formula. We first place gadgets with respect to each other, remembering which squares should be left empty, then define the board as the bounding box of the gadgets and fill any squares not recorded as empty with white pawns. The resulting board is mostly rectangular with side rails on all boundary edges, with two exceptions: one edge along the top of the rectangle lacks a side rail as part of the reward gadget, and the board is extended in the bottom-right to accomodate the move-wasting gadget along the bottom of the board.

\begin{figure}
    \centering
    \includegraphics[scale=.5]{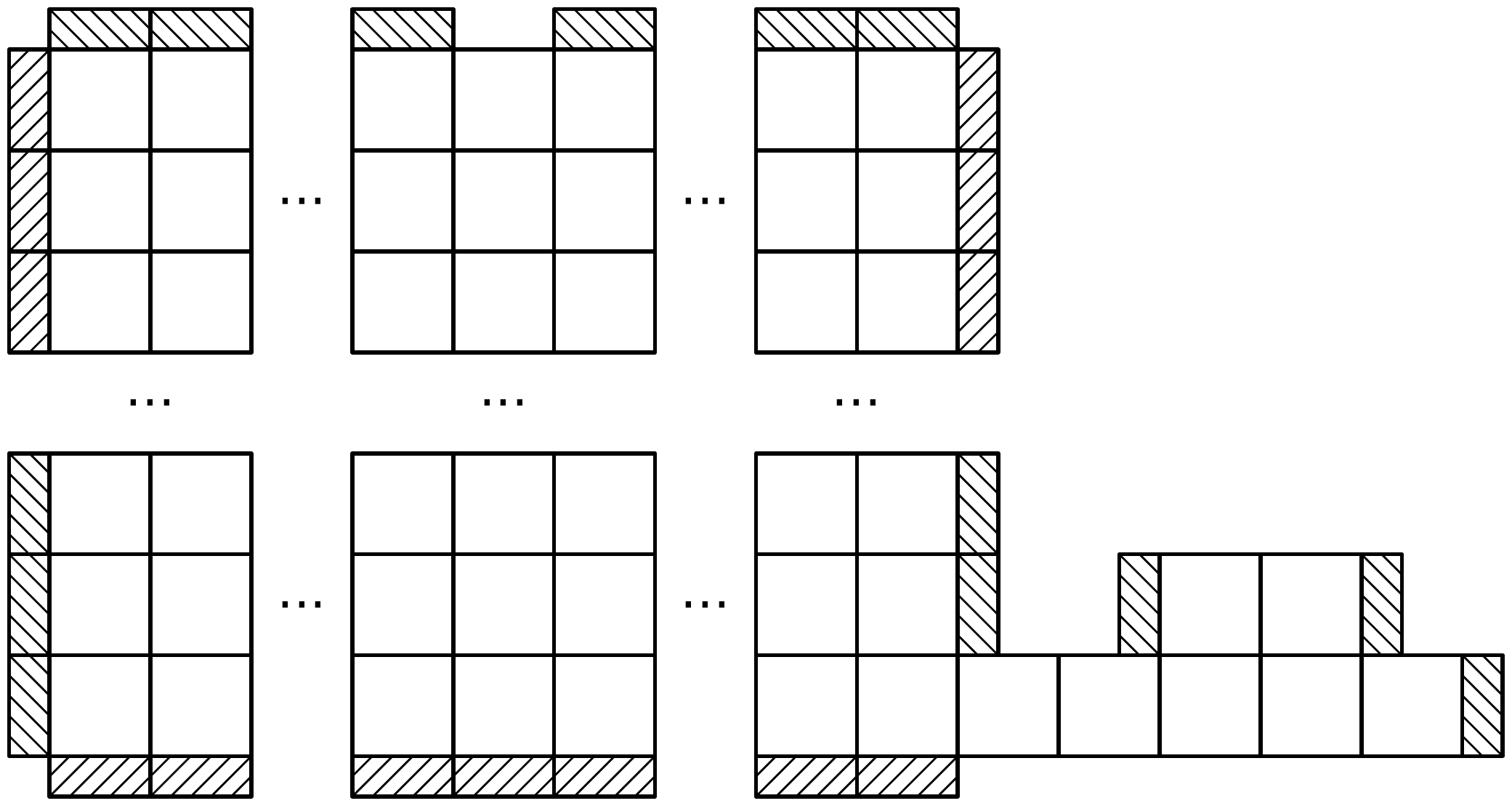}
    \caption{The shape of the \pushfight{} board produced by the reduction.}
    \label{figure:reduction_shape}
\end{figure}

We begin by building the \emph{variable gadget I} block containing the existential variable gadgets and the left portion of the universal variable gadgets. Gadgets are stacked from bottom to top in the order of the quantifiers in the input formula (using the gadget corresponding to the quantifier), with the leftmost column of each gadget aligned with the second-to-right column of the previous gadget. (Recall that the width of the variable gadgets is defined based on $p$, one more than the maximum number of occurrences of a literal in the input formula.)  This alignment allows (and requires) the white king to traverse the gadgets in sequence as specified by Lemma~\ref{thm:core}. Figure~\ref{figure:variables_plan} shows the relative layout of these variable gadgets. \xxx{These two layout figures should be side-by-side/subfigures and roughly comparable in scale.}

\begin{figure}
    \centering
    \includegraphics[scale=.5]{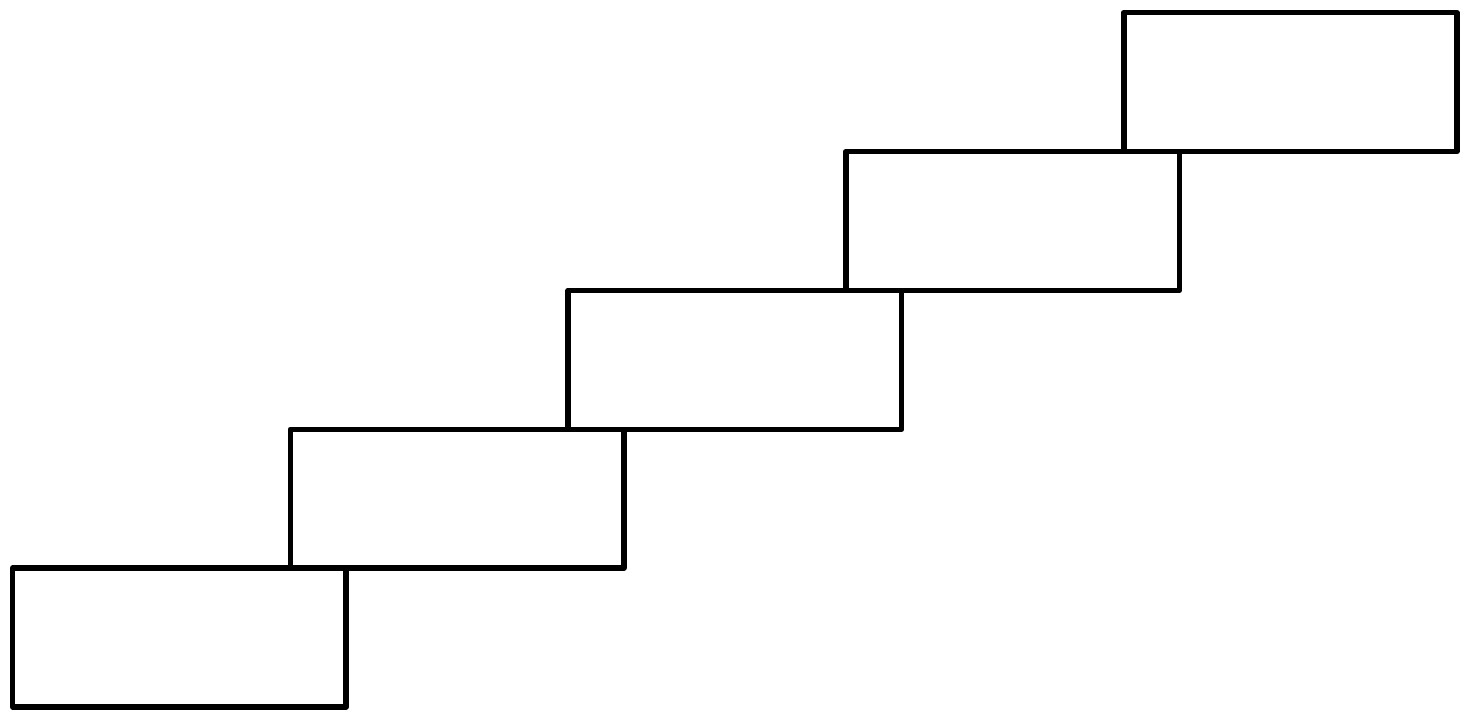}
    \caption{The layout of variable gadgets in the \emph{variable gadget I} block.}
    \label{figure:variables_plan}
\end{figure}

We place the white king one square below the first variable gadget aligned with its leftmost column, and place a white pawn one square above the white king. The white king will push upwards into the first gadget on White's first turn. (If the king was instead placed directly in the variable gadget, if the first variable is universally quantified, Black would not have a move with which to choose the value of the variable before White commits it.)

We then build the \emph{variable gadget II} block by placing the right regions of the universal variable gadgets to the right of the corresponding left regions in a single column (further right than any part of the variable gadget I section).

Next we place one clause gadget for each clause in the input formula. Each clause gadget is directly to the right of and one square lower than the previous clause gadget. The entire clause gadget block is further right of and above the \emph{variable gadget II} block. Figure~\ref{figure:clauses_plan} shows the relative layout of the clause gadgets. Then we place a bridge gadget such that the entrance of the bridge gadget aligns with the exit of the last variable gadget and the exit of the bridge gadget aligns with the entrance of the first clause. \xxx{or just ``Then we place a bridge gadget connecting the last variable gadget to the first clause gadget.''}

\begin{figure}
    \centering
    \includegraphics[scale=.5]{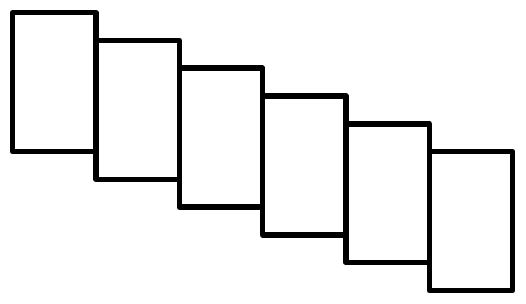}
    \caption{The layout of clause gadgets in the clause gadget block.}
    \label{figure:clauses_plan}
\end{figure}

We place the reward gadget so that its entrance aligns with the exit of the last clause gadget.

We leave empty squares in the connection block to encode the literals in each clause in the input formula. When traversing each variable gadget, the white king pushes pawns to the right in one of two rows. The lower (upper) row corresponds to setting the variable to true (false), or equivalently, preventing negative (positive) literals from satisfying clauses. Associate each row with the literal it prevents from satisfying clauses. Each clause gadget enforces that at least one empty square remains below its middle column, corresponding to at least one of its literals not having been ruled out by the truth assignment. To realize this relation, for each literal in a clause, we leave an empty square at the intersection of the column checked by the clause gadget and the row associated with that literal. All other squares in the connection block are filled with white pawns (as are all squares in the board whose contents are not otherwise specified).

The variable gadgets require each row associated with a literal to contain exactly $p-1$, $p$ or $p+1$ empty squares (depending on the type of gadget and whether the row is the upper or lower row). This is at least the number of occurrences of that literal (by the definition of $p$), but it may be greater. We place any remaining empty squares in each row in columns further right than the reward gadget, forming the overflow block.

The boundary of the board is the bounding box of all the gadgets placed thus far with a move-wasting gadget appended to the bottom of the board. The left column of the move-wasting gadget is aligned with the leftmost column of the first (leftmost) variable gadget and the sixth-from-right column (the rightmost column having height 3) is aligned with the rightmost column of the overflow block. We then fill all squares not part of a gadget nor recorded as empty with white pawns and place side rails on all boundary edges except as described in the move-wasting and reward gadgets. The anchor is on the black king as part of the initial state of the move-wasting gadget.

\subsection{Analysis}

Our analysis of gadget behavior in the preceding sections constrains the white king's pushes under the assumption that there are a specific number of empty spaces (often 0) in a particular row or column on a side of the gadget. We have already discharged the assumptions regarding the rows associated with literals by our layout of the connection and overflow blocks. For every other gadget except the variable gadgets, none of the constrained rows or columns intersects with another gadget, so the constraints on the edges are implied by the dense sea of white pawns outside the gadgets. For the variable gadgets, we assumed that pushing down in the second-to-left column of a variable gadget is not possible, but that column contains the previous variable gadget's rightmost column. We discharge this assumption by noting that in the final state of each variable gadget (after the white king has left the gadget), the rightmost column of that gadget is filled with white pawns, so pushing down in that column is indeed not possible.

Thus the white king must traverse the variable gadgets, setting the value of each variable, then traverse through the bridge gadget to the clause gadgets, where at least one empty space must remain in each checked column for the king to reach the reward gadget. If the choices made while traversing the variable gadgets results in filling all of the empty spaces in a checked column (i.e., the clause is false under the corresponding truth assignment), then White can only push by using a move outside the move-wasting gadget and Black wins on the next turn. If the white king successfully traverses every clause gadget (i.e., every clause is true under the truth assignment), then White wins when the white king pushes the black pawn off the board in the reward gadget. Thus White has a winning strategy for this \pushfight{} game state if and only if the input quantified 3-CNF formula is true.

}

%

\section*{Acknowledgments}

This work grew out of an open problem session originally started during an
MIT class on Algorithmic Lower Bounds: Fun with Hardness Proofs (6.890)
in Fall 2014.

\bibliography{pushfight}
\bibliographystyle{plain}

\clearpage
\appendix
\magicappendix

\end{document}